\newcommand{\delimit}[3]{\newcommand{#1}[1]{\left#2##1\right#3}}
\DeclareMathOperator*{\argmax}{argmax}
\renewcommand{\R}{\mathbb R}
\newcommand{\ie}{{\em i.e.}\xspace}
\newcommand{\eg}{{\em e.g.}\xspace}
\newcommand{\wlo}{WLOG\xspace}
\let\mc\mathcal
\let\eps\varepsilon
\renewcommand\grad\nabla
\let\cite\citep
\let\Root\varnothing
\newcommand{\nature}{\textup{C}}
\newcommand{\gcomm}{\textup{\textsc{Comm}}}
\newcommand{\gcommnone}{\gcomm_{\textsf{\textup{none}}}}
\newcommand{\gcommpriv}{\gcomm_{\textsf{\textup{priv}}}}
\newcommand{\gcommpub}{\gcomm_{\textsf{\textup{pub}}}}
\newcommand{\gsplit}{\textup{\textsc{CSplit}}\xspace}
\newcommand{\gsymsplit}{\textup{\textsc{USplit}}\xspace}
\newcommand{\publicrole}{\textup{\textsc{PublicTeam}}}
\newcommand{\matchingpennies}[1]{\textsf{\textup{MP}}(#1)}
\newcommand{\hrvalue}{\textsf{\textup{CVal}}}
\newcommand{\tmevalue}{\textsf{\textup{TMEVal}}}
\newcommand{\hrvaluepriv}{\hrvalue_{\textsf{\textup{priv}}}}
\newcommand{\symhrvalue}{\textsf{\textup{UVal}}}
\newcommand{\symhrvaluepriv}{\symhrvalue_{\textsf{\textup{priv}}}}
\theoremstyle{plain}
\newtheorem{theorem}{Theorem}
\numberwithin{theorem}{section}
\newtheorem{lemma}[theorem]{Lemma}
\newtheorem{proposition}[theorem]{Proposition}
\newtheorem{corollary}[theorem]{Corollary}
\theoremstyle{definition}
\newtheorem{definition}[theorem]{Definition}
\newcommand{\pmax}{\textcolor{p1color}{\textbf{\textsf{\textup{\smaller MAX}}}}\xspace}
\newcommand{\pmin}{\textcolor{p2color}{\textbf{\textsf{\textup{\smaller MIN}}}}\xspace}
\definecolor{p1color}{RGB}{31,119,180}
\definecolor{p2color}{RGB}{255,127,14}
\definecolor{s2color}{RGB}{44,160,44}
\definecolor{s1color}{RGB}{214,39,40}
\definecolor{mygreen}{rgb}{0.0, 0.5, 0.0}
\let\origthanks\thanks
\renewcommand\thanks[1]{\begingroup\let\rlap\relax\origthanks{#1}\endgroup}
\def\thm@space@setup{%
  \thm@preskip=\parskip \thm@postskip=0pt
}
\title{Hidden-Role Games: Equilibrium Concepts and Computation}
\author{
Luca Carminati\thanks{Equal contribution; author order randomized}\thanks{Work done while at Carnegie Mellon University} \\
Politecnico di Milano\\
\texttt{luca.carminati@polimi.it}\\
\and
Brian Hu Zhang$^*$\\
Carnegie Mellon University\\
\texttt{bhzhang@cs.cmu.edu} \\
\and 
Gabriele Farina \\
MIT\\
\texttt{gfarina@mit.edu} \\
\and 
Nicola Gatti\\
Politecnico di Milano\\
\texttt{nicola.gatti@polimi.it}\\
\and
Tuomas Sandholm \\
Carnegie Mellon University\\
Strategic Machine, Inc. \\
Strategy Robot, Inc. \\
Optimized Markets, Inc. \\
\texttt{sandholm@cs.cmu.edu} \\
}
\date{}
\begin{document}
\maketitle
\begin{abstract}
In this paper, we study the class of games known as \emph{hidden-role games} in which players are assigned {\em privately} to teams and are faced with the challenge of recognizing and cooperating with teammates.
This model includes both popular recreational games such as the {\em Mafia/Werewolf} family and {\em The Resistance (Avalon)} and many real-world settings, such as distributed systems where nodes need to work together to accomplish a goal in the face of possible corruptions.
There has been little to no formal mathematical grounding of such settings in the literature, and it was previously not even clear what the right solution concepts (notions of equilibria) should be.
A suitable notion of equilibrium should take into account the communication channels available to the players (\eg, can they communicate? Can they communicate in private?).  Defining such suitable notions turns out to be a nontrivial task with several surprising consequences. In this paper, we provide the first rigorous definition of equilibrium for hidden-role games, which overcomes serious limitations of other solution concepts not designed for hidden-role games. We then show that in certain cases, including the above recreational games, optimal equilibria can be computed efficiently. 
In most other cases, we show that computing an optimal equilibrium is at least \NP-hard or \coNP-hard. Lastly, we experimentally validate our approach by computing exact equilibria for complete 5- and 6-player \textit{Avalon} instances whose size in terms of number of information sets is larger than $10^{56}$.

\end{abstract}

\newpage
\tableofcontents
\newpage
\section{Introduction}\label{sec:intro}
Consider a multiagent system with communication where the majority of agents share incentives, but there are also hidden defectors who seek to disrupt their progress. 

This paper adopts the lens of game theory to characterize and solve a class of games called \emph{hidden-role games}\footnote{These games are often commonly called {\em social deduction games}.}. Hidden-role games model multi-agent systems in which a team of ``good'' agents work together to achieve some desired goal, but a subset of adversaries hidden among the agents seeks to sabotage the team. Customarily (and in our paper), the ``good'' agents make up a majority of the players, but they will not know who the adversaries are. On the other hand, the adversaries know each other.

Hidden-role games offer a framework for developing optimal strategies in systems and applications that face deception. They have a strong emphasis on communication: players need to communicate in order to establish trust, coordinate actions, exchange information, and distinguish teammates from adversaries.

Hidden-role games can be used to model a wide range of recreational and real-world applications.
Notable recreational examples include the popular tabletop games \emph{Mafia} (also known as \emph{Werewolf}) and \emph{The Resistance}, of which \emph{Avalon} is the best-known variant. 
As an example, consider the game \textit{Mafia}. The players are split in an uninformed majority called \emph{villagers} and an informed minority called \emph{mafiosi}. The game proceeds in two alternating phases, {\em night} and {\em day}. In the night phase, the mafiosi privately communicate and eliminate one of the villagers. In the day phase, players vote to eliminate a suspect through majority voting. 
The game ends when one of the teams is completely eliminated.

We now provide several non-recreational examples of hidden-role games. In many cybersecurity applications~\cite{GarciaMorchon07:Cooperative, GarciaMorchon13:Cooperative, Tripathi22:Integrated}, an adversary compromises and controls some nodes of a distributed system whose functioning depends on cooperation and information sharing among the nodes. The system does not know which nodes have been compromised, and yet it must operate in the presence of the compromised nodes.

Another instance of problems that can be modeled as hidden-role games arises in {\em AI alignment}, \ie, the study of techniques to steer AI systems towards humans' intended goals, preferences, or ethical principles~\cite{Ziegler19:FineTuning,Ji23:AIAlignment,Hubinger24:Sleeper}. 
In this setting, there is risk that a misaligned AI agent may attempt to deceive a human user into trusting its suggestions~\cite{Park23:AIDeception,Scheuer23:Large}.
AI debate \cite{Irving18:AISafety} aims at steering AI agents by using an adversarial training procedure in which a judge has to decide which is the more trustful between two hidden agents, one of which is a deceptor trained to fool the judge.
\citet{Miller21:Chess} proposes an experimental setting consisting of a chess game in which one side is controlled by a player and two advisors, which falls directly under our framework. The advisors pick action suggestions for the player to choose from, but one of the two advisors has the objective of making the team lose.

Hidden-role games also include general scenarios where agents receive inputs from other agents which may be compromised. For example, in {\em federated learning} (a popular category of distributed learning methods), a central server aggregates machine learning models trained by multiple distributed local agents. If some of these agents are compromised, they may send doctored input with the goal of disrupting the training process~\cite{Mothukuri21:Survey}.

Our paper aims to characterize optimal behavior in these settings, and analyze its computability.

\paragraph{Related work.} To the best of our knowledge, there have been no previous works on general hidden-role games. On the other hand, there has been a limited amount of prior work on solving specific hidden-role games.
\citet{Braverman06:Mafia} propose an optimal strategy for \emph{Mafia}, and analyze the win probability when varying the number of players with different roles.
Similarly, \citet{Christiano18:Solving} proposes a theoretical analysis for \emph{Avalon}, investigating the possibility of \emph{whispering}, \ie any two players being able to communicate without being discovered. 
Both of those papers describe game-specific strategies that can be adopted by players to guarantee a specific utility to the teams. %
In contrast, we provide, to our knowledge, the first rigorous definition of a reasonable solution concept for hidden-role games, an  algorithm to find such equilibria, and an experimental evaluation with a wide range of parameterized instances.

Deep reinforcement learning techniques have also been applied to various hidden-role games~\cite{Aitchison21:Learning, Kopparapu22:Hidden, Serrino19:Finding}, but with no theoretical guarantees and usually with no communication allowed between players. 
A more recent stream of works focused on investigating the deceptive capabilities of large language models (LLMs) by having them play a hidden-role game~\cite{Xu23:Language, OGara23:Hoodwinked}. The agents, being LLM-based, communicate using plain human language. However, as before, these are not grounded in any theoretical framework, and indeed we will illustrate that optimal strategies in hidden-role games are likely to involve communication that does not bear resemblance to natural language, such as the execution of cryptographic protocols.

\subsection{Main Modeling Contributions}\label{sec:model-contrib}

We first here give an informal, high-level description of our game model. We also introduce our main solution concept of interest, called {\em hidden-role equilibrium}, and discuss the challenges it addresses. We will define these concepts in more formality beginning in \Cref{sec:eq-concepts}.

We define a (finite) {\em hidden-role game} as an $n$-player finite extensive-form game $\Gamma$ in which the players are partitioned at the start of the game into two teams. Members of the same team share the same utility function, and the game is zero-sum, \ie any gain for one team means a loss for the other. We thus identify the teams as \pmax and \pmin, since teams share the same utility function, but have opposite objectives.
At the start of the game, players are partitioned at random into two teams. %
A crucial assumption is that one of the two teams is \emph{informed}, \ie all the members of that particular team know the team assignment of all the players, while this is not true for all players belonging to the other team. Without loss of generality, we use \pmax to refer to the uninformed team, and \pmin to refer to the informed one.\footnote{For example, in {\em Mafia}, the villagers are \pmax while mafiosi are \pmin.}

To allow our model to cover {\em communication} among players, we formally define the {\em communication extensions} of a game $\Gamma$. The communication extensions are games like $\Gamma$ except that actions allowing messages to be sent between players are explicitly encoded in the game. In the {\em public communication extension}, players are able to publicly broadcast messages. In the {\em private communication extension}, in addition to the public broadcast channel, the players have pairwise private communication channels.\footnote{If players are assumed to be computationally bounded, pairwise private channels can be created from the public broadcast channels through public-key cryptography. However, throughout this paper, for the sake of conceptual cleanliness, we will not assume that players are computationally bounded, and therefore we will distinguish the public-communication case from the private-communication case.} In all cases, communication channels are {\em synchronous} and {\em authenticated}: messages sent on one timestep are received at the next timestep, and are tagged with their sender.
Communication presents the main challenge of hidden role games: \pmax-players wish to share information with teammates, but {\em not} with \pmin-players. 

In defining communication extensions, we must bound the {\em length} of the communication, that is, how many rounds of communication occur in between every move of the game, and how many distinct messages can be sent on each round. To do this, we fix a finite message space\footnote{Note that, if the message space is of size $M$, a message can be sent in $O(\log M)$ bits.} of size $M$ and length of communication $R$, and in our definition of equilibrium we will take a supremum over $M$ and $R$. This will allow us to consider arbitrarily complex message spaces (\ie, $M$ and $R$ arbitrarily large) while still only analyzing finite games: for any {\em fixed} $M$ and $R$, the resulting game is a finite hidden-role game. We will show that our positive results (upper bounds) only require $\log M = R = \polylog(|H|, 1/\eps)$, where $|H|$ is the number of nodes (histories) in the game tree and $\eps$ is the desired precision of equilibrium.

We characterize optimal behavior in the hidden-role setting by converting hidden-role games into team games in a way that preserves the strategic aspect of hidden-roles. This team game is called {\em split-personality form} of a given hidden-role game.
Given a (possibly communication-extended) hidden-role game $\Gamma$, we define and analyze two possible variants:
\begin{itemize}
    \item the \emph{uncoordinated split-personality form} $\gsymsplit(\Gamma)$ is a team games with $2n$ players, derived by splitting each player $i$ in the original game in two distinct players, $i^+$ and $i^-$, that pick actions for $i$ in $\Gamma$ if the player is assigned to team \pmax or \pmin respectively.
    \item the \emph{coordinated split-personality form} $\gsplit(\Gamma)$ is the $(n+1)$-player team game in which the additional player, who we refer to simply as the {\em adversary} or {\em \pmin-player}, takes control of the actions of all players who have been assigned to the \pmin-team. On the contrary, the players from $1$ to $n$ control the players as usual only if they belong to the team \pmax.
\end{itemize}
The coordinated split-personality variant encodes an extra assumption on \pmin's capabilities, namely, that the \pmin-team is controlled by a single player and is therefore perfectly coordinated. Trivially, when only one player is on team \pmin, the uncoordinated and coordinated split-personality forms coincide.

In either case, the resulting game is a team game in which each player has a fixed team assignment. We remark that the split-personality form maintains the strategic aspects of hidden roles, since $i^+$ and $i^-$ share identity when interacting with the environment. For example, players may observe that $i$ has done an action, but do not know if the controller was $i^+$ or $i^-$. Similarly, messages sent by $i^+$ and $i^-$ are signed by player $i$, since the communication extension is applied on $\Gamma$ \emph{before} splitting personalities.

Picking which split-personality variant to use is a modeling assumption that depends on the game instance that one wants to address. For example, in many recreational tabletop games, \gsymsplit is the more reasonable choice because \pmin-players are truly distinct; however, in a network security game where a single adversary controls the corrupted nodes, \gsplit is the more reasonable choice. The choice of which variant also affects the complexity of equilibrium computation: as we will detail in later, \gsplit yields a more tractable solution concept. In certain special cases, however, \gsplit and \gsymsplit will coincide. For example, we will later show that this is the case in {\em Avalon}, which is key to allowing our algorithms to work in that game.

With these pieces in place, we define the {\em hidden-role equilibria} (HRE) of a hidden-role game $\Gamma$ as the {\em team max-min equilibria} (TMEs) of the split-personality form of $\Gamma$. That is, the hidden-role equilibria are the optimal joint strategies for team \pmax in the split-personality game, where optimality is judged by the expected value against a jointly-best-responding \pmin-team. The {\em value} of a hidden-role game is the expected value for \pmax in any hidden-role equilibrium. If communication (private or public) is allowed, we define hidden-role equilibria and values by taking the supremum over $M$ and $R$ of the expected value at the equilibrium, that is, the \pmax-team is allowed to set the parameters of the communication.

Our new solution concept encodes, by design, a pessimistic assumption for the \pmax-team. \pmax picks $M$, $R$ and its strategy considering a worst-case \pmin adversary that knows this strategy and best-responds to it. Throughout our proofs, we will heavily make use of this fact. In particular, we will often consider \pmin-players that ``pretend to be \pmax-players'' under certain circumstances, which is only possible if \pmin-players know \pmax-players' strategies. It is {\em not} allowed for \pmax-players to know \pmin-players' strategies in the same fashion. This is in stark contrast to usual zero-sum game analysis, where various versions of the {\em minimax theorem} promise that the game is unchanged no matter which side commits first to a strategy. Indeed, we discuss in \Cref{sec:duality} the fact that, for hidden-role games, the asymmetry is in some sense necessary: a minimax theorem {\em cannot} hold for nontrivial hidden-role games.
We argue, however, that this asymmetry is natural and {\em inherent} in the the hidden-role setting. %
If we assumed the contrary and inverted the order of the teams so that \pmin commits first to its strategy, \pmax could discover the roles immediately by agreeing to message a passphrase unknown to \pmin in the first round, thus spoiling the whole purpose of hidden-role games. This argument will be made formal in \Cref{sec:duality}.

\paragraph{Existing solution concepts failures}\label{par:eq-concepts-discussion}

We defined our equilibrium notion as a team max-min equilibrium (TME) of the split-personality form of a communication-extended hidden role game. Here, we will argue why some other notions would be less reasonable. 

\begin{itemize}
    \item {\em Nash Equilibrium.} 
    A {\em Nash equilibrium}~\cite{Nash50:Equilibrium} is a strategy profile for all players from which no player can improve its own utility by deviating. This notion is unsuitable for our purposes because it fails to capture {\em team coordination}. In particular, in pure coordination games (in which all players have the same utility function), which are a special case of hidden-role games (with no hidden roles and no adversary team at all), a Nash equilibrium would only be locally optimal in the sense that no player can improve the team's value alone. In contrast, our notion will lead to the optimal team strategy in such games.
    \item {\em Team-correlated equilibrium.} 
    The {\em team max-min equilibrium with correlation}~\cite{vonStengel97:Team,Celli18:Computational} (TMECor), is a common solution concept used in team games. It arises from allowing each team the ability to communicate in private (in particular, to generate correlated randomness) {\em  before} the game begins. For team games, TMECor is arguably a more natural notion than TME, as the corresponding optimization problem is a bilinear saddle-point problem, and therefore in particular the minimax theorem applies, avoiding the issue of which team ought to commit first. However, for hidden-role games, TMECor is undesirable, because it does not make sense for a team to be able to correlate with teammates that have not even been assigned yet. The {\em team max-min equilibrium with communication} (TMECom)~\cite{Celli18:Computational} makes an even stronger assumption about communication among team members, and therefore suffers the same problem.
\end{itemize}

\subsection{Main Computational Contributions}\label{sec:compute-contrib}
We now introduce computational results, both positive and negative, for computing the hidden-role value and hidden-role equilibria of a given game.

\paragraph{Polynomial-time algorithm}
 Our main positive result is summarized in the following informal theorem statement.
\begin{theorem}[Main result, informal; formal result in \Cref{thm:main}]\label{th:main-informal}
    If the number of players is constant, private communication is available, the \pmin-team is a strict minority (\ie, strictly less than half of the players are on the \pmin-team), and the adversary is coordinated, there is a polynomial-time algorithm for exactly computing the hidden-role value.
\end{theorem}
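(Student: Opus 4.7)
My plan is to collapse the hidden-role value of $\Gamma$ into an essentially two-player zero-sum quantity on $\gsplit(\Gamma)$ and then solve it by linear programming. Specifically, I would show that $\hrvaluepriv(\Gamma)$ equals the value $v^\star$ of $\gsplit(\Gamma)$ in which the \pmax-team is augmented with a trusted mediator that sees all team members' observations (i.e., the TMECom value of $\gsplit(\Gamma)$), and then compute $v^\star$ in polynomial time by a sequence-form LP on the game in which the team is collapsed into a single ``super-player.'' The key structural reason the computation works is that the \pmin-team in $\gsplit(\Gamma)$ is a single coordinated player, so the collapsed game is a two-player zero-sum extensive-form game and its sequence-form LP is polynomial in $|H|$ when $n$ is constant.

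For the equality $\hrvaluepriv(\Gamma)=v^\star$, the upper bound $\hrvaluepriv(\Gamma)\le v^\star$ is essentially immediate: any joint randomized strategy of \pmax in the communication-extended hidden-role game induces a distribution over team action sequences in $\gsplit(\Gamma)$ against which the coordinated \pmin best-responds with at least as much power. The substantive step is the lower bound. For any $\epsilon$-optimal mediator strategy $\mu$ in $\gsplit(\Gamma)$, and a polylogarithmic choice $\log M = R = \polylog(|H|, 1/\epsilon)$, I would have the \pmax-team emulate the mediator by running an information-theoretic Byzantine-secure MPC protocol over the private pairwise channels and authenticated broadcast---e.g.\ Rabin--Ben-Or, which tolerates $t<n/2$ corruptions---before every action in $\Gamma$: each player inputs a role claim, the protocol samples from $\mu$ accordingly, and each player who has declared \pmax receives their recommendation. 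Because $|\pmin|<n/2$ provides strict honest majority, the MPC guarantees correct delivery to honest \pmax-players regardless of how \pmin deviates, together with a simulator-style bound on \pmin's view. Sending $\epsilon\to 0$ and taking the supremum over $(M,R)$ closes the gap.

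For the computation of $v^\star$, I would treat the \pmax-team in $\gsplit(\Gamma)$ as a single super-player whose information at each decision point is the union of its members' observations. Against the single coordinated adversary, this is a standard two-player zero-sum extensive-form game with imperfect information; by perfect recall of the underlying players the super-player inherits perfect recall, so the classical sequence-form LP applies. Because $n$ is constant, both players' sequence polytopes are polynomial in $|H|$, so the LP is polynomial-sized and solvable exactly in polynomial time (e.g.\ by the ellipsoid method with rational arithmetic).

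The main obstacle, in my view, is the game-theoretic soundness of the MPC simulation step. The coordinated \pmin-player can submit an arbitrary (possibly false) role claim---in particular, impersonating a \pmax-member---and then observe the corresponding MPC output, potentially learning correlations in $\mu$ that are not available to the TMECom best-responder used to define $v^\star$. A clean resolution seems to require either designing $\mu$ so that single-receiver marginals of the mediator's output are simulable from \pmin's true type alone (for instance via secret-sharing that renders any individual recommendation statistically uninformative about teammates), or folding the impersonation capability directly into the benchmark game without changing its value. A secondary difficulty is the handling of protocol aborts when \pmin refuses to participate; the \pmax-team must always have a well-defined fallback strategy whose value, in the limit $\epsilon\to 0$, still matches $v^\star$.
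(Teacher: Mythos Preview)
Your proposal contains a genuine gap, and it is exactly the one you flag as the ``main obstacle'' at the end: the benchmark $v^\star$ you aim for is \emph{not} equal to $\hrvaluepriv(\Gamma)$.

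When you merge the \pmax-players of $\gsplit(\Gamma)$ into a super-player who sees the union of their observations (equivalently, compute the TMECom value of $\gsplit(\Gamma)$), that super-player implicitly learns the entire team assignment: the players $1^+,\dots,n^+$ are \emph{fixed} identities in $\gsplit(\Gamma)$, and the mediator communicating with them sees exactly which $i^+$ are active, hence which $i$ are on \pmax. Your $v^\star$ is therefore the value of the \emph{public-role} refinement $\publicrole(\Gamma)$, not of the hidden-role game. The upper bound $\hrvaluepriv(\Gamma)\le v^\star$ that you call ``immediate'' is indeed true, but the lower bound via MPC fails precisely because in the real game the simulated mediator can only address the original players $1,\dots,n$; it cannot tell whether player $i$ is being controlled by $i^+$ or $i^-$, so a \pmin-player can impersonate a \pmax-player and receive recommendations. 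No choice of $\mu$ can close this gap, since the discrepancy is between two different games, not between two implementations of the same mediator.

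The paper resolves this by changing the benchmark. Instead of TMECom of $\gsplit(\Gamma)$, it adds an explicit extra player (a mediator who is always on team \pmax) to obtain a game $\Gamma^*$, and then proves a revelation principle in which the mediator receives reports from \emph{all} $n$ players---so \pmin-players may lie, but their lies must be plausible \pmax observations. This yields a genuinely two-player zero-sum game $\Gamma_0$ (mediator vs.\ coordinated adversary) of size $|H|^{k+1}$, where the adversary's moves include choosing fake observation reports for the $k$ \pmin-players. Only after building $\Gamma_0$ is MPC invoked, and now the impersonation you worry about is already baked into the adversary's strategy space in $\Gamma_0$, so the simulation is sound. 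Your ``fold the impersonation capability directly into the benchmark game'' idea is exactly this, but carrying it out is the heart of the argument---it requires the revelation principle and the counting of plausible fake reports---and is not a minor patch to your plan.
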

This result should be surprising, for multiple reasons. First, team games are generally hard to solve~\cite{vonStengel97:Team,Zhang22:Team_DAG}, so any positive result for computing equilibria in team games is fairly surprising. Further, it is {\em a priori} not obvious that the value of any hidden-role game with private communication and coordinated adversary is even a {\em rational} number\footnote{assuming all game values and chance probabilities are also rational numbers}, much less computable in polynomial time: for example, there exist adversarial team games with no communication whose TME values are irrational~\cite{vonStengel97:Team}.

There are two key ingredients to the proof of \Cref{th:main-informal}. The first is a special type of game which we call a {\em mediated} game. In a mediated game, there is a player, the {\em mediator}, who is always on team \pmax. \pmax-players can therefore communicate with it and treat it as a trusted party. We show that, when a mediator is present (and all the other assumptions of \Cref{th:main-informal} also hold), the hidden-role value is computable in polynomial time. To do this, we state and prove a form of {\em revelation principle}. Informally, our revelation principle states that, without loss of generality, it suffices to consider \pmax-team strategies in which, at every timestep of the game,
\begin{enumerate}
    \item all \pmax-players send their honest information to the mediator,
    \item the mediator sends action recommendations to all players (regardless of their team allegiance; remember that the mediator may not know the team assignment), and
    \item all \pmax-players play their recommended actions.
\end{enumerate}
\pmin-team players are, of course, free to pretend to be \pmax-team players and thus send false information to the mediator; the mediator must deal with this possibility. However, \pmin-team players cannot just send {\em any} message; they must send messages that {\em are consistent with some \pmax-player}, lest they be immediately revealed as \pmin-team. These observations are sufficient to construct a {\em two-player zero-sum game} $\Gamma_0$, where the mediator is the \pmax-player and the coordinated adversary is the \pmin-player. The value of $\Gamma_0$ is the value of the original hidden-role game, and the size of $\Gamma_0$ is at most polynomially larger than the size of the original game. Since two-player zero-sum extensive-form games can be solved in polynomial time~\cite{Koller94:Fast,Stengel96:Efficient}, it follows that mediated hidden-role games can also be solved in polynomial time.

The second ingredient is to invoke results from the literature on {\em secure multi-party  computation} to {\em simulate} a mediator in the case that one is not already present. %
A well-known result from that literature states that so long as strictly more than half of the players are honest, essentially any interactive protocol---such as the ones used by our mediator to interact with other players---can be simulated efficiently such that the adversary can cause failure of the protocol or leakage of information~\cite{Beaver90:Multiparty,Rabin89:Verifiable}.\footnote{In this part of the argument, the details about the communication channels become important: in particular, the MPC results that we use for our main theorem statement assume that the network is synchronous (\ie, messages sent in round $r$ arrive in round $r+1$), and that there are pairwise private channels and a public broadcast channel that are all authenticated (\ie, message receivers know who sent the message). This is enough to implement MPC so long as $k < n/2$, where $k$ is the number of adversaries and $n$ is the number of players. Our results, however, do not depend on the specific assumptions about the communication channel, so long as said assumptions enable secure MPC with guaranteed outcome delivery.  For a recent survey of MPC, see \citet{Lindell20:Secure}. For example, if $k < n/3$ then MPC does not require a public broadcast channel, so neither do our results. For cleanliness, and to avoid introducing extra formalism, we will stick to one model of communication.} Chaining such a protocol with the argument in the previous paragraph concludes the proof of the main theorem.

\paragraph{Related works on MPC and communication equilibria.}
    The {\em communication equilibrium}~\cite{Forges86:Approach,Myerson86:Multistage} is a notion of equilibrium with a mediator, in which the mediator has two-way communication with all players, and players need to be incentivized to report information honestly and follow recommendations. Communication equilibria include all Nash equilibria, and therefore are unfit for general hidden-role games for the same reason as Nash equilibria, as discussed in the previous subsection. 

    However, when team \pmin has only one player and private communication is allowed, the hidden-role equilibria coincide with the \pmax-team-optimal communication equilibria in the original game $\Gamma$. Our main result covers this case, but an alternative way of computing a hidden-role equilibrium in this special case is to apply the optimal communication equilibrium algorithms of \citet{Zhang22:Polynomial} or \citet{Zhang23:Computing}. However, those algorithms either involve solving linear programs, solving many zero-sum games, or solving zero-sum games with large reward ranges, which will be less efficient than directly solving a single zero-sum game $\Gamma_0$. %

We are not the first to observe that multi-party computation can be used to implement a mediator for use in game theory. In various settings and for various solution concepts, it is known to be possible to implement a mediator using only cheap-talk communication among players~[\eg,~\citealp{Urbano02:Computational,Liu17:Correlation,Abraham06:Distributed,Izmalkov05:Rational}]. For additional reading on the connections between game theory and cryptography, we refer the reader to the survey of \citet{Katz08:Bridging}, and papers citing and cited by this survey. 

\paragraph{Lower bounds.}
We also show {\em lower bounds} on the complexity of computing the hidden-role value, even for a constant number of players, when any of the assumptions in \Cref{th:main-informal} is broken.
\begin{theorem}[Lower bounds, informal; formal statement in \Cref{thm:public-hard,thm:sym-complexity}]\label{thm:sym-complexity-informal} If private communication is disallowed, the hidden-role value problem is \NP-hard. If the \pmin-team is uncoordinated, the problem is \coNP-hard. If both, the problem is $\Sigma_2^\P$-hard. All hardness reductions hold even when the \pmin-team is a minority and the number of players is an absolute constant.
\end{theorem}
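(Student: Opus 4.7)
My plan is to prove each of the three lower bounds by a separate polynomial-time reduction from a canonical complete problem: \NP-hardness from \textsf{3SAT}, \coNP-hardness from \textsf{3UNSAT} (equivalently, \textsf{3DNF-Taut}), and $\Sigma_2^\P$-hardness from $\exists\forall$-\textsf{3SAT}. In all three cases the SAT formula will be encoded into the payoffs of the game, and a constant number of players will suffice because chance moves can be used to sample the indices of variables and clauses into the small fixed player set.

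For the \NP-hardness (public communication only, coordinated \pmin), I would construct a small game in which, after the random team assignment, players engage in a bounded number of public broadcast rounds and then commit to a joint assignment of the variables; nature then samples a uniformly random clause and \pmax wins iff the commitment satisfies it. Because every broadcast is overheard by the coordinated \pmin-player, and because \pmin knows \pmax's strategy, \pmin can impersonate a \pmax-player during the public phase and steer \pmax away from any non-satisfying ``negotiation path''; the only safe strategies for \pmax are those that converge on a single satisfying assignment regardless of the team assignment. The game value will then exceed $1 - 1/m$ (where $m$ is the number of clauses of the formula $\phi$) iff $\phi \in \textsf{3SAT}$.

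For the \coNP-hardness (private communication allowed, uncoordinated \pmin), the main positive result already implies that \pmax's side effectively ``collapses to polynomial'' via simulated multi-party computation, so all the hardness must live in the $\forall$-quantifier over \pmin's decentralized strategies. I would reduce from \textsf{3UNSAT} by building a game in which each \pmin-player is privately queried by nature on a variable index $i$ and must independently report a bit $b_i$; these bits are collated into a joint assignment, and \pmax wins iff the assignment satisfies the formula. Because the \pmin-players lack pre-game correlation, a \pmin product strategy that forces \pmax to lose will exist iff a falsifying assignment exists. For $\Sigma_2^\P$-hardness, I combine the two reductions by reducing from $\exists\forall$-\textsf{3SAT}: the outer $\exists x$ is enforced by the public-only restriction on \pmax (which forces the $x$-commitment to be essentially common knowledge), while the inner $\forall y$ is enforced by the uncoordinated structure on \pmin.

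The main obstacle is the \emph{absolute constant} bound on the number of players: the standard hardness reductions for team games (\eg, \citet{vonStengel97:Team}) use one player per variable, which is not permitted here. This forces me to multiplex all variable and clause indexing through a tiny, fixed-size player set via chance moves, and to argue carefully that this multiplexing does not introduce shortcuts that collapse the complexity. A secondary technical point is to verify that the supremum over message bandwidth $M$ and communication rounds $R$ in the equilibrium definition cannot be exploited by the opposing team: \eg, giving \pmin arbitrarily many public messages in the \NP-hardness reduction should not let it escape the SAT encoding, and giving \pmin-players private messages in the \coNP-hardness reduction should not fully substitute for the missing pre-game correlation.
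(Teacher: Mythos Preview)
Your reduction targets and the idea of multiplexing variable/clause indices through chance moves into a constant player set are the same as the paper's. But the two gadgets that make the paper's reductions actually close are absent from your proposal, and without them the arguments fail.

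For the \NP-hardness, your mechanism for neutralizing public communication (``\pmin impersonates and steers \pmax'') does not work: \pmax commits first and can simply ignore all messages and play a fixed assignment, so steering accomplishes nothing. What is needed is a reason that any \emph{informative} public message costs \pmax more than it gains. The paper obtains this via a leakage-penalty gadget (reducing from MAX-CUT, not 3SAT): chance privately deals vertices $v_1,v_2$ to the two \pmax-players, and separately the \pmin-player outputs a guess $(v_1',v_2')$ with a huge reward for a correct guess. Any public transcript that shifts the posterior on $(v_1,v_2)$ hands \pmin a guessing edge that dominates the objective, so the public channel is worthless at optimum and the value collapses to the max cut. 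Your game has no such penalty; if \pmin takes no payoff-relevant action after the broadcast phase, public and private communication are equally useful to \pmax, and the private-communication positive result would then apply to your game, contradicting hardness.

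For the \coNP-hardness, ``\pmin-players lack pre-game correlation'' is not the obstruction you think: in the TME, \pmin moves after \pmax, so all \pmin-players can agree to play the \emph{same} pure map variable $\mapsto$ bit, and your game is then solved by a per-variable local computation (and, with a coordinated \pmin, falls under the polynomial-time positive result). The paper instead forces \emph{asymmetric private information among the \pmin-players} via the Koller--Megiddo gadget: one \pmin-player sees a random clause and selects a variable in it, a second \pmin-player sees only the variable index and must assign it. With coordination this is trivial; without it, the second player's strategy is forced to be a fixed assignment and the value encodes MAX-SAT. The paper also makes \pmax entirely passive so that the $\sup_{M,R}$ is attained at $R=0$; you need that argument explicitly. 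The $\Sigma_2^\P$ reduction then requires both devices simultaneously---leakage-penalty guessers on each side to kill the public channel for both teams, plus a consistency penalty to force the \pmax-players' per-variable bits to agree---and ``combine the two'' does not capture this.
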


\paragraph{Price of hidden roles.} Finally, we define and compute the {\em price of hidden roles}. It is defined (analogously to the price of anarchy and price of stability, which are common quantities of study in game theory) as the ratio between the value of a hidden-role game, and the value of the same game with team assignments made public. We show the following:
\begin{theorem}[Price of hidden roles; formal statement in \Cref{thm:price of hr}]
    Let $D$ be a distribution of team assignments. For the class of games where teams are drawn according to distribution $D$, the price of hidden roles is equal to $1/p$, where $p$ is the probability of the most-likely team in $D$.
\end{theorem}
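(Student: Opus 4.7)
The plan is to show matching upper and lower bounds on the ratio $V_{\mathrm{pub}}(\Gamma)/V_{\mathrm{hid}}(\Gamma)$ (with utilities normalized to $[0,1]$). Let $V_{\mathrm{pub}}(\Gamma)$ and $V_{\mathrm{hid}}(\Gamma)$ denote the public-role and hidden-role values of $\Gamma$, and fix $T^* \in \argmax_T D(T)$ so that $D(T^*) = p$.

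For the upper bound, I will show $V_{\mathrm{hid}}(\Gamma) \geq p\cdot V_{\mathrm{pub}}(\Gamma)$ for every $\Gamma$ in the class. Let $\tau$ be a \pmax-optimal public strategy achieving $V_{\mathrm{pub}}(\Gamma)$. In the hidden-role $\Gamma$, team \pmax uses the ``pretend $T = T^*$'' strategy: whenever player $i$ finds themselves on team \pmax, they play $\tau_i(T^*)$, i.e., the strategy player $i$ would use in the public game under assignment $T^*$. Condition on the actual assignment $T$: if $T=T^*$ (probability $p$), the hidden-game \pmax-team coincides with the \pmax-team under $T^*$ and their strategies coincide with $\tau(T^*)$; moreover, each \pmin-player's information in the hidden game at $T=T^*$ (namely their own role) coincides with their information in the public game at $T^*$, so \pmin's strategy space and best-response value are unchanged and the resulting value is at least $V_{\mathrm{pub}}(\Gamma)$. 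If $T\neq T^*$, the value is at least $0$ by normalization. Taking expectations gives $V_{\mathrm{hid}} \geq p\cdot V_{\mathrm{pub}}$.

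For the lower bound, I would exhibit a game $\Gamma^*$ with team distribution $D$ whose value ratio equals $1/p$: a ``coordinate-on-the-assignment'' game in which chance draws $T\sim D$ and privately reveals each player's role, every player simultaneously submits a guess $\hat T_i$ from the support of $D$, and \pmax receives utility $1$ iff every \pmax-player's guess equals $T$. Publicly, \pmax wins with probability $1$, so $V_{\mathrm{pub}}(\Gamma^*)=1$. In the hidden version, the winning probability under any \pmax-profile $(\sigma_i)$ is $\sum_T D(T)\prod_i \sigma_i(T)$, where the product ranges over \pmax-players in $T$; this expression is linear in each $\sigma_i$ separately, so the optimum is attained at a pure profile, and the uniform pure strategy $\sigma_i\equiv\delta_{T^*}$ attains $p$. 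The main obstacle is ruling out that a cleverer strategy exceeds $p$: non-constant assignments $i\mapsto T^{(i)}$ could in principle cover several $T$'s simultaneously, and private messages could in principle transmit information about $T$. I expect to close this by invoking the pessimistic-adversary definition of hidden-role equilibrium---team \pmin knows \pmax's protocol and, for any alternative $T'$ in the support, can impersonate the honest-\pmax behavior consistent with $T'$, so no message pattern reliably disambiguates the true $T$---which, together with the upper bound, pins the supremum of the ratio at $1/p$.
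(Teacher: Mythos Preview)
Your approach matches the paper's: the same upper-bound argument (play as if the assignment were $T^*$, collect at least $0$ otherwise), and the same lower-bound witness game (each player simultaneously announces a guess for the full assignment; \pmax wins iff every \pmax-player guesses correctly, so the public-role value is $1$).

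The gap you flag in the lower bound is real, and your proposed resolution via impersonation is the right intuition but is not yet a proof---in particular, your expression $\sum_T D(T)\prod_i \sigma_i(T)$ already ignores communication entirely, and the ``non-constant assignment $i\mapsto T^{(i)}$'' issue is not handled by the impersonation idea alone. The paper closes the gap differently and more cleanly. Rather than reasoning directly about arbitrary private-communication protocols, it first \emph{adds a mediator} to the announcement game. Since a mediator can only increase \pmax's value, it suffices to upper-bound the mediated value by $p$. The revelation principle (\Cref{th:revelation principle}) then reduces the mediator's strategy to choosing a recommended announcement for each player; the paper argues that recommending the \emph{same} team to everyone is without loss, so the mediated value is at most $\max_t D(t)=p$. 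This single move absorbs both of your obstacles at once: arbitrary private communication is subsumed by the mediator, and the ``different players guess different teams'' issue becomes a domination claim about mediator recommendations. Your impersonation idea resurfaces as the reason the mediator learns nothing---the adversary simply has every \pmin-player report ``I am on team \pmax,'' so all reports look identical regardless of the true $t$.
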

Intuitively, in the worst case, the \pmax-team can be forced to guess at the beginning of the game all the members of the \pmax-team, and win if and only if its guess is correct. In particular, for the class of $n$-player games with $k$ adversaries, the price of hidden roles is exactly $\binom{n}{k}$. 

\subsection{Experiments: {\em Avalon}} 
We ran experiments on the popular tabletop game {\em The Resistance: Avalon} (or simply {\em Avalon}, for short). As discussed earlier, despite the adversary team in {\em Avalon} not being coordinated in the sense used in the rest of the paper, we show that, at least for the 5- and 6-player variants, the adversary would not benefit from being coordinated; therefore, our polynomial-time algorithms can be used to solve the game. This observation ensures that our main result applies. Game-specific simplifications allow us to reduce the game tree from roughly $10^{56}$ nodes~\cite{Serrino19:Finding} to $10^8$ or even fewer, enabling us to compute exact equilibria. Our experimental evaluation demonstrates the practical efficacy of our techniques on real game instances. Our results are discussed in \Cref{sec:experiments}, and further detail on the game-specific reductions used, as well as a complete hand-analysis of a small {\em Avalon} variant, can be found in \Cref{sec:avalon}.

\subsection{Examples}\label{sec:example-intro}
In this section we present three examples that will hopefully help the reader in understanding our notion of equilibrium and justify some choices we have made in our definition.

\paragraph{A hidden-role matching pennies game.} \label{par:matching-pennies}
Consider a $n$-player version of matching pennies (with $n > 2$), which we denote as $\matchingpennies{n}$. One player is chosen at random to be the adversary (team \pmin). All $n$ players then simultaneously choose bits $b_i \in \{0,1\}$. Team \pmax wins (gets utility $1$) if and only if all $n$ bits match; else, team \pmax loses (gets utility $0$).

With no communication, the value of this game is $1/2^{n-1}$: it is an equilibrium for everyone to play uniformly at random. Public communication does not help, because, conditioned on the public transcript, bits chosen by players must be mutually independent. Thus, the adversary can do the following: pretend to be on team \pmax, wait for all communication to finish, and then play $0$ if the string of all ones is more conditionally likely than the string of all $1$s, and vice-versa.

With private communication, however, the value becomes $1/(n+1)$. Intuitively, the \pmax-team should attempt to guess who the \pmin-player is, and then privately discuss among the remaining $n-1$ players what bit to play. We defer formal proofs of the above game values to \Cref{sec:example}, because they rely on results in \Cref{sec:one-sided}.

\paragraph{Simultaneous actions.} In typical formulations of extensive-form imperfect-information games, it is without loss of generality to assume that games are {\em turn-based}, \ie, only one player acts at any given time. To simulate simultaneous actions with sequential ones, one can simply forbid players from observing each others' actions. However, when communication is allowed arbitrarily throughout the game, the distinction between simultaneous and sequential actions suddenly becomes relevant, because {\em players can communicate when one---but not all---the players have decided on an action}.

To illustrate this, consider the game $\matchingpennies{n}$ defined in the previous section, with public communication, except that the players act in sequence in order of index ($1, 2, \dots, n$). We claim that the value of this game is not $1/2^{n-1}$, but at least $1/2n$. To see this, consider the following strategy for team \pmax. The \pmax players wait for P1 to (privately) pick an action. Then, P2 publicly declares a bit $b \in \{0, 1\}$, and all remaining players play $b$ if they are on team \pmax. If P1 was the \pmin player, this strategy wins with probability at least $1/2$, so the expected value is at least $1/2n$. This example illustrates the importance of allowing simultaneous actions in our game formulations.

\paragraph{Correlated randomness matters.}\label{par:special-player}
We use our third and final example to discuss a nontrivial consequence of the definition of hidden-role equilibrium that may appear strange at first: it is possible for seemingly-useless information to affect strategic decisions and the game value. 

To illustrate, consider the following simple game $\Gamma$: there are three players, and three role cards. Two of the three cards are marked \pmax, and the third is marked \pmin. The cards are dealt privately and randomly to the players. Then, after some communication, all three players simultaneously cast votes to elect a winner. If no player gains a majority of votes, \pmin wins. Otherwise, the elected winner's team wins.
Clearly, \pmax can win no more than $2/3$ of the time in this game: \pmin can simply pretend to be on team \pmax, and in that case \pmax cannot gain information, and the best they can do is elect a random winner. 

Now consider the following seemingly-meaningless modification to the game. We will modify the two \pmax cards so that they are distinguishable. For example, one card has \pmax written on it, and the other has $\pmax'$. We argue that this, perhaps surprisingly, affects the value of the game. In fact, the \pmax team can now win deterministically, even with only public communication. Indeed, consider following strategy. The two players on \pmax publicly declare what is written on their cards (\ie, \pmax or $\pmax'$). The player elected now depends on what the third player did. If one player does not declare \pmax or $\pmax'$, elect either of the other two players. If two players declared \pmax, elect the player who declared $\pmax'$. If two players declare $\pmax'$, elect the player who declared \pmax. This strategy guarantees a win: no matter what the \pmin-player does, any player who makes a unique declaration is guaranteed to be on the \pmax-team. 

What happens in the above example is that making the cards distinguishable introduces a piece of {\em correlated randomness} that \pmax can use: the two \pmax players receive cards whose labels are (perfectly negatively) correlated with each other. Since our definition otherwise prohibits the use of such correlated randomness (because players cannot communicate only with players on a specific team), introducing some into the game can have unintuitive effects. In \Cref{sec:duality}, we expand on the effects of allowing correlated randomness: in particular, we argue that allowing correlated randomness essentially ruins the point of hidden-role games by allowing the \pmax team to learn the entire team assignment. 

\section{Preliminaries}
Our contributions are based on prior work on extensive-form adversarial team games. In this section, we introduce the main definitions related to this class of games.

\begin{definition}
A (perfect-recall, timeable) extensive-form game\footnote{Our definition uses slightly different notation than most works on extensive-form games: for example, we use observations in place of information sets, and we allow for simultaneous moves. In our case, both of these choices will be useful later on in the paper, as we will explicitly make reference to observations and simultaneous actions. We discussed in \Cref{sec:example-intro} why it is important in our model to allow simultaneous actions.} consists of the following.
\begin{enumerate}
\item A set of $n$ players, identified with the integers $[n] = \{1, \dots, n\}$, and an extra player $\nature$ denoting the \emph{chance} player modeling stochasticity.
\item A directed tree of {\em nodes} $H$ (also called {\em histories}). The root of the tree is indicated with $\Root$ while the set of leaves, or {\em terminal nodes}, in $H$ is denoted $Z$.
\item At each node $h \in H \setminus Z$, for each player $i \in [n] \cup \nature$, an {\em action set} $A_i(h)$. The edges leaving $h$ are identified with the joint actions $a \in \bigtimes_{i \in [n] \cup \nature} A_i(h)$.
\item For each player $i \in [n]$, an {\em observation function} $o_i : H \setminus Z \to O$, where $O$ is some set of observations. We assume that the observation uniquely determines the set of legal actions, that is, if $o_i(h) = o_i(h')$ then $A(h) = A(h')$.
\item For each player $i \in [n]$, a {\em utility function} $u_i : Z \to [-1, 1]$. 
\item For each node $h \in H \setminus Z$, a probability distribution $p_h \in \Delta(A_\nature(h))$\, denoting how the chance player picks its action.
\end{enumerate}
\end{definition}

The {\em information state} of player $i$ at node $h \in H$ is the tuple $s_i(h) := (o_i^0 = o_i(\Root), a_i^0, o_i^1, a_i^1, \dots, o_i(h))$ of observations made by player $i$ and actions played by player $i$ on the path to node $h$. We will use $\Sigma_i$ to denote the set of all information states of player $i$. A {\em pure strategy} $x_i$ of player $i$ is a mapping from information states to actions. That is, it is a map $x_i : \Sigma_i \to \cup_{h \in H} A_i(h)$ such that $x_i(s_i(h)) \in A_i(h)$ for all $h$. A {\em mixed strategy} is a probability distribution over pure strategies. A tuple of (possibly mixed) strategies $x = (x_1, \dots, x_n)$ is  a {\em profile} or {\em strategy profile}, and the expected utility of  profile $x$ for player $i$ is denoted $u_i(x)$. We will use $X_i$ to denote the set of pure strategies of player $i$. 

An extensive-form game is an {\em adversarial team game} (ATG) if there is a team assignment $t \in \{ \pmax, \pmin \}^n$ and a team utility function $u : Z \to \R$ such that $u_i(z) = u(z)$ if $t_i = \pmax$, and $u_i(z) = -u(z)$ if $t_i = \pmin$. That is, each player is assigned to a team, all members of the team get the same utility, and the two teams are playing an adversarial zero-sum game\footnote{This is a slight abuse of language: if the \pmax and \pmin teams have different sizes, the sum of all players' utilities is not zero. However, such a game can be made zero-sum by properly scaling each player's utility. The fact that such a rescaling operation does not affect optimal strategies is a basic result for von Neumann--Morgenstern utilities \cite[Chapter 2.4]{Maschler2020:Game}.\label{foot:rescale} We will therefore generally ignore this detail. }.
In this setting, we will write $x_i \in X_i$ and $y_j \in Y_j$ for a generic strategy of a player on team \pmax and \pmin respectively.
ATGs are fairly well studied. In particular, Team Maxmin Equilibria (TMEs) \cite{vonStengel97:Team, Celli18:Computational} and their variants are the common notion of equilibrium employed. The {\em value} of a given strategy profile $x$ for team \pmax is the value that $x$ achieves against a best-responding opponent team. The {\em TME value} is the value of the best strategy profile of team \pmax. That is, the TME value is defined as
\begin{align}
\label{eq:tme}
    \tmevalue(\Gamma) := \max_{x \in \bigtimes_i\! \Delta(X_i)}\ \min_{y \in \bigtimes_j\! \Delta(Y_j)} u(x, y),
\end{align}
and the {\em TMEs} are the strategy profiles $x$ that achieve the maximum value.
Notice that the TME problem is nonconvex, since the objective function $u$ is nonlinear as a function of $x$ and $y$. As such, the minimax theorem does not apply, and swapping the teams may not preserve the solution. Computing an (approximate) TME is $\Sigma_2^\P$-complete in extensive-form games~\cite{Zhang22:Team_DAG}.

\section{Equilibrium Concepts for Hidden-Role Games}\label{sec:eq-concepts}

While the notion of TME is well-suited for ATGs, it is not immediately clear how to generalize it to the setting of hidden-role games. 
We do so by formally defining the concepts of \emph{hidden-role game}, \emph{communication} and \emph{split-personality form} first introduced in \Cref{sec:model-contrib}.

\begin{definition}
An extensive-form game is a {\em zero-sum hidden-role team game}%
, or {\em hidden-role game} for short, if it satisfies the following additional properties:
\begin{enumerate}
\item At the root node, only chance has a nontrivial action set. Chance chooses a string $t \sim \mc D \in \Delta(\{\pmax, \pmin\}^n)$, where $t_i$ denotes the team to which player $i$ has been assigned. Each player $i$ privately observes (at least\footnote{It is allowable for \pmax-players to also have more observability of the team assignment, \eg, certain \pmax-players may know who some \pmin-players are.}) their team assignment $t_i$. In addition, \pmin-players privately observe the entire team assignment $t$.
\item The utility of a player $i$ is defined completely by its team: there is a  $u : Z \to \R$ for which $u_i(z) = u(z)$ if player $i$ is on team \pmax at node $z$, and $-u(z)$ otherwise.\footnote{While at a first look this condition is similar to the one in ATGs, we remark that in this case the number of players in a team depends on the roles assigned at the start. The same considerations as \Cref{foot:rescale} on the zero-sum rescaling of the utilities hold.}
\end{enumerate}
\end{definition}
In some games, players observe additional information beyond just their team assignments. For example, in {\em Avalon}, one \pmax-player is designated {\em Merlin}, and Merlin has additional information compared to other \pmax-players. In such cases, we will distinguish between the {\em team assignment} and {\em role} of a player: the team assignment is just the team that the player is on (\pmax or \pmin), while the role encodes the extra private information of the player as well, which may affect what actions that player is allowed to legally take. For example, the team assignment of the player with role {\em  Merlin} is \pmax. We remark that additional imperfect information of the game may be observed after the root node.%
\footnote{This is an important difference with respect to Bayesian games~\cite{Harsanyi68:Game}, which assume all imperfect information to be the initial \emph{types} of the players. Conversely, we have an imperfect information structure that evolves throughout the game, while only the teams are assigned and observed at the start.}

Throughout this paper, we will use $k$ to denote the largest number of players on the \pmin-team, that is, $k = \max_{t \in \text{supp}(\mc D)}| \{i : t_i = \pmin \}| $.

\subsection{Models of Communication}\label{sec:communication-model}

The bulk of this paper concerns notions of equilibrium that allow communication between the players. We distinguish in this paper between {\em public} and {\em private communication}:
\begin{enumerate}
    \item {\em Public communication}: There is an open broadcast channel on which all players can send messages.
    \item {\em Private communication}: In addition to the open broadcast channel, each pair of players has access to a private communication channel. The private communication channel reveals to all players when messages are sent, but only reveals the message contents to the intended recipients. %
\end{enumerate}

Assuming that public-key cryptography is possible (\eg, assuming the discrete logarithm problem is hard) and players are polynomially computationally bounded, public communication and private communication are equivalent, because players can set up pairwise private channels via public-key exchange. However, in this paper, we assume that agents are computationally unbounded and thus treat the public and private communication cases as different. Our motivation for making this distinction is twofold. First, it is conceptually cleaner to explicitly model private communication, because then our equilibrium notion definitions do not need to reference computational complexity. Second, perhaps counterintuitively, equilibria with public communication only may be {\em more} realistic to execute in practice in human play, precisely {\em because} public-key cryptography breaks. That is, the computationally unbounded adversary renders more ``complex'' strategies of the \pmax-team (involving key exchanges) useless, thus perhaps resulting in a {\em simpler} strategy. 
We emphasize that, in all of our positive results in the paper, the \pmax-team's strategy {\em is} efficiently computable.

To formalize these notions of communication, we now introduce the {\em communication extension}.

\begin{definition}
    The {\em public} and {\em private $(M, R)$-communication extensions} corresponding to a hidden-role game $\Gamma$ are defined as follows. Informally, between every step of the original game $\Gamma$, there will be $R$ rounds of communication; in each round, players can send a public broadcast message and private messages to each player.  The communication extension starts in state $h = \Root \in H_\Gamma$. At each game step of $\Gamma$:
    \begin{enumerate}
        \item Each player $i \in [n]$ observes $o_i(h)$.
        \item For each of $R$ successive communication rounds:
        \begin{enumerate}
            \item Each player $i$ simultaneously chooses a message $m_i \in [M]$ to broadcast publicly.
            \item If private communication is allowed, each player  $i$ also chooses messages $m_{i \to j} \in [M] \cup \{\bot\}$ to send to each player $j \ne i$. $\bot$ denotes that the player does not send a private message at that time.
            \item Each player $j$ observes the messages $m_{i \to j}$ that were sent to it, as well as all messages $m_{i}$ that were sent publicly. That is, by notion of communication, the players observe:
            \begin{itemize}
                \item {\em Public}: player $j$ observes the ordered tuple $(m_1, \dots, m_n)$.
                \item {\em Private}: player $j$ also observes the ordered tuple $(m_{1 \to j}, \dots, m_{n \to j})$, and the set $\{ (i, k) : m_{i \to k} \ne \bot\}$. That is, players observe messages sent to them, and players see when other players send private messages to each other (but not the contents of those messages)    
            \end{itemize}
        \end{enumerate}
        \item Each player, including chance, simultaneously plays an action $a_i \in A_i(h)$. (Chance plays according to its fixed strategy.) The game state $h$ advances accordingly.
    \end{enumerate}
\end{definition}

\noindent We denote the $(M, R)$-extensions as $\gcommpriv^{M, R}(\Gamma)$, and $\gcommpub^{M, R}(\Gamma)$. To unify notation, we also define $\gcommnone^{M, R}(\Gamma) = \Gamma$. When the type of communication allowed and number of rounds are not relevant, we  use $\gcomm(\Gamma)$ to refer to a generic extension.

\subsection{Split Personalities}\label{sec:split-personality}
We introduce two different \emph{split-personality} forms $\gsymsplit(\Gamma)$ and $\gsplit(\Gamma)$ of a hidden-role game $\Gamma$, The split-personality forms are adversarial team games which preserve the characteristics of $\Gamma$.

\begin{definition}\label{def:split}
The {\em uncoordinated split-personality form}\footnote{In the language of Bayesian games, the split-personality form would almost correspond to the {\em agent form}.} of an $n$-player hidden-role game $\Gamma$ is the $2n$-player adversarial team game $\gsymsplit(\Gamma)$ in which each player $i$ is split into two players, $i^+$ and $i^-$, which control player $i$'s actions when $i$ is on team \pmax and team \pmin respectively. 
\end{definition}

Unlike the original hidden-role game $\Gamma$, the split-personality game is an adversarial team game without hidden roles: players $i^+$ are on the \pmax team, and $i^-$ are on the \pmin-team. Therefore, we are able to apply notions of equilibrium for ATGs to $\gsymsplit(\Gamma)$. We also define the {\em coordinated split-personality form}:
\begin{definition}
    The {\em coordinated split-personality form} of an $n$-player hidden-role game $\Gamma$ is the $(n+1)$-player adversarial team game $\gsplit(\Gamma)$ formed by starting with $\gsymsplit(\Gamma)$ and merging all \pmin-players into a single adversary player, who observes all their observations and chooses all their actions.
\end{definition}

Assuming \pmin to be \emph{coordinated} is a worst-case assumption for team \pmax, which however can be justified.
In many common hidden-role games, such as the {\em Mafia} or {\em Werewolf} family of games and most variants of {\em Avalon}, such an assumption is not problematic, because the \pmin-team has essentially perfect information already. In \Cref{sec:avalon:split-personality}, we justify why this assumption is safe also in some more complex {\em Avalon} instances considered.
The coordinated split-personality form will be substantially easier to analyze, and in light of the above equivalence for games like {\em Avalon}, we believe that it is important to study it. %

When team \pmin in $\Gamma$ is already coordinated, that is, if every \pmin-team member has the same observation at every timestep, the coordinated and uncoordinated split-personality games will, for all our purposes, coincide: in this case, any strategy of the adversary in $\gsplit(\Gamma)$ can be matched by a joint strategy of the \pmin-team members in $\gsymsplit(\Gamma)$. This is true in particular if there is only one \pmin-team member. But, we insert here a warning: even when the base game $\Gamma$ has a coordinated adversary team, the private communication extension $\gcommpriv(\Gamma)$ will not. Thus, with private-communication extensions of $\Gamma$, we must distinguish the coordinated and uncoordinated split-personality games even if $\Gamma$ itself is coordinated.

\subsection{Equilibrium Notions}

We now define the notions of equilibrium that we will primarily study in this paper.

\begin{definition}\label{def:hre-value}
 The {\em uncoordinated value} of a hidden-role game $\Gamma$ with notion of communication $c$ is defined as
 \begin{align}\label{eq:limit}
     \symhrvalue_c(\Gamma) := \sup_{M, R} \symhrvalue_c^{M, R}(\Gamma)
 \end{align}
 where $\symhrvalue_c^{M, R}(\Gamma)$ is the TME value of $\gsymsplit(\gcomm_c^{M, R}(\Gamma))$. The {\em coordinated value} $\hrvalue_c(\Gamma)$ is defined analogously by using $\gsplit$.
 \end{definition}
 
 \begin{definition}\label{def:hre-profile}
 An {\em $\eps$-uncoordinated hidden-role equilibrium} of $\Gamma$ with a particular notion of communication $c \in \{ \textsf{none, pub, priv}\}$ is a tuple $(M, R, x)$ where $x$ is a \pmax-strategy profile in $\gsymsplit(\gcomm_c^{M, R}(\Gamma))$ of value at least $\symhrvalue_c(\Gamma) - \eps$. The $\eps$-{\em coordinated hidden-role equilibria} is defined analogously, again with $\gsplit$ and $\hrvalue$ instead of $\gsymsplit$ and $\symhrvalue$.
\end{definition}

As discussed in \Cref{sec:model-contrib}, our notion of equilibrium is inherently asymmetric due to its max-min definition. The \pmax-team is the first to commit to a strategy and a communication scheme, and \pmin is allowed to know both how much communication will be used (\ie, $M$ and $R$) as well as \pmax's entire strategy $x$. As mentioned before, this asymmetry is fundamental in our setting, and we will formalize it in \Cref{sec:duality}.

\section{Computing Hidden-Role Equilibria}
In this section, we show the main computational results regarding the complexity of computing an hidden-role equilibrium in different settings. We first provide positive results for the private-communication case in \Cref{sec:one-sided} while the negative computational results for the no/public-communications cases are presented in \Cref{sec:no/public-comm}. The results are summarized in \Cref{tab:complexity-results}.

\subsection{Computing Private-Communication Equilibria}\label{sec:one-sided}
In this section, we show that it is possible under some assumptions to compute equilibria efficiently for hidden-role games. In particular, in this section, we assume that
\begin{enumerate}
    \item there is private communication,
    \item the adversary is coordinated, and
    \item the adversary is a minority $(k < n/2)$. 
\end{enumerate}

\paragraph{Games with a publicly-known \pmax-player.}\label{sec:mediator}
First, we consider a special class of hidden-role games which we call {\em mediated}. In a mediated game, there is a player, who we call the {\em mediator}, who is always assigned to team \pmax.  The task of the mediator is to coordinate the actions and information transfer of team \pmax. Our main result of this subsection is the following:

\begin{restatable}[Revelation Principle]{theorem}{threvelationprinciple}\label{th:revelation principle}
    Let $\Gamma^*$ be a mediated hidden-role game. Then, for $R \ge 2$ and $M \ge |H|$, there exists a coordinated private-communication equilibrium in which the players on \pmax have a TME profile in which, at every step, the following events happen in sequence:
    \begin{enumerate}
        \item every player on team \pmax sends its observation privately to the mediator,
        \item the mediator sends to every player (\pmax and \pmin) a recommended action, and 
        \item all players on team \pmax play their recommended actions.
    \end{enumerate}
\end{restatable}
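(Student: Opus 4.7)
The plan is to proceed by a simulation / revelation argument. Starting from any coordinated \pmax-team strategy $x^*$ in a larger extension $\gsplit(\gcommpriv^{M', R'}(\Gamma^*))$ that attains a value $v^*$ arbitrarily close to $\hrvaluepriv(\Gamma^*)$, the goal is to construct, for the specified $R \ge 2$ and $M \ge |H|$, a \pmax-team strategy $\hat x$ in $\gsplit(\gcommpriv^{M, R}(\Gamma^*))$ that has the direct structure of the theorem and attains value at least $v^*$. Since the TME value of the $(M,R)$-game is bounded above by $\hrvaluepriv(\Gamma^*)$, this will identify $\hat x$ as (an approximation of) a hidden-role equilibrium of the prescribed form. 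The choice of parameters is exactly what is needed: $M \ge |H|$ so that a single information state of $\Gamma^*$ can be encoded inside one message, and $R \ge 2$ so that one round suffices for all \pmax players to ``report'' to the mediator and a second round for the mediator to ``recommend'' an action to every player.

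Construction. Each non-mediator \pmax player plays the same trivial protocol: in round $1$, privately send the current observation to the mediator; in the action phase, play whatever action the mediator privately recommended in round $2$. The mediator maintains, in its head, a complete simulated run of $x^*$. At each game step, it collects a private report from every player (including reports from \pmin members who choose to impersonate \pmax), treats each report as if it were that player's true observation inside the simulation of $x^*$, advances the internal state of the simulation, and privately transmits to every player the action that $x^*$ would play for them on the current turn as a recommendation. It is crucial that the mediator answer \emph{every} player without regard to team allegiance: refusing to respond to a player would by itself leak team information.

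Value preservation. Fix any adversary strategy $\tilde y$ against $\hat x$ in the direct game and construct an adversary $y'$ against $x^*$ in the $(M', R')$-extension as follows. $y'$ internally replays the decisions $\tilde y$ would make in the direct game, but instead of routing its reports to a mediator, $y'$ emits on the communication channels of the original game exactly the messages that $\tilde y$'s impersonated-\pmax persona would have caused $x^*$ to produce. Because the mediator inside $\hat x$ performs the same input/output computation as the joint \pmax-team would under $x^*$, the joint distribution over terminal histories, and hence the expected utility, satisfies $u(\hat x, \tilde y) = u(x^*, y')$. Using $u(x^*, y') \ge \min_y u(x^*, y) = v^*$ and then minimizing over $\tilde y$ yields $\min_{\tilde y} u(\hat x, \tilde y) \ge v^*$, as desired.

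Main obstacle. The delicate step is making the translation $\tilde y \mapsto y'$ fully precise. Private communication publicly exposes the \emph{metadata} of who sent a private message to whom: under $\hat x$ this metadata is a fixed deterministic star-pattern (every player to mediator, mediator to every player), whereas under $x^*$ it can be arbitrary and potentially strategically meaningful. The resolution is twofold. First, any metadata that $x^*$ would have emitted on the wire can be reproduced inside the mediator's internal simulation without being emitted, because from the mediator's perspective the simulated communication is purely local; the uniform star-pattern actually observed by $\tilde y$ is a constant function of the game state and thus reveals nothing beyond public information. Second, $\tilde y$ might in principle send private messages to non-mediator \pmax players, but since those players ignore everything other than the mediator's recommendation, such messages are without loss of generality absent in $\tilde y$ and therefore need no counterpart in $y'$. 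Establishing these two points carefully yields the outcome-preserving map and completes the proof.
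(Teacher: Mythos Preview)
Your approach is essentially the same as the paper's: both start from an arbitrary \pmax-team profile $x^*$ in some communication extension, move all inter-player communication into the mediator's head via simulation, and then show that any adversary $\tilde y$ against the direct-form profile $\hat x$ can be translated into an adversary $y'$ against $x^*$ achieving the same outcome distribution. The paper makes this translation concrete by having each \pmin-player maintain two simulators (one of $x_i^*$ fed with $\tilde y$'s fake report, one of $\tilde y$ itself fed with the action that the $x_i^*$-simulator outputs), which is exactly the mechanism underlying your phrase ``$\tilde y$'s impersonated-\pmax persona.''

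Where you differ is in being more explicit about two points the paper glosses over. First, you handle the parameters carefully: $x^*$ lives in a possibly much larger $(M',R')$-extension, whereas the direct-form $\hat x$ needs only $R\ge 2$ and $M\ge |H|$; combined with the supremum definition of $\hrvaluepriv$ this shows the value is \emph{achieved} at these small parameters. Second, you flag the private-metadata issue (the star pattern in $\hat x$ versus arbitrary patterns in $x^*$) and correctly resolve it by noting that the star pattern is a public constant and that messages to non-mediator \pmax-players are ignored. The paper's proof does not discuss this at all. One small loose end in your write-up: you conclude that $\hat x$ is ``an approximation of'' an equilibrium, but since the set of direct-form strategies in the fixed $(M,R)$-extension is compact, the supremum is attained exactly; spelling this out closes the argument.
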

 Players on team \pmin, of course, can (and will) lie or deviate from recommendations as they wish. The above revelation principle imples the following algorithmic result:

\begin{restatable}{theorem}{corzerosumefg}\label{cor:zero-sum-efg}
Let $\Gamma^*$ be a mediated hidden-role game, $R \ge 2$, and $M \ge |H|$. An (exact) coordinated private-communication hidden-role equilibrium of $\Gamma^*$ can be computed by solving an extensive-form zero-sum game $\Gamma_0$ with at most $|H|^{k+1}$ nodes, where $H$ is the history set of $\Gamma^*$.
\end{restatable}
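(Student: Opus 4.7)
The plan is to take the Revelation Principle (\Cref{th:revelation principle}) as a black box and explicitly construct the required two-player zero-sum game $\Gamma_0$. After invoking the revelation principle, we may restrict attention to \pmax-strategies in which every \pmax-player reports its true observation to the mediator and then plays whichever action the mediator recommends. The strategic object on the \pmax-side thus collapses to a single function: the mediator's recommendation rule, mapping received reports to action recommendations. On the \pmin-side, the coordinated adversary controls both the actual actions of the \pmin-players and what these players claim as their observations in their reports to the mediator.

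I would then build $\Gamma_0$ as a two-player zero-sum extensive-form game whose MAX-player is the mediator and whose MIN-player is the coordinated adversary. A node of $\Gamma_0$ is indexed by a tuple $(h, h_1, \dots, h_k)$, where $h \in H$ is the true history reached so far in $\Gamma^*$ and each $h_i \in H$ is the alternative history that \pmin-player $i$ is currently claiming through its reports: a history of $\Gamma^*$ in which $i$ is on team \pmax and whose sequence of observations for $i$ matches what the adversary has told the mediator so far. Within each simulated game step, the adversary first extends each $h_i$ by choosing the next claimed observation; the mediator, whose only observation is the vector of reports---equal to $o_j(h)$ for each \pmax-player $j$ and to $o_i(h_i)$ for each \pmin-player $i$---issues action recommendations for every player; the adversary picks the actual actions of the \pmin-players; the true history $h$ advances using recommendations for \pmax-players and adversary-chosen actions for \pmin-players, while each $h_i$ advances consistently with the recommendation just issued to player $i$. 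Terminal utility is $u(h)$, evaluated on the true history.

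The correctness step is to show that the value of $\Gamma_0$ equals $\hrvaluepriv(\Gamma^*)$: any mediator rule in $\Gamma_0$ lifts to a \pmax-strategy in $\gsplit(\gcommpriv^{M,R}(\Gamma^*))$ of the revelation-principle form, and conversely the revelation principle guarantees that every (approximately) optimal \pmax-strategy can be brought into this form. Adversary strategies correspond analogously, since any sequence of reports the adversary might send is representable by behaviorally choosing the $h_i$. The node bound then reads off the state description: one factor of $|H|$ for the true history and one for each of the $k$ fake histories gives the advertised $|H|^{k+1}$, with the per-step report/recommend/act substeps folded into the state-transition structure rather than contributing additional nodes. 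Finally, $\Gamma_0$ being a two-player zero-sum extensive-form game, it is exactly solvable by the standard sequence-form linear program~\cite{Koller94:Fast,Stengel96:Efficient} in time polynomial in its size.

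The main obstacle is making the identification between ``arbitrary adversary behavior'' and ``choice of fake histories $h_i$'' fully rigorous. We must show that any sequence of reports the adversary can plausibly send---i.e., those not immediately exposed as originating from a \pmin-player---is realized by some history $h_i \in H$ in which $i$ is \pmax, so that restricting the adversary to ``commit to a fake history'' loses nothing; the hypotheses $M \ge |H|$ (so the message alphabet can encode any observation) and $R \ge 2$ (so within each game step there is room to report, receive a recommendation, and act) are used here. A secondary subtlety is scheduling the sub-steps of each simulated game step so that no player can condition on information it is not supposed to possess yet, which matters both for the revelation principle to apply cleanly and for the $|H|^{k+1}$ node bound to go through without extra blow-up.
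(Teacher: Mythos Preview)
Your proposal is correct and follows essentially the same approach as the paper: invoke the revelation principle, then build $\Gamma_0$ as a two-player zero-sum game whose state records the true history together with, for each \pmin-player, the ``fake'' information that player has reported so far, yielding the $|H|^{k+1}$ bound. The only cosmetic difference is that the paper indexes the adversary's claimed state by {\em sequences} $s_i$ (alternating observations and recommended actions for player $i$) rather than full histories $h_i$; sequences are the tighter abstraction since they record exactly what the mediator sees, and they avoid the awkwardness of deciding how a full fake history $h_i$ should advance at coordinates other than $i$, but either representation gives the same size bound.
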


Proofs of \Cref{th:revelation principle,cor:zero-sum-efg} are deferred to \Cref{sec:proofs}.

We give a sketch of how the two-player zero-sum game is structured. \Cref{th:revelation principle} allows us to simplify the game by fixing the actions of all players on team \pmax, leaving two strategic players, the mediator and the adversary. Any node from the original game is expanded into three levels:
\begin{enumerate}
    \item the adversary picks messages on behalf of all \pmin-players to send to the mediator,
    \item the mediator picks recommended actions to send to all players, and
    \item the adversary acts on behalf of all \pmin-players.
\end{enumerate}
The key to proving \Cref{cor:zero-sum-efg} is that, in the first step above, the adversary's message space is not too large. Indeed, any message sent by the adversary must be a message that {\em could have plausibly been sent by a \pmax-player}: otherwise the mediator could automatically infer that the sender must be the adversary. It is therefore possible to exclude all other messages from the game since they belong to dominated strategies. Carefully counting the number of such messages would complete the proof.

It is crucial in the above argument that the \pmin-team is coordinated; indeed, otherwise, it would not be valid to model the \pmin-team as a single adversary in $\Gamma_0$. For more elaboration on the case where the \pmin-team is not coordinated, we refer the reader to \Cref{sec:avalon:split-personality}.

In practice, zero-sum extensive-form games can be solved very efficiently in the tabular setting with linear programming~\cite{Koller94:Fast}, or algorithms from the counterfactual regret minimization (CFR) family~\cite{Brown19:Solving,Farina21:Faster,Zinkevich07:Regret}. Thus,  \Cref{cor:zero-sum-efg} gives an efficient algorithm for solving hidden-role games with a mediator.

\paragraph{Simulating mediators with multi-party computation}
In this section, we show that the previous result essentially generalizes (up to exponentially-small error) to games {\em without} a mediator, so long as the \pmin team is also a minority, that is, $k < n/2$. Informally, the main result of this subsection states that, when private communication is allowed, one can efficiently {\em simulate} the existence of a mediator using secure multi-party computation (MPC), and therefore team \pmax can achieve the same value. The form of secure MPC that we use is {\em information-theoretically} secure; that is, it is secure even against computationally-unbounded adversaries.

\begin{theorem}[Main theorem]\label{thm:main}
    Let $\Gamma$ be a hidden-role game with $k < n/2$. Then $\hrvalue_{\textsf{\textup{priv}}}(\Gamma) = \hrvalue_{\textsf{\textup{priv}}}(\Gamma^*)$,
    where $\Gamma^*$ is $\Gamma$ with a mediator added, and moreover this value can be computed in $|H|^{O(k)}$ time by solving a zero-sum game of that size. Moreover, an $\eps$-hidden-role equilibrium with private communication and $\log M = R = \polylog(|H|, 1/\eps)$ can be computed and executed by the \pmax-players in time $\poly(|H|^k, \log(1/\eps))$. 
\end{theorem}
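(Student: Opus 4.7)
The plan is to combine the revelation principle of \Cref{th:revelation principle} (and its algorithmic counterpart \Cref{cor:zero-sum-efg}) with an information-theoretically secure multi-party computation (MPC) protocol that simulates the mediator. The easy inequality $\hrvaluepriv(\Gamma) \le \hrvaluepriv(\Gamma^*)$ is essentially immediate: any \pmax-strategy in $\Gamma$ lifts to $\Gamma^*$ by letting the mediator remain silent, so adding a trusted \pmax-player never hurts \pmax. Combined with \Cref{cor:zero-sum-efg} applied to $\Gamma^*$, this already delivers the $|H|^{O(k)}$ computation of the value once the other direction is established; what remains is the reverse inequality, together with the bounds on $M$, $R$, and the runtime of constructing a strategy that attains it.

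For the reverse direction I would show that every \pmax-strategy in $\Gamma^*$ can be implemented in $\Gamma$ up to arbitrarily small additive error. Since $k < n/2$ and our channels are synchronous and authenticated with both pairwise private channels and a public broadcast channel, the classical results of \citet{Rabin89:Verifiable} and \citet{Beaver90:Multiparty} furnish an MPC protocol tolerating any coalition of $k$ malicious players and achieving \emph{correctness}, \emph{privacy}, and \emph{guaranteed output delivery} with failure probability $2^{-\Omega(\lambda)}$ in security parameter $\lambda$. Starting from the mediator policy guaranteed by \Cref{th:revelation principle} and extracted via \Cref{cor:zero-sum-efg}, I would have the $n$ real players jointly simulate each interaction with the mediator by invoking MPC on the following functionality: each player secret-shares its current observation, the protocol evaluates the mediator's (possibly randomized) recommendation on the shared state, and the resulting recommendations are delivered privately to each player. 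The mediator's internal state is carried across game steps in secret-shared form, so the simulation is stateful throughout the game tree.

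For the error analysis, pick $\lambda = O(\log(|H|/\eps))$ so that a union bound over the at most $|H|$ invocations yields a total failure probability of at most $\eps$; after rescaling utilities into $[0,1]$ as in \Cref{foot:rescale}, each failure costs \pmax at most $1$, so the value achieved in $\Gamma$ is within $\eps$ of $\hrvaluepriv(\Gamma^*)$. The MPC protocol uses $\polylog(|H|, 1/\eps)$ rounds of messages, each of $\polylog(|H|, 1/\eps)$ bits, per game step of $\Gamma$, which fits inside a single step of the $(M, R)$-communication extension with $\log M = R = \polylog(|H|, 1/\eps)$. Taking the supremum over $M$ and $R$ in \Cref{eq:limit} (equivalently, letting $\eps \to 0$) collapses the inequality to equality. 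The \pmax-strategy ``solve $\Gamma_0$ by \Cref{cor:zero-sum-efg} to obtain the mediator policy, then execute the MPC protocol each round'' is constructible and executable in time $\poly(|H|^k, \log(1/\eps))$, since solving $\Gamma_0$ dominates the per-step MPC cost of $\poly(n, \lambda)$.

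The main obstacle I anticipate is making the ``simulation preserves value'' step fully rigorous. Standard MPC security is phrased as indistinguishability of real and ideal executions with respect to an external environment, whereas our theorem is a statement about the max-min value of an extensive-form game in which the \pmin-adversary sees \pmax's strategy and best-responds. One must verify that for \emph{every} coordinated \pmin-strategy in the MPC-simulated game, the value for \pmax is at least the value against the ideal mediator in $\Gamma^*$ minus the MPC error. This should follow from the standard ideal/real simulator argument---the MPC simulator converts any real-world adversary into an ideal-world adversary achieving essentially the same expected utility---but care is required because the functionality is stateful, the MPC modules must be composed sequentially within a larger extensive-form game tree, and the simulator's output must be translated into an actual strategy in $\Gamma^*$ (not merely into a view) so that the resulting utilities can be meaningfully compared.
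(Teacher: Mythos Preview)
Your proposal is correct and follows essentially the same approach as the paper: both argue the easy direction $\hrvaluepriv(\Gamma) \le \hrvaluepriv(\Gamma^*)$ by having the mediator stay silent, then establish the reverse by simulating the mediator via information-theoretic MPC (citing \citet{Rabin89:Verifiable,Beaver90:Multiparty}), with the mediator's internal state secret-shared across the $n$ players and reconstructed-updated-reshared at each game step, and a union bound over at most $|H|$ invocations with $\lambda = \Theta(\log(|H|/\eps))$ to control the cumulative failure probability. The paper makes the state-carrying mechanism slightly more explicit by naming verifiable secret sharing (VSS) as a separate primitive and writing down the composed function $\hat f = {\sf Share}\circ f\circ{\sf Reconstruct}$, but this is exactly what you are describing when you say the mediator's state is ``carried across game steps in secret-shared form.''
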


The proof uses MPC to simulate the mediator and then executes the equilibrium given by \Cref{cor:zero-sum-efg}. The proof of \Cref{thm:main}, as well as requisite background on multi-party computation, are deferred to \Cref{sec:mpc}.
We emphasize that \Cref{thm:main,cor:zero-sum-efg} are useful not only for algorithmically computing an equilibrium, but also for manual analysis of games: instead of analyzing the infinite design space of possible messaging protocols, it suffices to analyze the finite zero-sum game $\Gamma_0$. Our experiments on {\em Avalon} use both manual analysis and computational equilibrium finding algorithms to solve instances.

\paragraph{Comparison with communication equilibria.} 
As mentioned in \Cref{sec:compute-contrib}, our construction simulating a mediator bears resemblance to the construction used to define {\em communication equilibria}~\cite{Forges86:Approach,Myerson86:Multistage}. At a high level, a communication equilibrium of a game $\Gamma$ is a Nash equilibrium of $\Gamma$ augmented with a mediator that is playing according to some fixed strategy $\mu$. Indeed, when team \pmin has only one player, it turns out that the two notions coincide:
\begin{restatable}{theorem}{thComm}\label{th:comm}
    Let $\Gamma$ be a hidden-role game with $k =1$. Then $\hrvalue_{\textsf{\textup{priv}}}(\Gamma)$ is exactly the value for \pmax of the \pmax-optimal communication equilibrium of $\Gamma$.
\end{restatable}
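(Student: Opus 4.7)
The plan is to prove $\hrvalue_{\textsf{\textup{priv}}}(\Gamma) = V^{\mathrm{CE}}$ by showing both inequalities, where $V^{\mathrm{CE}}$ denotes the value of the \pmax-optimal communication equilibrium of $\Gamma$.

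For $\hrvalue_{\textsf{\textup{priv}}}(\Gamma) \geq V^{\mathrm{CE}}$, I would start from an optimal CE of $\Gamma$ with mediator strategy $\mu^{\mathrm{CE}}$ and value $V^{\mathrm{CE}}$. The CE's incentive-compatibility constraints guarantee that \pmax players are willing to report truthfully and follow recommendations, while the single \pmin player already best-responds to $\mu^{\mathrm{CE}}$. Since $k = 1 < n/2$ whenever $n \geq 3$ (and the degenerate $n = 2$ case reduces to a two-player zero-sum game in which the CE, Nash, and HRE values trivially coincide), the MPC construction underlying \Cref{thm:main} allows the \pmax players to simulate $\mu^{\mathrm{CE}}$ over the private channels of $\gcommpriv(\Gamma)$. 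The resulting \pmax strategy has expected value at least $V^{\mathrm{CE}}$ against every \pmin response, giving the desired inequality.

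For $\hrvalue_{\textsf{\textup{priv}}}(\Gamma) \leq V^{\mathrm{CE}}$, I would invoke \Cref{thm:main} together with \Cref{cor:zero-sum-efg}: the HRE value equals the value of the two-player zero-sum game $\Gamma_0$ between the mediator and the single \pmin adversary. Let $(\mu^*, y^*)$ be a saddle point of $\Gamma_0$, so $V^{\mathrm{HRE}} := u(\mu^*, y^*)$. The plan is to show that the direct revelation mechanism induced by $\mu^*$---\pmax players honestly report their observations and play the received recommendation, \pmin plays $y^*$---is itself a communication equilibrium of $\Gamma$ with value $V^{\mathrm{HRE}}$. \pmin IC is immediate from the saddle-point property: $y^*$ is a best response to $\mu^*$, so the single \pmin player cannot gain by misreporting their type or disobeying their recommendation. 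For \pmax IC, fix a player $j$ and a candidate deviation $\sigma_j$ that $j$ would execute when they are \pmax (keeping the prescribed play $y^*_j$ when they are \pmin); the key step is to absorb $\sigma_j$ into a modified mediator strategy $\mu' = \mu^* \oplus \sigma_j$ that still belongs to the mediator's strategy space of $\Gamma_0$. Saddle-point optimality of $\mu^*$ against $y^*$ then gives $u(\mu', y^*) \leq u(\mu^*, y^*)$, which, after bookkeeping over the different team assignments, one reads off as saying that switching from truth-and-follow to $\sigma_j$ cannot strictly improve the expected team \pmax utility conditional on $j$ being \pmax---that is, \pmax IC.

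The hard part will be precisely this final translation step. Absorbing $\sigma_j$ into the mediator perturbs $\mu'$'s behavior not only in the $j$-\pmax scenarios that $\sigma_j$ is designed to affect, but also in the $j$-\pmin scenarios, where $\mu'$ pre-processes $j$'s (possibly fake) report and post-processes $j$'s recommendation. The bare inequality $u(\mu', y^*) \leq u(\mu^*, y^*)$ thus mixes the \pmax-IC inequality we want with an unwanted contribution from the $j$-\pmin scenarios. To clean this up, I would argue that when the \pmin player happens to be $j$, they can counteract the absorption by adjusting the report they send to the mediator and the way they interpret the post-processed recommendation they receive, producing a \pmin strategy $y^{**}$ (agreeing with $y^*$ on identities other than $j$) under which the $j$-\pmin outcomes of $(\mu', y^{**})$ coincide with those of $(\mu^*, y^*)$; combining this identification with $y^*$ being \pmin's best response to $\mu^*$ (so $u(\mu^*, y^{**}) \geq u(\mu^*, y^*)$) and the saddle-point inequality isolates the effect of $\sigma_j$ on the $j$-\pmax scenarios and delivers \pmax IC, completing the proof that $(\mu^*, y^*)$ is a CE with value $V^{\mathrm{HRE}}$ and therefore that $V^{\mathrm{CE}} \geq V^{\mathrm{HRE}}$.
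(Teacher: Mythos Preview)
Your first direction ($\hrvaluepriv(\Gamma)\ge V^{\mathrm{CE}}$) is fine and is essentially the paper's argument that the communication-equilibrium program is the $\Gamma_0$ program with an extra \pmax-IC constraint (so ${\sf CommVal}(\Gamma)\le {\sf Val}(\Gamma_0)=\hrvaluepriv(\Gamma)$), just phrased via MPC simulation rather than constraint dropping.

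The gap is in your second direction, in the verification of \pmax-IC for $\mu^*$. Two problems arise.
\begin{enumerate}
\item The strategy $y^{**}$ you need may not exist. You want the adversary, when it is player $j$, to ``undo'' the absorption so that the $j$-\pmin outcomes of $(\mu',y^{**})$ match those of $(\mu^*,y^*)$. But $\mu'$ pre-composes $j$'s report with $\sigma_j^{\mathrm{rep}}$ before feeding it to $\mu^*$; to recover the same internal behaviour of $\mu^*$ (and hence the same recommendations to all other players) the adversary would have to send a report $r$ with $\sigma_j^{\mathrm{rep}}(r)=y^*_j(\text{obs})$, which fails whenever $\sigma_j^{\mathrm{rep}}$ is not surjective.
\item Even granting such a $y^{**}$, the inequalities you list do not close. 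Writing $u=\sum_i p_i u_i$ and using absorption on the $i\ne j$ terms, your identification gives $u(\mu',y^{**})=u(\mu^*,\sigma_j,y^*)$. The saddle-point furnishes $u(\mu',y^*)\le u(\mu^*,y^*)$ and $u(\mu^*,y^{**})\ge u(\mu^*,y^*)$. Neither bounds $u(\mu',y^{**})$ from above by $u(\mu^*,y^*)$: the first concerns $y^*$, not $y^{**}$; the second concerns $\mu^*$, not $\mu'$. So the chain does not deliver $u(\mu^*,\sigma_j,y^*)\le u(\mu^*,y^*)$.
\end{enumerate}

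The paper avoids this by not verifying \pmax-IC directly. It instead invokes \citet{vonStengel97:Team}: when the opposing team consists of a single player, every TME of $\gsplit(\Gamma^*)$ extends to a \pmax-team-optimal Nash equilibrium $(\mu,x_1,\dots,x_n,y)$, and such a profile is by definition a communication equilibrium, yielding $\hrvaluepriv(\Gamma)\le {\sf CommVal}(\Gamma)$ immediately.
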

However, in the more general case where \pmin can have more than one player, \Cref{th:comm} does not apply: in that case, communication equilibria include all Nash equilibria in particular, and therefore fail to enforce {\em joint} optimality of the \pmin-team, so our concepts and methods are more suitable. The proof is deferred to \Cref{app:comm}.

\subsection{Computing No/Public-Communication Equilibria}\label{sec:no/public-comm}

In this section, we consider games with no communication or with public-communication and a coordinated minority. Conversely to the private-communication case of \Cref{sec:one-sided}, in this case the problem of computing the value of a hidden-role equilibrium is in general \NP-hard.

For the remainder of this section, when discussing the problem of ``computing the value of a game'', we always mean the following promise problem: given a game, a threshold $v$, and an allowable error $\eps > 0$ (both expressed as rational numbers), decide whether the hidden-role value of $\Gamma$ is $\ge v$ or $\le v-\eps$. 

\begin{restatable}{theorem}{publichard}\label{thm:public-hard}
    Even in $2$-vs-$1$ games with public roles and $\eps = 1/\poly(|H|)$, computing the TME value (and hence also the hidden-role value, since adversarial team games are a special case of hidden-role games) with public communication is \NP-hard.
\end{restatable}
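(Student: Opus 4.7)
The plan is to prove \NP-hardness in two steps: (1) invoke the known hardness of computing the TME value of a 2-vs-1 normal-form adversarial team game without communication, and (2) show that adding a public-communication extension does not change the TME value in such a game, so the hardness transfers. For step (1), one can use any of the standard reductions from 3-SAT to TME in adversarial team games \cite{vonStengel97:Team}. Concretely, given a 3-CNF formula $\phi$, such a reduction builds a 2-vs-1 normal-form game $G_\phi$ with public roles in which the adversary picks a clause and each of the two team members independently picks a (variable, truth value) pair, with the team earning unit utility if and only if both members agree on the same literal and that literal satisfies the adversary's clause. The product-distribution constraint on the team exactly captures the lack of a correlation device, and a careful choice of scaling ensures an additive gap of $1/\poly(|\phi|)$ between the satisfiable and unsatisfiable cases.

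For step (2), the key observation is that public communication among players who share a fully public game state amounts to nothing more than generating shared public randomness. Any \pmax-strategy in $\gcommpub^{M, R}(G_\phi)$ is characterized by a distribution $\pi$ over public transcripts $r$ together with conditional independent strategies $(x_1(r), x_2(r))$ for the two team members. In the zero-sum setting, the adversary best-responds to each realization of $r$ independently, so the team's value is
\[
\E_{r \sim \pi}\bigl[\min_y u(x_1(r), x_2(r), y)\bigr] \;\le\; \max_{x_1, x_2} \min_y u(x_1, x_2, y),
\]
which is exactly the TME value of $G_\phi$ without communication; the reverse inequality is trivial since the team can always ignore the communication channel. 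This collapse crucially relies on $G_\phi$ having no private mid-game observations that team members could usefully share through communication, which is why we insist on $G_\phi$ being normal-form.

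Combining the two steps, approximating the TME value of $\gcommpub^{M, R}(G_\phi)$ to additive precision $1/\poly(|H|)$ is \NP-hard; since every adversarial team game is trivially a hidden-role game with a degenerate role distribution concentrated on a single public team assignment, the hardness extends to the hidden-role value as well. The main obstacle in the plan is obtaining the inverse-polynomial gap in step (1): one must verify that no product-distribution strategy over the team members can ``fake'' a correlated satisfying witness to more than a negligible extent in the unsatisfiable case. This is handled by payoff rescaling and standard gap-amplification arguments from the team-games hardness literature, and does not introduce any mid-game observations that would spoil the argument of step (2).
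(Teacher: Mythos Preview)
Your approach is correct and takes a genuinely different route from the paper's proof. The paper reduces from \textsc{Max-Cut} and constructs an \emph{extensive-form} game in which each \pmax-player receives a private observation (a vertex of the graph); to neutralize the public channel, it introduces a \emph{non-leakage gadget} whereby the adversary attempts to guess the team's private observations from the transcript and the team is heavily penalized whenever the guess is better than random, making any informative public signal strictly unprofitable. You instead sidestep the need for any such gadget by choosing a \emph{normal-form} base game in which the team has nothing private to communicate in the first place; public communication then collapses to a public correlation device, which is worthless against an adversary who observes the transcript and best-responds pointwise, so the known normal-form TME hardness of \citet{vonStengel97:Team} transfers unchanged. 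Your route is shorter and more modular, at the cost of leaning on an off-the-shelf result (and, as you acknowledge, requiring care to extract an inverse-polynomial gap from that reduction). The paper's route is self-contained, and its non-leakage gadget is not wasted effort: essentially the same mechanism is reused in the subsequent hardness proofs for the uncoordinated and public-communication settings, where the team \emph{does} hold private information and a purely normal-form argument would not suffice.

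One minor caution on your Step~(1): the concrete payoff you describe (team earns $1$ iff both members name the same literal and it satisfies the adversary's clause) is not quite the von Stengel--Koller construction and does not by itself deliver the required gap---the actual reduction also gives the adversary variable-indexed actions that punish inter-member inconsistency, which is what forces the team onto a near-deterministic assignment. This does not affect the soundness of your plan, since you explicitly defer the gap argument to the literature, but you should use the genuine von Stengel--Koller game rather than the simplified version if you flesh this out. Also, in your Step~(2) display, note that the transcript distribution $\pi$ depends on \pmin's messaging strategy as well; the inequality still goes through because it holds for \emph{every} $\pi$, but the phrasing ``a \pmax-strategy is characterized by a distribution $\pi$'' is slightly loose.
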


Since there is only one \pmin-player in the above reduction, the result applies regardless of whether the adversary is coordinated.

\begin{table}
    \centering
    \definecolor{newcolor}{RGB}{192,255,192}
\scalebox{0.9}{
\def\arraystretch{1.1}
 \begin{tabular}{c|c|c|c}
         & \multicolumn{3}{c}{\bf Communication Type}\\
     \bf Adversary Team \bf Assumptions  & \bf None & \bf Public & \bf Private\\\hline
      \bf  Coordinated, Minority & \multirow{2}{*}{\makecell{$\NP$-complete \\\cite{vonStengel97:Team}}} & \cellcolor{newcolor} \NP-hard & \cellcolor{newcolor} \P~[Thm.~\ref{thm:main}]\ \\\cline{1-1} \cline{4-4}
      \bf  Coordinated &   & \cellcolor{newcolor} [Thm. \ref{thm:public-hard}] & {\em open problem}\\\hline
         \bf Minority & \cellcolor{newcolor} $\Sigma_2^\P$-complete & \cellcolor{newcolor} $\Sigma_2^\P$-hard & \cellcolor{newcolor} \coNP-hard  \\    \cline{1-1}
      \bf None & \cellcolor{newcolor} [Thm. \ref{thm:sym-complexity}] and \cite{Zhang22:Team_DAG} & \cellcolor{newcolor} [Thm. \ref{thm:sym-complexity}] & \cellcolor{newcolor} [Thm. \ref{thm:sym-complexity}] \\
         \hline
    \end{tabular}
}
\captionof{table}{Complexity results for computing hidden-role value with a constant number of players, for various assumptions about the adversary team and notions of communication. The results shaded in green are new to our paper. 
}
\label{tab:complexity-results}
\end{table}

\begin{theorem} \label{thm:sym-complexity}
Even with a constant number of players, a minority adversary team, and $\eps = 1/\poly(|H|)$, computing the uncoordinated value of a hidden-role game is \coNP-hard with private communication and $\Sigma_2^\P$-hard with public communication or no communication.
\end{theorem}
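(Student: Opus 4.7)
I would prove the three hardness claims separately, each by a tailored reduction, handling the no-communication case first, then the private-communication case, and finally composing them for the public-communication case.

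\textbf{Part 1 (no communication, $\Sigma_2^\P$-hard).} Observe that, directly from the definitions, with no communication the uncoordinated value satisfies $\symhrvalue_{\textsf{none}}(\Gamma) = \tmevalue(\gsymsplit(\Gamma))$, since there is nothing to take a supremum over in $M, R$. I would reduce from the $\Sigma_2^\P$-complete problem of computing the TME value in an extensive-form adversarial team game \cite{Zhang22:Team_DAG}. Given such an ATG $G$, I would construct a hidden-role game $\Gamma$ with a trivial (essentially deterministic) team assignment distribution $\mc D$ whose uncoordinated split-personality form reproduces $G$ up to strategically irrelevant modifications. To meet the constant-players and minority-adversary requirements I would, if necessary, compress strategic agents into a constant number via batching and pad the \pmax-side with inert dummy players so that $k < n/2$.

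\textbf{Part 2 (private communication, \coNP-hard).} Reduce from \textsc{Unsat} on a 3CNF formula $\phi$ over $n$ variables. I would build a game with a constant number of players in which: chance picks a team assignment; the \pmin-team must act in a way that, for the game to pay off to \pmin, requires each individual \pmin-player's action to commit to a piece of a purported satisfying assignment, while \pmax's winning strategy amounts to ``guessing'' a clause and checking consistency. The point is to embed the TME structure so that the \pmin minimization reduces to ``for every assignment, some clause fails'', i.e., \textsc{Unsat}. The main obstacle (and crux of this part) is showing that private communication does not help \pmin circumvent the independent-strategy constraint of TME: although \pmin-players can exchange private messages, the existence (though not content) of every message is publicly observable, so \pmax's strategy can be designed to punish any communication pattern whose presence is inconsistent with a fully \pmax world, forcing \pmin into independent mixing. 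A ``cover-traffic'' analysis then shows that, in the promise gap, any achievable gain from shared randomness is asymptotically negligible.

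\textbf{Part 3 (public communication, $\Sigma_2^\P$-hard).} I would compose the reductions of Theorem \ref{thm:public-hard} (\NP-hardness with public communication, even for a coordinated adversary) with the \coNP-hardness reduction of Part 2. Concretely, build a hidden-role game that at the root lets chance flip a fair coin between two independent subgames: the first subgame is an instance of the Theorem \ref{thm:public-hard} reduction on a formula $\phi_1$; the second is the Part 2 reduction on a formula $\phi_2$. By standard arguments, the uncoordinated hidden-role value of the composite game with public communication lies above a suitable threshold iff $\phi_1$ is satisfiable \emph{and} $\phi_2$ is unsatisfiable, which is $\Sigma_2^\P$-complete. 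I would verify that the player count stays constant and that the \pmin-team remains a minority by reusing the same players across the two subgames and padding if necessary, and that the promise gap $\eps$ remains polynomial in $|H|$ since both component reductions already have polynomial gaps.

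\textbf{Anticipated main obstacle.} The most delicate step is Part 2: establishing that private communication does not let uncoordinated \pmin-players effectively simulate the coordinated adversary. This requires a careful ``detectability'' argument showing that any private-communication protocol \pmin could use to correlate actions leaves observable fingerprints that \pmax can exploit, preserving the \coNP structure of the reduction even under the $\sup_{M,R}$ in the definition of $\symhrvaluepriv$. The remaining parts then follow by relatively standard composition and padding techniques.
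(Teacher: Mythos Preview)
Your Part~3 contains a genuine error. Composing an \NP-hardness reduction and a \coNP-hardness reduction via a coin flip at the root yields a problem that is hard for $D^\P$ (the class of intersections of an \NP\ language with a \coNP\ language), not for $\Sigma_2^\P$. Since $D^\P \subseteq \Delta_2^\P \subseteq \Sigma_2^\P$ and these containments are believed to be strict, deciding ``$\phi_1$ is satisfiable and $\phi_2$ is unsatisfiable'' does not establish $\Sigma_2^\P$-hardness. The paper instead gives a \emph{direct} reduction from $\exists\forall$3-DNF-SAT: it builds a single 5-vs-4 hidden-role game in which \pmax-players must (consistently) commit to an $x$-assignment while \pmin-players commit to a $y$-assignment, with privacy gadgets on both sides that make the public channel useless. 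The genuine $\exists\forall$ alternation is encoded in a single game, not manufactured by gluing two separate reductions.

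Your Part~2 is headed in the right direction (a reduction from \textsc{Unsat} using the Koller--Megiddo gadget), but you have misidentified the crux. You worry about designing a \pmax strategy that ``punishes'' \pmin's private messages so that communication leaves detectable fingerprints. The paper's argument is much simpler: in the constructed game, \pmax-players take no game actions at all, so communication can only enlarge \pmin's effective strategy space and hence can only \emph{decrease} the TME value. Because the uncoordinated value is defined as $\sup_{M,R}$ of the TME value and \pmax is the party effectively choosing $M,R$ through that supremum, the supremum is attained at $R=0$, collapsing to the no-communication case. No detectability or cover-traffic analysis is needed.

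Your Part~1 (embedding an ATG into a hidden-role game with a degenerate role distribution and invoking the known $\Sigma_2^\P$-completeness of TME) is essentially what the paper does for the no-communication column, as indicated by the citation in \Cref{tab:complexity-results}.
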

Proofs of results in this section are deferred to \Cref{sec:complexity-proofs}. Intuitively, the proofs work by constructing gadgets that prohibit any useful communication, thus reducing to the case of no communication.

\section{Worked Example}\label{sec:example}
This section includes a worked example of value computation to illustrate the differences among the notions of equilibrium discussed in the paper and illustrates the utility of having a mediator for private communication.
Consider a $n$-player version of matching pennies $\matchingpennies{n}$ as defined in \Cref{par:matching-pennies}.

\begin{restatable}{proposition}{thmmatchingpennies}
\label{thm:matchingpennies}
Let $\matchingpennies{n}$ be the $n$-player matching pennies game.
\begin{enumerate}
    \item The TMECor and TMECom values of $\publicrole(\matchingpennies{n})$ are both $1/2$.
    \item Without communication or with only public communication, the value of $\matchingpennies{n}$ is $1/2^{n-1}$.
    \item With private communication, the value of $\matchingpennies{n}$ is $1/(n+1)$.
\end{enumerate}
\end{restatable}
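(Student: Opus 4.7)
The plan is to handle the three parts in turn; throughout I use that $k=1$ in $\matchingpennies{n}$, so the coordinated and uncoordinated split-personality variants coincide and Theorem~\ref{thm:main} is in force whenever $n\ge 3$. Part~1 is the simplest: with teams made public, TMECor/TMECom let the \pmax team jointly sample a single shared bit $b\in\{0,1\}$, against which the adversary can match with probability at most $1/2$. Conversely, for any \pmax distribution, letting $p_v$ be the probability that all \pmax-players play $v$, we have $p_0+p_1\le 1$, so the adversary picks the $v$ with $p_v\le 1/2$. Both values therefore equal $1/2$.

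For Part~2, the lower bound is obtained by every player independently playing a uniformly random bit, giving winning probability $2\cdot 2^{-n}=2^{-(n-1)}$. For the upper bound, fix any \pmax strategy profile in $\gsymsplit$ and consider the adversary deviation ``pretend to be \pmax'': the adversary plays the \pmax communication strategy verbatim, then chooses its final bit to minimize the winning probability. Since there are no private channels, each player's bit depends only on the public transcript $T$ and its own independent private randomness; the bits are therefore mutually independent given $T$. Letting $q_j=\Pr[b_j=0\mid T]$, the adversary (playing as $i$) can attain winning probability at most $\min\bigl\{\prod_{j\ne i}q_j,\ \prod_{j\ne i}(1-q_j)\bigr\}\le\sqrt{\prod_{j\ne i}q_j(1-q_j)}\le 2^{-(n-1)}$, using $\min(a,b)\le\sqrt{ab}$ and $q(1-q)\le 1/4$.

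For Part~3, Theorem~\ref{thm:main} reduces the computation to the mediated game $\matchingpennies{n}^*$, and Theorem~\ref{th:revelation principle} restricts attention to strategies in which every \pmax-player reports ``\pmax'' to the mediator and plays the recommended bit. The adversary's alternative of declaring ``\pmin'' truthfully is dominated, since the mediator could then recommend a uniformly random common bit and secure value $1/2>1/(n+1)$. In the all-\pmax-declaration subgame, a mediator strategy is a distribution $\pi$ on $\{0,1\}^n$, and only the $2(n+1)$ strings of Hamming weight in $\{0,1,n-1,n\}$ can contribute, as any other recommendation forces two \pmax-players to play different bits regardless of the adversary's identity. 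Writing $P_v=\pi(v^n)$ and $Q_v^i=\pi(v^{i-1}(1-v)v^{n-i})$, a direct calculation gives the team utility as
\[
V=\tfrac{1}{n}\sum_{i=1}^n\Bigl(\min\{P_0,Q_1^i\}+\min\{Q_0^i,P_1\}\Bigr),\qquad P_0+P_1+\sum_i(Q_0^i+Q_1^i)\le 1.
\]
Using $\sum_i\min\{P_0,Q_1^i\}\le\min\{nP_0,S_1\}$ with $S_v:=\sum_i Q_v^i$ (and symmetrically), we bound $V\le \tfrac{1}{n}(\min\{nP_0,S_1\}+\min\{S_0,nP_1\})$; setting $nP_0=S_1$ and $nP_1=S_0$ at the maximum yields $(n+1)(P_0+P_1)\le 1$, and hence $V\le 1/(n+1)$. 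The uniform distribution on the $2(n+1)$ special strings attains this, establishing the matching lower bound.

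The main obstacle I anticipate is the LP bookkeeping in Part~3: one must justify dropping the ``declare \pmin'' branch via strict suboptimality for the adversary, show that any optimal mediator distribution places no mass outside the $2(n+1)$ special strings, and verify that the LP relaxation to the variables $(P_v,Q_v^i)$ is tight at the proposed symmetric optimum. Parts~1 and~2 are routine once the ``pretend to be \pmax'' adversary deviation is identified.
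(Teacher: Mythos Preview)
Your proposal is correct. Parts~1 and~2 match the paper's approach essentially verbatim (the paper defers Part~1 to prior work and gives the same ``pretend to be \pmax'' argument for Part~2, with slightly less detail on the conditional-independence bound).

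For Part~3, you take a genuinely different route from the paper. The paper proves both bounds by exhibiting explicit strategies: for the lower bound, the same uniform distribution on the $2(n+1)$ near-constant strings that you identify; for the upper bound, it constructs an explicit adversary mixed strategy (play the recommended bit with probability $1/(n+1)$, otherwise flip it) and checks directly that no pure mediator bitstring beats $1/(n+1)$ against it. You instead reduce to the mediated game via Theorem~\ref{thm:main} and Theorem~\ref{th:revelation principle}, set up the mediator's problem as an LP in the variables $(P_v,Q_v^i)$, and bound its value analytically. Your LP argument is sound: the key inequality $\sum_i\min\{P_0,Q_1^i\}\le\min\{nP_0,S_1\}$ combined with the mass constraint gives $(a+b)(n+1)/n\le P_0+P_1+S_0+S_1\le 1$, hence $V\le 1/(n+1)$. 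The paper's approach has the advantage of producing an explicit equilibrium strategy for the adversary; yours is more systematic and makes clearer why the support on the $2(n+1)$ special strings is forced, and your upper-bound argument would generalize more readily to variants of the game. Your remark that the ``declare \pmin'' branch is dominated is correct and essentially reproves the pruning that is already baked into the construction of $\Gamma_0$ in Corollary~\ref{cor:zero-sum-efg}.
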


\begin{proof}
    The first claim, as well as the no-communication value, is known \cite{Basilico17:Team}.
    
    For the public-communication value, observe that, conditioned on the transcript, the bits chosen by the players must be mutually independent of each other. Thus, the adversary can do the following: pretend to be on team \pmax, wait for all communication to finish, and then play $0$ if the string of all ones is more conditionally likely than the string of all $1$s, and vice-versa\footnote{In general, computing the conditional probabilities could take exponential time, but when defining the notion of value here, we are assuming that players have unbounded computational resources. This argument not work for computationally-bounded adversaries. Indeed, if the adversary were computationally bounded, \pmax would be able to use cryptography to build private communication channels and thus implement a mediator, allowing our main positive result \Cref{thm:main} to apply.}. 

    It thus only remains to prove the third claim.

    ({\em Lower bound}) The players simulate a mediator using multi-party computation (see \Cref{thm:main,cor:zero-sum-efg}). Consider the following strategy for the mediator. Sample a string $b \in \{0, 1\}^n$ uniformly at random from the set of $2n+2$ strings that has at most one mismatched bit. Recommend to each player $i$ that they play $b_i$.

    Consider the perspective of the adversary. The adversary sees only a recommended bit $b_i$. Assume WLOG that $b_i = 0$. Then there are $n+1$ possibilities: 
    \begin{enumerate}
        \item $b$ is all zeros ($1$ way)
        \item All other bits of $b$ are $1$ ($1$ way)
        \item Exactly one other bit of $b$ is $1$ ($n-1$ ways).
    \end{enumerate}
    The adversary wins in the third case automatically (since the team has failed to coordinate), and, regardless of what the adversary does, it can win only one of the first two cases. Thus the adversary can win at most $n/(n+1)$ of the time, that is, this strategy achieves value $1/(n+1)$.

    ({\em Upper bound}) Consider the following adversary strategy. The adversary communicates as it would do if it were on team \pmax. Let $b_i$ be the bit that the adversary would play if it were on team \pmax. The adversary plays $b_i$ with probability $1/(n+1)$ and $1-b_i$ otherwise. We need only show that no pure strategy of the medaitor achieves value better than $1/(n+1)$ against this adversary. A strategy of the mediator is identified by a bitstring $b$. If $b$ is all zeros or all ones, the team wins if and only if the adversary plays $b_i$ (probability $1/(n+1)$). If $b$ has a single mismatched bit, the team wins if and only if the mismatched bit is the adversary (probability $1/n$) and the adversary flips $b_i$ (probability $n/(n+1)$). 
\end{proof}

\section{Properties of Hidden-role Equilibria}\label{sec:discussion}
In the following, we discuss interesting properties of hidden-role equilibria given the definition we provided in \Cref{sec:model-contrib}, and that make them fairly unique relative to other notions of equilibrium in team games.

\subsection{The Price of Hidden Roles}\label{sec:price}

One interesting question arising from hidden-role games is the {\em price} of having them. That is, how much value does \pmax lose because roles are hidden? In this section, we define this quantity and derive reasonably tight bounds on it.

\begin{definition}\label{def:pub-team}
The {\em public-team refinement} of an $n$-player hidden-role game $\Gamma$ is the adversarial team game $\publicrole(\Gamma)$ defined by starting with the (uncoordinated) split-personality game, and adding the condition that all team assignments $t_i$ are publicly observed by all players.
\end{definition}

\begin{definition}\label{def:price-hidden-roles}
    For a given hidden-role game $\Gamma$ in which \pmax is guaranteed a nonnegative value (\ie, $u_i(z) \ge 0$ whenever $i$ is on team \pmax), the {\em price of hidden roles} $\textsf{PoHR}(\Gamma)$ is the ratio between the TME value of $\publicrole(\Gamma)$ and the hidden-role value of $\gsymsplit(\Gamma)$.

    For a given class of hidden-role games $\mc G$, the price of hidden roles $\textsf{PoHR}(\mc G)$ is the supremum of the price of hidden roles across all games $\Gamma \in \mc G$.
\end{definition}

Call a hidden-role game {\em normalized} if the value of $\publicrole(\Gamma)$ is $1$, that is, the \pmax-team can win deterministically if the teams are publicly revealed. This assumption is typical in the example games that we consider; for example, it is easy to see that {\em Avalon} satisfies it.

\begin{theorem}\label{thm:price of hr}
    Let $D \in \Delta(\{\pmax, \pmax\}^n)$ be any distribution of teams assignments such that \pmax is always a strict majority. Let $\mathcal G_{n, D}$ be the class of all normalized hidden-role games with $n$ players and team assignment distribution $D$. Then the price of hidden roles of $\mathcal G_{n, D}$ is exactly the largest probability assigned to any team by $D$, that is,
    \begin{align}
        \mathsf{PoHR}(\mathcal G_{n, D}) = \max_{t \in \{\pmax, \pmin\}^n} \Pr_{t' \sim D} [t' = t].
    \end{align}
    The lower bound is achieved even in the presence of private communication.
\end{theorem}

\begin{proof}[Proof]
Let $t^*$ be the team to which $D$ assigns the highest probability, and let $p^*$ be that probability. Our goal is to show that the price of hidden roles is $1/p^*$.

({\em Upper bound}) Team \pmax assumes that the true \pmax-team is exactly the team $t^*$. Then \pmax gets  utility at most a factor of $1/p^*$ worse than the TME value of $\publicrole(\Gamma)$: if the assumption is correct, then \pmax gets the TME value; if the assumption is incorrect, \pmax gets value at least $0$ thanks to the condition on \pmax's utilities in \Cref{def:price-hidden-roles}.

 ({\em Lower bound}) Consider the following game $\Gamma$. Nature first selects a team assignment $t \sim D$ and each player privately observes its team assignment. Then, all players are simultaneously asked to announce what they believe the true team assignment is. The \pmax-team wins if every \pmax-player announces the true team assignment. If \pmax wins, \pmax gets utility $1$; otherwise \pmax gets utility $0$.

 Clearly, if teams are made public, \pmax wins easily. With teams not public, suppose that we add a mediator to the game so that \Cref{th:revelation principle} applies. This cannot decrease \pmax's value. The mediator's strategy amounts to selecting what team each player should announce. Mediator strategies in which different players announce different teams are dominated. The mediator strategy in which the mediator tells every player to announce team $t$ wins if and only if $t$ is the true team, which happens with probability at most $p^*$ (if $t = t^*$). Thus, even the game with a mediator added has value at most $p^*$, completing the proof.
\end{proof}
This implies immediately:
\begin{corollary}
    Let $\mathcal G_{n, k}$ be the class of all normalized hidden-role games where the number of players and adversaries are always exactly $n$ and $k$ respectively, with $k < n/2$. The price of hidden roles in $\mathcal G_{n, k}$ is exactly $\binom{n}{k}$.
\end{corollary}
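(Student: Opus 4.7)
The plan is to reduce this corollary directly to the preceding \Cref{thm:price of hr}, which identifies $\mathsf{PoHR}(\mathcal{G}_{n,D}) = 1/p^*(D)$, where $p^*(D) := \max_{t} \Pr_{t' \sim D}[t' = t]$. Since $\mathcal{G}_{n,k}$ is, by definition, the union $\bigcup_D \mathcal{G}_{n,D}$ as $D$ ranges over all probability distributions supported on the set $T_{n,k} := \{t \in \{\pmax,\pmin\}^n : |\{i : t_i = \pmin\}| = k\}$, I would expand the outer supremum and write
\[
\mathsf{PoHR}(\mathcal{G}_{n,k}) \;=\; \sup_{D} \mathsf{PoHR}(\mathcal{G}_{n,D}) \;=\; \sup_{D} \frac{1}{p^*(D)} \;=\; \frac{1}{\inf_{D} p^*(D)}.
\]

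The next step is to compute $\inf_{D} p^*(D)$ over distributions $D$ supported on $T_{n,k}$. Since $|T_{n,k}| = \binom{n}{k}$, an elementary averaging (pigeonhole) argument gives $p^*(D) \ge 1/\binom{n}{k}$ for every such $D$, with equality exactly when $D$ is uniform on $T_{n,k}$. Substituting into the previous display immediately yields $\mathsf{PoHR}(\mathcal{G}_{n,k}) = \binom{n}{k}$.

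Finally, I would confirm that this value is genuinely attained (so the ``supremum'' is really a maximum). The lower-bound construction in the proof of \Cref{thm:price of hr} -- the guess-the-team-assignment game in which \pmax wins iff every \pmax-player correctly announces the true assignment -- instantiated with $D$ uniform on $T_{n,k}$ produces a specific game in $\mathcal{G}_{n,k}$ whose price of hidden roles equals $\binom{n}{k}$, even with private communication. No genuine obstacle arises in the proof: all the heavy lifting (the upper-bound argument relying on the nonnegativity of \pmax's utilities, and the lower-bound game together with the revelation-principle-based argument that dominated mediator strategies may be ignored) was already performed in \Cref{thm:price of hr}, and the corollary is simply the observation that the worst-case team distribution over $k$-subsets of $[n]$ is the uniform one on $\binom{n}{k}$ atoms.
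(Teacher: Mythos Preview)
Your proposal is correct and is exactly the natural way to fill in the details that the paper leaves implicit (the paper states the corollary as an immediate consequence of \Cref{thm:price of hr} without further proof). Decomposing $\mathcal G_{n,k}=\bigcup_D\mathcal G_{n,D}$, applying \Cref{thm:price of hr} to each $D$, and then observing via averaging that $\inf_D p^*(D)=1/\binom{n}{k}$ with the uniform distribution attaining it is precisely the intended argument.
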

In particular, when $k = 1$, the price of hidden roles is at worst $n$. This is in sharp contrast to the {\em price of communication} and {\em price of correlation} in ATGs, both of which can be arbitrarily large even when $n=3$ and $k=1$~\cite{Basilico17:Team,Celli18:Computational}.

\subsection{Order of Commitment and Duality Gap}\label{sec:duality}

In \Cref{def:hre-value}, when choosing the TME as our solution concept and defining the split-personality game, we explicitly choose that \pmax should pick its strategy before \pmin---that is, the team committing to a strategy is the same one that has incomplete information about the roles. One may ask whether this choice is necessary or relevant: for example, what happens when the TME problem \eqref{eq:tme} satisfies the minimax theorem? Perhaps surprisingly, the answer to this question is that, at least with private communication, {\em the minimax theorem in hidden-role games only holds in ``trivial'' cases}, in particular, when the hidden-role game is equivalent to its public-role refinement (\Cref{def:pub-team}).

\begin{restatable}{proposition}{prsaddlepointzero}\label{pr:saddle-point-zero}
    Let $\Gamma$ be any hidden-role game. Define $\symhrvaluepriv'(\Gamma)$ identically to $\symhrvaluepriv(\Gamma)$, except that \pmin commits before \pmax---that is, in \eqref{eq:tme}, the maximization and minimization are flipped. Then $\symhrvaluepriv'(\Gamma)
    $ is equal to the TME value of $\publicrole(\Gamma)$ with communication---that is, the equilibrium value of the zero-sum game in which teams are public and intra-team communication is private and unlimited.
\end{restatable}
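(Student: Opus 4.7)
The plan is to establish both inequalities, writing $v^*$ for the TME value of $\publicrole(\Gamma)$ with unlimited private communication. For the upper bound $\symhrvaluepriv'(\Gamma) \le v^*$, I would observe that $\publicrole$ with private communication effectively collapses to a two-player zero-sum game (the public teams plus private intra-team channels let each team play any correlated strategy as a single entity), so the minimax theorem yields $v^* = \max_x\min_y u = \min_y\max_x u$. For each finite $(M,R)$, every \pmax-strategy in $\gsymsplit(\gcommpriv^{M,R}(\Gamma))$ lifts trivially to one in $\publicrole(\gcommpriv^{M,R}(\Gamma))$ by ignoring the extra team-assignment observations, so $\min_y\max_x u$ in $\gsymsplit$ is at most $\min_y\max_x u$ in $\publicrole$ at the same $(M,R)$; taking a sup in $(M,R)$ gives the bound.

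For the lower bound $\symhrvaluepriv'(\Gamma) \ge v^*$---the substantive direction---I would implement the \emph{passphrase} idea informally sketched in \Cref{sec:model-contrib}. Fix $\eps > 0$ and pick $(M_0, R_0)$ so that $\publicrole(\gcommpriv^{M_0,R_0}(\Gamma))$ attains value at least $v^* - \eps/2$; set $R = R_0 + 1$ and $M \ge 4n/\eps$. Given any committed \pmin-strategy $y$ in $\gsymsplit(\gcommpriv^{M,R}(\Gamma))$, the plan is to build a \pmax-strategy $x(y)$ in two phases. In phase 1 (identification), every $i^+$ deterministically broadcasts the same passphrase $m^* \in [M]$ in round $1$. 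Writing $\mu_j^{(t)}(m)$ for the probability under $y$ that player $j$'s round-$1$ public broadcast equals $m$ given team assignment $t$, the averaging
\begin{equation*}
    \sum_{m \in [M]}\sum_t \Pr[t]\!\!\!\sum_{j:t_j=\pmin}\!\!\mu_j^{(t)}(m) \ =\ \E_{t\sim \mc D}|\{j : t_j = \pmin\}|\ \le\ n
\end{equation*}
yields some $m^*$ with inner sum at most $n/M \le \eps/4$, so no \pmin-player broadcasts $m^*$ in round $1$ except with probability at most $\eps/4$. On this ``good'' event the round-$1$ broadcasts publicly reveal the team partition (the \pmax-team is precisely the players who broadcast $m^*$); in phase 2 (public play), the \pmax-players execute the optimal $\publicrole$-strategy for the $(M_0, R_0)$-extension in the remaining $R_0$ rounds (using any $M_0$-sized subalphabet of $[M]$), securing expected utility at least $v^* - \eps/2$ against any continuation of $y$. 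Bounding the bad-event payoff by $V := \max|u|$ gives $u(x,y) \ge (1-\eps/4)(v^* - \eps/2) - (\eps/4)V \ge v^* - O(\eps)$, and sending $\eps \to 0$ completes the bound.

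The main obstacle is the design and analysis of the identification phase: the passphrase must be unpredictable enough under $y$ that no \pmin-player collides with it by chance, yet fixed in the \pmax-profile so that the uncoordinated $i^+$'s all broadcast it without any prior shared randomness. The averaging argument above delivers existence of such an $m^*$, but it works only because in the $\min\max$ order \pmax chooses $m^*$ with full knowledge of $y$. In the reverse (max-min) order the trick immediately fails---\pmin can copy any pre-committed passphrase to blend in---which is precisely the asymmetry emphasized in \Cref{sec:model-contrib} and the reason a minimax theorem cannot hold for nontrivial hidden-role games.
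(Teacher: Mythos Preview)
Your proof is correct and takes the same approach as the paper: broadcast a passphrase in the first round that \pmin (having committed first) cannot anticipate, thereby publicly revealing the team assignment, and then play optimally in $\publicrole$. The paper's one-paragraph argument is terser---it says only ``let $s$ be a long random string'' and leaves implicit the averaging step you spell out to handle the fact that the uncoordinated $i^+$'s must all agree on the same deterministic passphrase without shared randomness; it also omits the upper-bound direction entirely, which you sketch.
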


\begin{proof}
It suffices to show that team \pmax can always cause the teams to be revealed publicly if \pmin commits first. Let $s$ be a long random string. All members of team \pmax broadcast $s$ publicly at the start of the game. Since \pmin commits first, \pmin cannot know or guess $s$ if it is sufficiently long; thus, with exponentially-good probability, this completely reveals the teams publicly. Then, using the private communication channels, team \pmax can play a TMECom of $\publicrole(\Gamma)$.
\end{proof}

Therefore, the choice of having \pmax commit to a strategy before \pmin is forced upon us: flipping the order of commitment would ruin the point of hidden-role games.

\section{Experimental Evaluation: {\em Avalon}}\label{sec:experiments}

\begin{table*}[t]
\setlength\tabcolsep{3mm}
\centering
 \scalebox{1}{
    \begin{tabular}{lrr}
        \bf Variant & \multicolumn{1}{r}{\bf 5 Players} & \multicolumn{1}{r}{\bf 6 Players}
    \\\toprule
       No special roles ({\em Resistance}) & 3 / 10 $=$ 0.3000\rlap{*} & 1 / 3 $\approx$ 0.3333\rlap{*}  \\
       Merlin   & 2 / 3 $\approx$ 0.6667\rlap{*} & 3 / 4 $=$ 0.7500\rlap{*} \\
       Merlin \& Mordred & 731 / 1782 $\approx$ 0.4102 & 6543 / 12464 $\approx$ 0.5250 \\
       Merlin \& 2 Mordreds & 5 / 18 $\approx$ 0.2778 & 99 / 340 $\approx$ 0.2912 \\
       Merlin, Mordred, Percival, Morgana & 67 / 120 $\approx$ 0.5583 & --- \\\bottomrule
    \end{tabular}}
    \captionof{table}{Exact equilibrium values  for 5- and 6-player {\em Avalon}. The values marked * were also manually derived by \citet{Christiano18:Solving}; we match their results. `---': too large to solve.}\label{tab:experiments}
\end{table*}

In this section, we apply \Cref{cor:zero-sum-efg} to instances of the popular hidden-role game {\em The Resistance: Avalon} (hereafter simply {\em Avalon}). We solve various versions of the game with up to six players. 

A game of {\em Avalon} proceeds, generically speaking, as follows. There are $n$ players, $\ceil{n/3}$ of which are randomly assigned to team \pmin and the rest to team \pmax. Team \pmin is informed. Some special rules allow players observe further information; for example, {\em Merlin} is a \pmax-player who observes the identity of the players on team \pmin, except the \pmin-player {\em Mordred}, and the \pmax-player {\em Percival} knows {\em Merlin} and {\em Morgana} (who is on team \pmin), but does not know which is which. Players are assigned roles by means of privately-dealt role cards. The role cards are treated as indistinguishable.\footnote{In a typical {\em Avalon} card deck, the \pmax ``generic'' (\ie, non-Merlin, non-Percival) role cards {\em are} actually distinguishable. As we have discussed in \Cref{sec:example-intro}, this can have an impact on optimal play. However, it is a strong convention at all levels of human play that players do not exploit the distinguishability of role cards. We therefore generic \pmax role cards as indistinguishable.}

The game proceeds in five rounds. In each round, a {\em leader} publicly selects a certain number of people (defined as a function of the number of players and current round number) to go on a {\em mission}. Players then publicly vote on whether to accept the leader's choice. If a strict majority vote to accept, the mission begins; otherwise, leadership goes to the player to the left. If four votes pass with no mission selected, there is no vote on the fifth mission (it automatically gets accepted). If a \pmin-player is sent on a mission, they have the chance to {\em fail} the mission.  The goal of \pmax is to have three missions pass. If {\em Merlin} is present, \pmin also wins by correctly guessing the identity of Merlin at the end of the game. {\em Avalon} is therefore parameterized by the number of players and the presence of the extra roles {\em Merlin, Mordred, Percival}, and {\em Morgana}.

{\em Avalon} is far too large to be written in memory: \citet{Serrino19:Finding} calculates that $5$-player {\em Avalon} has at least $10^{56}$ information sets. However, in {\em Avalon} with $\le 6$ players, many simplifications can be made to the zero-sum game given by \Cref{cor:zero-sum-efg} without changing the equilibrium. These are detailed in \Cref{sec:avalon}, but here we sketch one of them which has theoretical implications. Without loss of generality, in the zero-sum game in \Cref{cor:zero-sum-efg}, the mediator completely dictates the choice of missions by telling everyone to propose the same mission and vote to accept missions, and \pmin can do nothing to stop this. Therefore, team \pmin always has symmetric information in the game: they know each others' roles (at least when $n \le 6$), and the mediator's recommendations to the players may as well be public. Therefore, {\em Avalon} is already natively without loss of generality a game with a coordinated adversary in the sense of \Cref{sec:split-personality}, so the seemingly strong assumptions used in \Cref{def:split} are in fact appropriate in {\em Avalon}. Even after our simplifications, the games are fairly large, \emph{e.g.}, the largest instance we solve has 2.2 million infosets and 26 million terminal nodes.%

Our results are summarized in \Cref{tab:experiments}. Games were solved using a CPU compute cluster machine with 64 CPUs and 480 GB RAM, using two algorithms:
\begin{enumerate}
    \item A parallelized version of the PCFR+ algorithm~\cite{Farina21:Faster}, a scalable no-regret learning algorithm. PCFR+ was able to find an approximate equilibrium with exploitability $<10^{-3}$ in less than 10 minutes in the largest game instance, and was able to complete 10,000 iterations in under two hours for each game.
    \item An implementation of the simplex algorithm with exact (rational) precision, which was warmstarted using incrementally higher-precision solutions obtained from configurable finite-precision floating-point arithmetic implementation of the simplex algorithm, using an algorithm similar to that of~\citet{Farina18:Practical}. This method incurred significantly higher runtimes (in the order of hours to tens of hours), but had the advantage of computing \emph{exact} game values at equilibrium. 
\end{enumerate}
\Cref{tab:experiments} shows exact game values for the instances we solved.

\paragraph{Findings} 
We solve {\em Avalon} exactly in several instances with up to six players. In the simplest instances ({\em Resistance} or only Merlin), \citet{Christiano18:Solving} previously computed equilibrium values by hand. The fact that we match those results is positive evidence of the soundness of both our equilibrium concepts and our algorithms.

Curiously, as seen in \Cref{tab:experiments}, the game values are not ``nice'' fractions: this suggests to us that most of the equilibrium strategies will likely be inscrutable to humans. The simplest equilibrium not previously noted by Christiano, namely Merlin + 2 Mordreds with 5 players, is scrutable, and is analyzed in detail in \Cref{sec:appendix-5p2m-eqm}.

Also curiously, having Merlin and two Mordreds (\ie, having a Merlin that does not actually know anything) is not the same as having no Merlin. If it were, we would expect the value of Merlin and two Mordreds to be $0.3 \times 2/3 = 0.2$ (due to the 1/3 probability of \pmin randomly guessing Merlin). But, the value is actually closer to $0.28$. The discrepancy is due to the ``special player'' implicit correlation discussed in \Cref{sec:example-intro}.

\section{Conclusions and Future Research}
In this paper, we have initiated the formal study of hidden-role games from a game-theoretic perspective. We build on the growing literature on ATGs to define a notion of equilibrium, and give both positive and negative results surrounding the efficient computation of these equilibria. In experiments, we completely solve real-world instances of {\em Avalon}. As this paper introduces a new and interesting class of games, we hope that it will be the basis of many future papers as well. We leave many interesting questions open.
\begin{enumerate}
    \item From our results, it is not even clear that hidden-role equilibria and values can be computed in {\em finite} time except as given by \Cref{thm:main}. Is this possible? For example, is there a revelation-principle-like characterization for {\em public} communication that would allow us to fix the structure of the communication? We believe this question is particularly important, as humans playing hidden-role games are often restricted to communicating in public and cannot reasonably run the cryptographic protocols necessary to build private communication channels or perform secure MPC.
    \item Changing the way in which communication works can have a ripple effect on the whole paper. One particular interesting change that we do not investigate is {\em anonymous} messaging, in which players can, publicly or privately, send messages that do not leak their own identity. How does the possibility of anonymous messaging affect the central results of this paper?
    \item In this paper, we do not investigate or define hidden-role games where {\em both} teams have imperfect information about the team assignment. What difference would that make? In particular, is there a way to define an equilibrium concept in that setting that is ``symmetric'' in the sense that it does not require a seemingly-arbitrary choice of which team ought to commit first to its strategy? 
\end{enumerate}

\section*{Acknowledgements}
The work of Prof. Sandholm’s group is supported by the National Science Foundation under grants RI2312342 and RI-1901403, and by the ARO under award W911NF2210266. Nicola Gatti is supported by
FAIR (Future Artificial Intelligence Research) project, funded by the NextGenerationEU program within
the PNRR-PE-AI scheme (M4C2, Investment 1.3, Line on Artificial Intelligence). We thank Justin Raizes
for useful discussions on multi-party computation, and Cristian Palma Foster for pointing out an error in an earlier version of the statement of \Cref{thm:price of hr}.
\clearpage

\bibliographystyle{plainnat}
\bibliography{dairefs}
\appendix

\setlist{itemsep=\parskip,parsep=\parskip}
\newpage
\onecolumn

\section{Mediated Games and the Revelation Principle}\label{sec:proofs}
In this section, we prove \Cref{th:revelation principle} and \Cref{cor:zero-sum-efg}.

\subsection{\Cref{th:revelation principle}}

\threvelationprinciple*
\begin{proof}
    We follow the usual proof structure of revelation principle proofs. Let $x = (x_1, \dots, x_n)$ be any strategy profile for team \pmax in $\gsplit(\gcommpriv(\Gamma^*))$.
    Consider the strategy profile $x' = (x_1', \dots, x_n')$ that operates as follows. For each player $i$, the mediator instantiates a simulated version of each player $i$ playing according to strategy $x_i$. These simulated players are entirely ``within the imagination of the mediator''. 
    \begin{enumerate}
        \item When a (real) player $i$ sends an observation $o_i(h)$ to the mediator, the mediator forwards observation $o_i(h)$ to the simulated player $i$.
        \item When a simulated player $i$ wants to send a message to another player $j$, the mediator forwards the message to the {\em simulated} player $j$. 
        \item When a simulated player $i$ plays an action $a_i$, the mediator forwards the action as a message to the real player $i$.
    \end{enumerate}

    Since the strategy $x_i$ is only well-defined on sequences that can actually arise in $\gsplit(\gcommpriv(\Gamma^*))$, the simulated player $i$ may crash if it receives an impossible sequence of observations. If player $i$'s simulator has crashed, it will no longer send simulated messages, and the mediator will no longer send messages to player $i$.
    
    It suffices to show that team \pmin cannot exploit $x'$ more than $x$. Let $y'$ be any best-response strategy profile for \pmin against $x'$. %
    
    We will show that there exists a strategy $y$ such that $(x', y')$  is equivalent to $(x, y)$. Consider the strategy $y$ for team \pmin in which each player $i$ maintains simulators of both $x_i$ and $y'_i$, and acts as follows.
    \begin{enumerate}
        \item Upon receiving an observation or message, forward it to $y_i'$
        \item If $y_i'$ wants to send an observation to the mediator, forward that observation to $x_i$. 
        \item If $x_i$ sends a message, send that message.
        \item If $x_i$ plays an action $a_i$, forward that action to $y_i'$ as a message from the mediator. If $x_i$ crashes, send empty messages to $y_i'$ from the mediator. In either case, when $y_i'$ outputs an action, play that action.
    \end{enumerate}
    By definition, the profiles $(x, y)$ and $(x', y')$ have the same expected utility (in fact, they are equivalent, in the sense that they induce the same outcome distribution over the terminal nodes of $\Gamma$), so we are done.
\end{proof}

\subsection{\Cref{cor:zero-sum-efg}}
\corzerosumefg*

\begin{proof}
    Consider the zero-sum game $\Gamma_0$ that works as follows. There are two players: the {\em mediator}, representing team \pmax, and a single {\em adversary}, representing team \pmin.  The game state of $\Gamma_0$ consists of a history $h$ in $\Gamma$ (initially the root), and $n$ {\em sequences} $s_i$. Define a sequence $s_i$ to be {\em consistent} if it is a prefix of a terminal sequence $s_i(z)$ for some terminal node $z$ of $\Gamma$. For each sequence $s_i$ which ends with an action, let $O(s_i)$ be the set of observations that could be the next observations of player $i$ in $\Gamma$.
    \begin{enumerate}
        \item For each player $i$ on team \pmax, let $\tilde o_i = o_i(h)$ be the true observation of player $i$. The adversary observes all recommendations $o_i(h)$ for players $i$ on its team. Then, for each such player, the adversary picks an observation $\tilde o_i \in O(s_i) \cup \{\bot\}$. Each $\tilde o_i$ is appended to the corresponding $s_i$. 
        \item The mediator observes $(\tilde o_1, \dots, \tilde o_n)$, and picks action recommendations $a_i \in A_i(\tilde o_i)$ to recommend to each player $i$, and appends each $a_i$ to the corresponding $s_i$.
        \item Players on team \pmax automatically play their recommended actions. The adversary observes all actions $(a_i : t_i = \pmin)$ recommended to players on team \pmin, and selects the action played by each member of team \pmin.
    \end{enumerate}
    The size of this game is given by the number of tuples of the form $(h, s_1, \dots, s_k)$ where $s_i$ is the sequence of adversary $i$ and $h$ is a node of the original game $\Gamma$. There are at most $|H|^{k+1}$ of these, so we are done.
\end{proof}

\section{Multi-Party Computation and Proof of Theorem~\ref{thm:main}}\label{sec:mpc}

We first formalize the usual setting of multi-party computation (MPC). Let $X$ be the set of binary strings of length $\ell$, and let $\lambda$ be a security parameter. \citet{Rabin89:Verifiable} claims in their Theorem 4 that essentially any protocol involving a mediator can be efficiently simulated without a mediator so long as more than half the players follow the protocol and we allow some exponentially small error. However, they do not include a proof of this result. In the interest of completeness, we prove the version of their result that is needed for our setting, based only on the primitives of {\em secure multi-party computation} and {\em verifiable secret sharing}.

\subsection{Secure MPC}
In {\em secure MPC}, there is a (possibly randomized) function $f : X^n \to X^n$ defined by a circuit with $N$ nodes. A subset $K \subset [n]$ of size $< n/2$ has been {\em corrupted}. Each {\em honest} player $i \in [n] \setminus K$ holds an input $x_i \in X$. The goal is to design a randomized messaging protocol, with {\em private} communication, such that, regardless of what the corrupted players do, there exist inputs $\{x_j : j \in K\}$ such that:
\begin{enumerate}
    \item (Output delivery) At the end of the protocol, each player $i$ learns its own output, $y_i := f(x_1, \dots, x_n)_i$, with probability $1 - 2^{-\lambda}$.
    \item (Privacy) No subset of $< n/2$ players can learn any information except their own output $y_i$. That is, the players cannot infer any extra information from analyzing the transcripts of the message protocol than what they already know.
    
    This can be intuitively modeled as follows. If any minority of colluded players were to analyze the empirical distributions of transcripts of the protocol for any input-output tuple, then such distribution would be fully explained in terms of their input-outputs, leaving no conditional dependence on other players' input-outputs.
    Formally, for any such subset $K \subseteq [n]$ of size $< n/2$, there exists a randomized algorithm ${\sf Sim}_K$ that takes the inputs and outputs of the players in set $K$, and reconstructs transcripts $T$, such that for all $x \in X^n$, we have
    \begin{align}
       \sum_T \abs{\Pr[{\sf Sim}_K(x_K, y_K)=T] - \Pr[{\sf View}_K(x)=T]} \le 2^{-\lambda}
    \end{align}
    where ${\sf View}_K(x)$ is the distribution of transcripts observed by players in set $K$ when running the protocol with input $x$.
\end{enumerate}

In other words, the players in set $K$ cannot do anything except pass to $f$ inputs of their choice. For our specific application, this implies that introducing an MPC protocol to simulate the mediator is equivalent to having a mediator, because no extra information aside from the intended function will be leaked to the players.
\begin{theorem}[\cite{Beaver90:Multiparty,Rabin89:Verifiable}]
    Secure MPC is possible, with polytime (in $\ell, \lambda, n, N$) algorithms that take at most polynomially many rounds and send at most polynomially many bits in each round.
\end{theorem}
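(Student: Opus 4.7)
The plan is to follow the classical Ben-Or--Goldwasser--Wigderson (BGW) construction, upgraded with Rabin--Ben-Or verifiable secret sharing (VSS) to obtain information-theoretic security against malicious minorities with guaranteed output delivery. Fix a finite field $\mathbb F$ of size $q \ge \max(n, 2^\lambda)$ and encode $f$ as an arithmetic circuit over $\mathbb F$ of size $O(N)$. Set the Shamir threshold $t = \lceil n/2\rceil - 1$, so $t < n/2$. The invariant maintained throughout the computation is: for every wire carrying value $v \in \mathbb F$, there exists a degree-$t$ polynomial $p$ with $p(0) = v$ and each player $i$ holds the share $p(i)$. Shamir's classical property then guarantees that any $\le t$ shares are uniformly distributed and reveal nothing about $v$, while any $t+1$ shares determine $v$.

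First I would handle input sharing via a VSS protocol (e.g., the bivariate-polynomial protocol of Ben-Or--Goldwasser--Wigderson augmented by Rabin--Ben-Or pairwise consistency checks over $\mathbb F$): each player $i$ acts as dealer and commits, via the sharing transcript, to a unique field element $x_i$; if the dealer misbehaves, the honest majority detects this with probability $1 - 2^{-\lambda}$ and substitutes a default input. Next I would evaluate the circuit gate-by-gate while preserving the invariant: additions and linear combinations are performed locally on shares (degree-$t$ polynomials are closed under linear combination); multiplications are performed by locally multiplying shares (producing a degree-$2t$ sharing of $uv$, which still encodes $uv$ at $0$ since $2t < n$), then degree-reducing via Lagrange coefficients after each player VSS-reshares its product. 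Fresh randomness required by $f$ is generated by having every player VSS-share a uniform element and summing shares, which yields a uniformly random shared value provided at least one player is honest.

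To make the multiplication subprotocol robust against malicious deviations, I would layer in pre-generated random double-sharings (degree $t$ and degree $2t$ of the same value) verified by standard checkable-triple or ``king-aided'' techniques; each check fails a cheater with probability at most $1/q \le 2^{-\lambda}$, and a union bound over the $O(N)$ gates (absorbed into a slightly larger $\lambda$) yields overall $2^{-\lambda}$ failure. Output delivery for player $i$ is done by all players sending their shares of the corresponding output wire to $i$; because $t < n/2$, Reed--Solomon decoding on $n$ symbols with at most $t$ errors uniquely recovers the codeword, so player $i$ obtains $y_i$ with certainty once honest players follow the protocol.

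For privacy I would build the simulator ${\sf Sim}_K$ as follows: on input $(x_K, y_K)$, run the protocol internally using arbitrary dummy inputs for honest parties (say zeros), truthful inputs for $K$, and at output reconstruction patch the shares sent to players in $K$ so that they Lagrange-interpolate to $y_K$. By Shamir privacy, every intermediate view seen by $K$ is uniform over $\mathbb F^{|K|}$ conditioned on $(x_K, y_K)$ in both the real and simulated executions; the only deviation arises from the VSS/multiplication check failures, bounded in total variation by $2^{-\lambda}$. Each VSS invocation and each gate emulation uses $\poly(n, \log q)$ bits and $O(1)$ rounds, yielding total complexity $\poly(\ell, \lambda, n, N)$ as required. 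The main obstacle, and the part deserving the most care, is the robust multiplication step: naive degree reduction breaks if corrupt players reshare inconsistent values, so the verification machinery (VSS of products plus triple-style consistency checks) must be presented precisely enough to yield the $2^{-\lambda}$ soundness bound while still keeping the communication polynomial.
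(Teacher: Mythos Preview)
The paper does not prove this theorem; it is stated with a citation to \cite{Beaver90:Multiparty,Rabin89:Verifiable} and used as a black box in the proof of \Cref{thm:main}. So there is no ``paper's own proof'' to compare against.

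Your proposal is a faithful high-level sketch of the classical construction behind the cited result: BGW-style gate-by-gate evaluation over Shamir shares, upgraded with Rabin--Ben-Or verifiable secret sharing to tolerate a malicious minority with statistical error $2^{-\lambda}$. The invariant, the local handling of linear gates, the multiply-then-reshare step for multiplication, Reed--Solomon decoding for robust output reconstruction, and the simulator that runs on dummy honest inputs and patches output shares are all standard and correctly identified. You are also right that the robust multiplication (ensuring malicious parties reshare consistent products) is where the real work lies; in the literature this is handled either via the error-correction and disputation machinery of Rabin--Ben-Or or via verified Beaver triples, and either would discharge the obligation. If you intend to include this as a self-contained proof rather than a citation, the one place that needs more care is the precise soundness accounting for the multiplication checks and the union bound over all $O(N)$ gates, but you already flag this.
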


\subsection{Verifiable Secret Sharing}

In the {\em verifiable secret sharing} (VSS) problem, the goal is to design a function ${\sf Share} : X \to {\bar X}^n$ , where ${\bar X} = \{0, 1\}^{\poly(\ell, \lambda, n)}$, that {\em shares} a secret $x \in X$ by privately informing each player $i$ of its piece ${\sf Share}(x)_i$, such that:
\begin{enumerate}
    \item{} (Reconstructibility) Any subset of $>n/2$ players can recover the secret fully, even if the remaining players are adversarial. That is, there exists a function ${\sf Reconstruct} : X^n \to X$ such that, where $(x_1, \dots, x_n) \gets {\sf Share}(x)$, we have ${\sf Reconstruct}(x_1', \dots, x_n') = x$ so long as $x_i' = x_i$ for $ > n/2$ players $i$.
        \item{} (Privacy) No subset of $< n/2$ players can learn any information about the secret $x$. That is, for any such subset $K$ and any secret $x$, there exists a distribution ${\sf Sim}_K \in \Delta(X^n)$ such that
        \begin{align}
            \norm{{\sf Share}(x)_K - {\sf Sim}_K}_1 \le 2^{-\lambda},
        \end{align}
        where $\norm{\cdot}_1$ denotes the $\ell_1$-norm on probability distributions. 
\end{enumerate}
\begin{theorem}[\cite{Rabin94:Robust,Rabin89:Verifiable}]
    There exist algorithms {\sf Share} and {\sf Reconstruct}, with runtime $\poly(\ell, \lambda, n)$, that implement robust secret sharing.
\end{theorem}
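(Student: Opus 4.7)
The plan is to implement $\mathsf{Share}$ via Shamir's polynomial secret sharing with threshold $t = \lfloor n/2 \rfloor + 1$, augmented by pairwise information-theoretic message authentication codes so that corrupt parties cannot lie about their share at reconstruction time. I will work over a finite field $\mathbb{F}$ of size at least $n^2 \cdot 2^{\max(\ell, \lambda)}$, viewing $x$ as an element of $\mathbb{F}$. In $\mathsf{Share}(x)$, the dealer samples a random polynomial $p$ of degree $t-1$ with $p(0) = x$ and sets $s_j = p(j)$. Then, for every ordered pair $(i, j)$, the dealer samples fresh independent $(a_{ij}, b_{ij}) \in \mathbb{F}^2$ and forms the tag $\tau_{ij} = a_{ij} s_j + b_{ij}$. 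Player $j$'s piece is $(s_j, \{\tau_{ij}\}_i)$ together with the keys $\{(a_{j k}, b_{j k})\}_k$ it will use later as an authenticator; the whole share is $\poly(\ell, \lambda, n)$ bits.

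For $\mathsf{Reconstruct}$ on claimed shares $(x_1', \dots, x_n')$, each party broadcasts its claimed share and its keys. A claimed share $x_j'$ is declared \emph{valid} if it passes the check $a_{ij} x_j' + b_{ij} = \tau_{ij}$ for strictly more than $n/2$ authenticator indices $i$. The procedure then interpolates the unique degree-$(t-1)$ polynomial through any $t$ valid shares and outputs its value at $0$.

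The analysis splits into two standard pieces. For reconstructibility, every honest $j$'s share is validated by all honest authenticators, hence by a majority; conversely, for a corrupt $j$ to push an $x_j' \ne s_j$ past even a single honest authenticator $i$, it must guess the affine map $(a_{ij}, b_{ij})$, but from the adversary's view $b_{ij}$ is a uniform one-time pad on $\tau_{ij}$, so $(a_{ij}, b_{ij})$ is uniform in $\mathbb{F}^2$ and the forgery probability is $1/|\mathbb{F}|$ per pair. Union-bounding over the at most $n^2$ such pairs gives overall failure probability $\le 2^{-\lambda}$. For privacy, the classical Shamir argument says that any $< t$ of the $s_j$ are jointly uniform and independent of $x$; the additional keys and tags held by a coalition $K$ of size $< n/2$ can be perfectly simulated by sampling the keys for $i \in K$ uniformly, sampling the tags $\tau_{ij}$ for $i \notin K, j \in K$ uniformly (justified because the hidden $b_{ij}$ acts as a one-time pad), and setting tags $\tau_{ij}$ with $i,j \in K$ consistently from the simulated keys and shares.

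The main obstacle I expect is the careful bookkeeping in the MAC analysis: one must argue that, for each honest authenticator $i$, the pair $(a_{ij}, b_{ij})$ remains information-theoretically uniform from the adversary's viewpoint even after conditioning on $\tau_{ij}$ and on the tags the adversary itself issued. This reduces to one-shot security of the pairwise-independent hash family $\{(a,b) \mapsto ay+b\}$, which is a textbook fact, but the accounting across all pairs $(i,j)$ and the interaction with Shamir's privacy must be done with care so that the simulated view matches the real view within statistical distance $2^{-\lambda}$. Since each share is $O(n)$ field elements of $\poly(\ell, \lambda, n)$ bits, both $\mathsf{Share}$ and $\mathsf{Reconstruct}$ run in $\poly(\ell, \lambda, n)$ time, as required.
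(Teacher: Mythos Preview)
The paper does not prove this theorem itself; it is stated purely as a citation to \cite{Rabin94:Robust,Rabin89:Verifiable}. Your sketch is in fact essentially the Rabin--Ben-Or construction (Shamir sharing at threshold $t=\lfloor n/2\rfloor+1$ plus pairwise information-theoretic MACs), so there is nothing to compare against in the paper beyond noting that you have reconstructed the cited result correctly.

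One small point worth flagging: the paper's \emph{Reconstructibility} clause is phrased as a deterministic guarantee (``${\sf Reconstruct}(x_1',\dots,x_n')=x$ so long as $x_i'=x_i$ for $>n/2$ players $i$''), with the $2^{-\lambda}$ slack appearing only in the \emph{Privacy} clause. Your construction---like the original---actually incurs the $2^{-\lambda}$ error on the reconstruction side (a corrupt party forges a MAC with probability $1/|\mathbb F|$ per honest authenticator, union-bounded over $O(n^2)$ pairs), while privacy is in fact perfect. This mismatch is harmless for the paper's downstream use in \Cref{thm:main-app}, since every application already tolerates $O(|H|\cdot 2^{-\lambda})$ total error, but if you want to match the paper's stated definition exactly you would either need to rephrase which clause carries the statistical slack or note explicitly that the roles of the two errors are swapped.
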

VSS is a primitive used in a fundamental way to build secure MPC protocols; to be formally precise in this paper, we will require VSS as a separate primitive as well to maintain the state of the mediator throughout the game.

\subsection{Simulating a Mediator}

We will assume that the game $\Gamma_0$ in \Cref{cor:zero-sum-efg} has been solved, and that its solution is given by a (possibly randomized) function
\begin{align}
    f : \Sigma \times O^n \to \Sigma \times A^n\label{eq:mediator}
\end{align}
where $\Sigma$ is the set of information states of the mediator in $\Gamma_0$. At each step, the mediator takes $n$ observations $o_1, \dots, o_n$ and its current infostate $s$ as input, and updates its infostate and outputs action recommendations according to the function $f$.

Consider the function $\hat f : (\bar X \times O)^n \to (\bar X\times A)^n$ defined by 
\begin{align}
    \hat f((x_1, o_1), \dots, (x_n, o_n)) = ((x_1', a_1), \dots, (x_n', a_n))
\end{align}
where
\begin{align}
    (s', a_1, \dots, a_n) = f({\sf Reconstruct}(x_1, \dots, x_n), o_1, \dots, o_n)
\end{align}
\begin{align}
(x'_1, \dots, x'_n) = {\sf Share}(s').
\end{align}
That is, $\hat f$ operates the mediator with its state secret-shared across the various players. The players will run secure MPC on $\hat f$ at every timestep. By the properties of MPC and secret sharing, this securely implements the mediator in such a way that the players on team \pmin can neither break privacy nor cause the protocol to fail, with probability better than $O(|H| \cdot 2^{-\lambda})$, where $H$ is the set of histories of the game.

We have thus shown our main theorem, which is more formally stated as follows:
\begin{theorem}[Formal version of \Cref{thm:main}]\label{thm:main-app}
    Let:
    \begin{itemize}
        \item $\Gamma$ be a hidden-role game with a \pmin-team of size $k < n/2$,
        \item $\Gamma^*$ be identical to $\Gamma$ except that there is an additional player who takes no nontrivial actions but is always on team \pmax;
        \item $\Gamma_0$ be the zero-sum game defined by \Cref{cor:zero-sum-efg} based on $\Gamma^*$;
        \item $x$ be any strategy of the \pmax player (mediator) in $\Gamma_0$, represented by an arithmetic circuit $f$ as in \eqref{eq:mediator} with $N = \poly(|H|^k)$ gates; and
        \item $\lambda$ be a security parameter.
    \end{itemize}
    Then there exists a strategy profile $x'$ of the \pmax players in $\gsplit(\gcommpriv^{M, R}(\Gamma))$, where $\log M = R = \poly(\lambda, N, \log|H|)$ such that:
    \begin{enumerate}
        \item  {\em (Equivalence of value)} the value of $x'$ is within $2^{-\lambda}$ of the value of $x$ in $\Gamma_0$, and
        \item  {\em (Efficient execution)} there is a $\poly(r)$-time randomized algorithm $\mc A_\Gamma$ that takes as input an infostate $s_i$ of $\gsplit(\gcommpriv^{M, R}(\Gamma))$ that ends with an observation, and returns the (possibly random) action that player $i$ should play at $s_i$.
    \end{enumerate}
\end{theorem}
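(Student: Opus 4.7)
My plan is to establish the theorem by explicitly constructing $x'$ as the strategy in which the \pmax-team members collectively simulate the mediator of $\Gamma_0$ via secure multi-party computation, while maintaining the mediator's evolving information state through verifiable secret sharing. The starting point is Corollary~\ref{cor:zero-sum-efg}, which reduces the computation of the coordinated private-communication value of $\Gamma^*$ to solving the two-player zero-sum game $\Gamma_0$ with mediator strategy given by the circuit $f$ in equation~\eqref{eq:mediator}. Since $\Gamma$ lacks a physical mediator, but the assumption $k < n/2$ guarantees that a strict honest majority exists, the setting is precisely the one in which both VSS and information-theoretically secure MPC with guaranteed output delivery exist.

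Concretely, the construction proceeds by rounds tied to game steps of $\Gamma$. At the start, the \pmax-players initialize a secret sharing of the empty mediator state $s \in \Sigma$ using \textsf{Share}; at each subsequent game step, each player $i$ feeds its current observation $o_i$ and its share $x_i$ of $s$ into an MPC subprotocol that evaluates the function $\hat f$ defined in the excerpt. The output of the MPC hands each player $i$ both its own action recommendation $a_i$ and a fresh share $x_i'$ of the updated state $s'$; the \pmax-players then play the recommended $a_i$, while the \pmin-players behave arbitrarily. Because each MPC invocation uses $\poly(\ell,\lambda,n,N)$ bits and rounds of private communication, we can encode the whole transcript into the $(M,R)$-communication extension by choosing $\log M = R = \poly(\lambda, N, \log |H|)$, and the entire per-step computation of $\mathcal{A}_\Gamma$ is polynomial-time, which yields the efficient execution clause.

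The heart of the proof lies in showing that the value of $x'$ is within $2^{-\lambda}$ of the value of $x$ in $\Gamma_0$; this is where I expect the main subtlety. Privacy of VSS ensures that the shared state $s$ leaks nothing to any coalition of $<n/2$ players, and privacy of MPC guarantees that the transcripts of the simulation, conditioned on the \pmin-team's inputs and outputs, can be simulated from the \pmin-team's recommended actions alone. Consequently, any deviation by the \pmin-team in $\gsplit(\gcommpriv^{M,R}(\Gamma))$ can be mapped, via the MPC simulator $\mathsf{Sim}_K$, into a choice of ``effective inputs'' to $\hat f$ that corresponds precisely to an adversary strategy in $\Gamma_0$, up to a total-variation gap of $2^{-\lambda'}$ per MPC call; I will make this correspondence rigorous by exhibiting, for every \pmin-strategy in the simulated game, an equivalent adversary strategy in $\Gamma_0$ that achieves at least the same value minus the cumulative simulation error.

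Finally, I will close the argument with a union bound over the at most $|H|$ game steps of $\Gamma$ (each of which involves a single MPC invocation and a single VSS reshare), and over the failure event of the guaranteed-output-delivery property of MPC. Choosing $\lambda' = \lambda + \lceil\log |H|\rceil + O(1)$ as the internal security parameter ensures the aggregate deviation from the mediated game value is at most $2^{-\lambda}$. The main obstacle I anticipate is formalizing the ``reduction of simulated adversaries to $\Gamma_0$-adversaries'' cleanly: one must carefully verify that adaptive \pmin-behaviour across game steps, which can correlate its inputs to $\hat f$ with partial transcripts from previous MPC invocations, does not give strictly more power than a coordinated adversary in $\Gamma_0$; this is exactly what the simulator-based privacy of MPC, composed sequentially across rounds via a standard hybrid argument, is designed to preclude.
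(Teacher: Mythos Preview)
Your proposal is correct and follows essentially the same approach as the paper: maintain the mediator's state via VSS and, at each game step, run information-theoretically secure MPC on the function $\hat f$ that reconstructs the state, applies $f$, and re-shares, then union-bound the per-step error over at most $|H|$ steps. In fact your outline of the security argument (simulator-based reduction of \pmin-deviations to $\Gamma_0$-adversary strategies, hybrid composition across rounds, and the explicit choice of internal parameter $\lambda'$) is more detailed than the paper's own proof, which simply asserts that ``by the properties of MPC and secret sharing, this securely implements the mediator'' with failure probability $O(|H|\cdot 2^{-\lambda})$.
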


\section{Connection to Communication Equilibria}\label{app:comm}
In this section, we prove \Cref{th:comm}, recalled below.

\thComm*

Before proving this result, we must first formally define a communication equilibrium. Given an arbitrary game $\Gamma$ with $n$ players, consider the $(n+1)$-player game in which the extra player is the mediator. Consider a private-communication extension $\tilde\Gamma$ of that game, with $R=2$ rounds and $M=|H|$, in which only communication with the mediator is allowed. In $\Gamma^*$, each player has a {\em direct strategy} $x_i^*$ in which, at every timestep, the player sends its honest information to the mediator, interprets the mediator's message in reply as an action recommendation, and plays that action recommendation. 

\begin{definition}
    A {\em communication equilibrium} of $\Gamma$ is a strategy profile $\mu=(x_1, \dots, x_n)$ for the mediator of $\tilde\Gamma$ such that, with $\mu$ held fixed, the profile  $(x_1, \dots, x_n)$ is a Nash equilibrium of the resulting $n$-player game.
\end{definition}
The {\em revelation principle for communication equilibria}~\cite{Forges86:Approach,Myerson86:Multistage} states that, without loss of generality in the above definition, it can be assumed that $x=x^*$, \ie, players are direct in equilibrium.

We now prove the theorem. Let $\Gamma^*$ be $\Gamma$ with a mediator who is always on team \pmax, and let $\Gamma_0$ be the zero-sum game constructed by \Cref{cor:zero-sum-efg}. Because $\hrvaluepriv(\Gamma)$ is the zero-sum value of $\Gamma_0$ by \Cref{thm:main}, it is enough to prove the inequality chain
\begin{align}
    \hrvaluepriv(\Gamma) \le {\sf CommVal}(\Gamma) \le {\sf Val}(\Gamma_0),
\end{align}
where {\sf CommVal} is the value of the \pmax-optimal communication equilibrium and {\sf Val} is the zero-sum game value.

By \Cref{thm:main}, the hidden-role equilibria of $\Gamma$ are (up to an arbitrarily small error $\eps > 0$) TMEs of $\gsplit(\Gamma^*)$. By \citet{vonStengel97:Team}, since \pmin has only one player, the TMEs of $\gsplit(\Gamma^*)$ are precisely the \pmax-team-optimal Nash equilibria of $\gsplit(\Gamma^*)$. Thus, for a TME $(\mu, x_1, \dots, x_n)$ of that game (where $\mu$ is a strategy of the mediator), there exists an adversary strategy $y$ such that $(\mu, x_1, \dots, x_n, y)$ is a Nash equilibrium. But then $(\mu, x_1, \dots, x_n, y)$ is also a communication equilibrium. Thus ${\sf TMEVal}(\gsplit(\Gamma^*)) \le {\sf CommVal}(\Gamma)$.

We now show the second inequality. If $\Gamma$ is an adversarial hidden-role game, each player $i$'s strategy can be expressed as a tuple $(x_i, y_i)$ where $x_i$ is player $i$'s strategy in the subtree where $i$ is on team \pmax, and $y_i$ is the same on team \pmin. In that case, the problem of finding a \pmax-optimal communication equilibrium can be expressed as:
\begin{align}
    \max_\mu \quad & u(\mu, x^*, y^*) \\
    \qq{s.t.} & \forall i~~\max_{x_i} u(\mu, x_i, x^*_{-i}, y^*) \le u(\mu, x^*, y^*) \\
     & \forall i~~\min_{y_i} u(\mu, x^*, y_i, y^*_{-i}) \ge u(\mu, x^*, y^*)
\end{align}
where $u$ is the \pmax-team utility function.
That is, no player can increase their team's utility by deviating from the direct profile $(x^*, y^*)$, regardless of which team they are assigned to. Now, using the fact that \pmin has only one player, we can write \pmax's utility function $u$ as a sum $u = \sum_i u_i$ where $u_i$ is the utility of \pmax when the \pmin-player is player $i$ (weighted by the probability of that happening). Each term $u_i$ depends only on $x$ and the $y_i$s. Thus, the above program can be rewritten as
\begin{align}
    \max_\mu \quad & \sum_i u_i(\mu, x^*, y^*_i) \\
    \qq{s.t.} & \forall i~~\max_{x_i} u(\mu, x_i, x^*_{-i}, y^*) \le u(\mu, x^*, y^*) \\
     & \forall i~~\min_{y_i} u_i(\mu, x^*, y_i) \ge u_i(\mu, x^*, y^*_i)
\end{align}
or, equivalently,
\begin{align}
    \max_\mu \quad & \sum_i \min_{y_i} u_i(\mu, x^*, y_i) \\
    \qq{s.t.} & \forall i~~\max_{x_i} u(\mu, x_i, x^*_{-i}, y^*) \le u(\mu, x^*, y^*).
\end{align}
or, equivalently,
\begin{align}
    \max_\mu \min_y \quad & u(\mu, x^*, y) \\
    \qq{s.t.} & \forall i~~\max_{x_i} u(\mu, x_i, x^*_{-i}, y^*) \le u(\mu, x^*, y^*).
\end{align}
This is precisely the problem of computing a zero-sum equilibrium in the game $\Gamma_0$, except with an extra constraint, so the inequality follows. \qed

\section{Complexity Bounds and Proofs}\label{sec:complexity-proofs}

Here we state and prove the various lower bounds in \Cref{tab:complexity-results}. Before proceeding, we make several remarks about the conventions used in this section.
\begin{itemize}
    \item Utilities will be given {\em unnormalized} by chance probability. That is, if we say that a player gets utility $1$, what we really mean is that the player gets utility $1/p$, where $p$ is the probability that chance sampled all actions on the path to $z$. Thus the contribution to the expected value from this terminal node will be $1$. This makes calculations easier. 
    \item The utility range of the games used in the reductions will usually be of the form $[-M, M]$ where $M$ is large but polynomial in the size of the game. Our definition of an extensive-form game allows only games with reward range $[-1, 1]$. This discrepancy is easily remedied by dividing all utility values in the proofs by $M$.
    \item $\Theta(\cdot)$ hides only an absolute constant.
\end{itemize}
To recap, we show hardness of {\em approximating} the value of the game to a desired precision $\eps > 0$. Our hardness results will hold even when $\eps = 1/\poly(|H|)$. 

\publichard*
\begin{proof}
    We show that given any graph $G$, it is possible to construct a hidden-role game based on $G$ whose value correspond to the size the graph's max-cut. This reduces MAX-CUT to the TME value problem.
    
    Let $G$ be an arbitrary graph with $n$ nodes and $m$ edges, and consider the following team game (no hidden roles). There are 3 players, 2 of whom are on team \pmax. The game progresses as follows.
    \begin{enumerate}
        \item Chance chooses two vertices $v_1, v_2$ in $G$, independently and uniformly at random. The two players on team \pmax observe $v_1$ and $v_2$ respectively.
        \item The two players on team \pmax select bits $b_1, b_2$, and the \pmin player selects a pair $(v_1', v_2')$. There are now several things that can happen: ($L$ is a large number to be picked later)
        \begin{enumerate}
            \item ({\em Agreement of players}) If $v_1 = v_2$ and $b_1 \ne b_2$, team \pmax gets utility $-L$.
            \item ({\em Objective}) If $v_1 \ne v_2$, $b_1 \ne b_2$, and $(v_1, v_2)$ is an edge in $G$, then team \pmax gets utility $1$.
            \item ({\em Non-leakage}) If $(v_1', v_2') = (v_1, v_2)$ then team \pmax gets utility $-(n^2-1)L$. Otherwise, team \pmax gets utility $L$.\footnote{Note that \pmin playing uniformly at random means that the expected utility of this term is $0$.}
        \end{enumerate}
    \end{enumerate}
    Consider a sufficiently large $L = \poly(m)$.
    The game is designed in such a way that \pmin's objective is to guess the vertices $v_1, v_2$ sampled by chance, but she has no information apart from the transcript of communication to guess it. Therefore, \pmin's optimal strategy is to punish any communication attempt between the players and play the most likely pair of vertices $v_1, v_2$. If no communication happens, her best strategy is to play a random pair of vertices.
    On the other hand, \pmax optimal strategy must ensure that under no circumstance players play the same bit when assigned to the same vertex (lest they incur the large penalty $L$). Therefore, the strategy of both \pmax players is to play a fixed bit in each vertex, and the optimal strategy is the one that assigns a bit to the vertices in such a way that the number of edges connecting vertices with different bits is maximized. This corresponds to finding a max cut and therefore the value of the game is (essentially) $c^*$ where $c^*$ is the true size of the maximum cut.
    Moreover, any communication attempt would be immediately shut down by \pmin strategy since any leak of the observation received on the public channel implies to receive a fraction of the large penalty $L$.
    
    We now formalize this intuition.
    
    First, note that \pmax can achieve utility exactly $c^*$ by playing according to a maximum cut. To see that \pmax cannot do significantly better, consider the following strategy for team \pmin. Observe the entire transcript $\tau$ of messages shared between the two \pmax players. Pick $(v_1, v_2)$ maximizing the probability $p(\tau | v_1, v_2)$ that the players would have produced $\tau$.\footnote{Note again, as in \Cref{sec:example}, that this computation may take time exponential in $r$ and the size of $G$, but we allow the players to perform unbounded computations. } 

    First, suppose that \pmax does not communicate. Then \pmin's choice is  independent of \pmax's, so \pmax can WLOG play a pure strategy. If $b_1 \ne b_2$ for any pair $(v_1, v_2)$, then \pmax loses utility $L$ in expectation. For $L > n^2$, this makes any such strategy certainly inferior to playing the maximum cut. Therefore, \pmax should play the maximum cut, achieving value $c^*$.

    Now suppose that \pmax uses communication. Fix a transcript $\tau$, and let
    \begin{align}
        \delta := \max_{v_1, v_2} p(v_1, v_2|\tau) - \frac{1}{n^2}.
    \end{align}
    Now note that we can write $p(\cdot|\tau) = (1 - \alpha) q_0 + \alpha q_1$, where $q_0, q_1 \in \Delta(V^2)$, $q_0$ is uniform, and $\alpha \le \Theta(n^4\delta)$. Now consider any strategy that \pmax could play, given transcript $\tau$. Such a strategy has the form $x_1(b_1 | v_1)$ and $x_2(b_2|v_2)$. Since $q$ is $\alpha$-close to uniform, the utility of \pmax under this strategy conditioned on $\tau$ must be bounded above by
    $$ (1 - \alpha)u_0 + \alpha \le (1 - \alpha) c^* + \alpha \le c^* + \alpha \le c^* + \Theta(n^4 \delta)$$
    where $u_0$ is the expected value of profile $(x_1, x_2)$ given $\tau$ if $v_1, v_2|\tau$ were truly uniform.
    But now, in expectation over $\tau$, team \pmin can gain utility $Ln^2 \delta$ by playing $\argmax_{v_1, v_2} p(v_1, v_2|\tau)$. So, \pmax's utility is bounded above by 
    $$ c^* + \Theta(n^4 \delta) - Ln^2 \delta \le c^*$$ by taking $L$ sufficiently large.
\end{proof}

The next result illustrates the difference between the uncoordinated hidden-role value and the coordinated hidden-role value which is the focus of the positive results in our paper. Whereas the coordinated hidden-role value with private communication can be computed in polynomial time when $k$ is constant (\Cref{thm:main}), the uncoordinated hidden-role value cannot, even when $k=2$:

\begin{theorem}\label{thm:private hard}
    Even in $3$-vs-$2$ hidden-role games, the uncoordinated hidden-role value problem with private communication is \coNP-hard.
\end{theorem}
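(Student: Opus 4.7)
The plan is to reduce from 3-\textsc{Unsat}, a canonical \coNP-complete problem. Given a 3-\textsc{Cnf} formula $\phi$ with $n$ variables and $m$ clauses, I will construct a 3-vs-2 hidden-role game $\Gamma_\phi$ of size polynomial in $|\phi|$ whose uncoordinated private-communication value exceeds a polynomial-gap threshold iff $\phi$ is unsatisfiable. The game combines two subgames. After chance uniformly assigns the roles among the $5$ players, chance independently samples two clauses $C_i, C_j$ of $\phi$ uniformly, revealing $C_i$ privately to one \pmin-player and $C_j$ privately to the other. In the single subsequent action step, each \pmin-player commits to a literal $\ell$ in their assigned clause together with the Boolean value $v$ that makes $\ell$ true; simultaneously all five players also play a hidden-role matching-pennies instance $\matchingpennies{5}$ as in \Cref{par:matching-pennies}. \pmax receives utility $1$ if either the two literal-value declarations pin some variable to opposite values (inconsistency) or the matching-pennies bits are not all equal, and $0$ otherwise.

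For the forward direction, when $\phi$ is satisfiable with a witness $a$, both \pmin-players deterministically ``play $a$'': upon receiving any clause $C$, they pick any literal $\ell \in C$ satisfied by $a$ and declare $(\ell, a[\mathrm{var}(\ell)])$. This is a product pure strategy that produces globally consistent declarations with probability $1$ and uses no inter-\pmin messaging, so \pmin's observed signatures are indistinguishable from those of the base $\matchingpennies{5}$ game with hidden roles and private communication, whose value is $1/(n+1) = 1/6$ by \Cref{thm:matchingpennies}. Therefore $\symhrvaluepriv(\Gamma_\phi) \le 1 - 1/6$.

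For the reverse direction, when $\phi$ is unsatisfiable, the analysis splits by whether the \pmin-team uses private inter-\pmin messaging to share its respective clauses. If it does not, each \pmin-player's strategy factors as an independent distribution over pure functions $C \mapsto (\ell, v)$; because no consistent global selection of one satisfied literal per clause exists when $\phi$ is unsatisfiable, at least a $1/m^2$ fraction of chance outcomes $(C_i, C_j)$ force inconsistency, yielding $\Omega(1/m^2)$ extra \pmax value. If such messaging is used, its observable sender-receiver pattern reveals positional information about the \pmin-team to \pmax (since the $1/(n+1)$-optimal matching-pennies protocol does not prescribe this extra pattern), allowing \pmax to punish in the matching-pennies subgame by an $\Omega(1)$ amount. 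Either way $\symhrvaluepriv(\Gamma_\phi) \ge 1 - 1/6 + 1/\poly(|\phi|)$, giving the required polynomial gap.

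The main obstacle is making the second sub-case precise: formally showing that in the uncoordinated split-personality form, any \pmin strategy that uses non-negligible private inter-\pmin messaging leaves exploitable role information beyond what is required by the base matching-pennies equilibrium. The key asymmetry relative to \Cref{thm:main} is that in the coordinated split-personality form the single merged adversary internalizes both clause observations with no intra-team messaging and thus leaves no signature---which is exactly why the coordinated case admits a polynomial-time algorithm via MPC simulation of a mediator. In the uncoordinated variant, by contrast, the two \pmin-personas are distinct players and must route all extra coordination through the observable private channels, creating the trade-off between coordination quality and stealth that encodes the combinatorial hardness of 3-\textsc{Sat}.
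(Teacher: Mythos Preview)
Your construction has a genuine gap, and the obstacle you flag at the end is in fact fatal to the approach as written.

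The crux is your second sub-case: you claim that if the two \pmin-players use private messages to share their clauses, the observable sender--receiver pattern leaks their positions to \pmax. But this is not justified. In the private communication extension, third parties observe only \emph{which} pairs exchanged messages, not the contents. Your reduction forces \pmax to run a nontrivial protocol (to extract matching-pennies value), and any protocol that makes meaningful use of private channels will contain rounds in which players send to one another; since \pmax does not know the team assignment, it cannot avoid scheduling such a slot between the two \pmin positions. The \pmin-players---who know each other's positions and know \pmax's committed strategy---can then send protocol-compliant content to every \pmax-player while encoding their clause in the content of the message to the other \pmin-player, producing exactly the prescribed sender--receiver pattern. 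Two 3-literal clauses always admit a non-contradicting choice of one literal each, so after sharing clauses \pmin avoids inconsistency with probability $1$ regardless of whether $\phi$ is satisfiable, and your gap collapses. (There is also a separate error in the matching-pennies component: as you have written it, the three-player \pmax majority wants the five bits to \emph{disagree}, which they can guarantee deterministically by having their own three bits differ; and $\matchingpennies{5}$ as defined in \Cref{par:matching-pennies} has a single adversary, not two, so the $1/6$ figure does not transfer.)

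The paper sidesteps this entire tension with a much simpler device: it builds a game in which \pmax takes \emph{no actions at all}. The game value is then monotone nonincreasing in $(M,R)$, since communication can only help \pmin; and because \pmax chooses $(M,R)$ in the definition of $\symhrvaluepriv$, \pmax optimally selects $R=0$, eliminating communication outright. With no communication available, the two \pmin-players are placed into the classical Koller--Megiddo SAT gadget (one sees a random clause and names a variable from it; the other sees only that variable and names a truth value), whose value equals the maximum fraction of clauses satisfiable by any assignment. The reduction from UNSAT is then immediate, with no signature or trade-off argument needed.
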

\begin{proof}
    We reduce from UNSAT. Let $\phi$ be any 3-CNF-SAT formula, and consider the following $5$-player hidden-role game. Two players are chosen uniformly at random to be on team \pmin; the rest are on team \pmax. The players on team \pmin know each other. The players on team \pmin play the SAT gadget game described by \citet{Koller92:Complexity}. Namely:
    \begin{enumerate}
        \item The players on team \pmin are numbered P1 and P2, at random, by chance.
        \item Chance selects a clause $C$ in $\phi$ and tells P1.
        \item P1 selects a variable $x_i$ in $C$, and that variable (but not its sign in $C$, nor the clause $C$ itself) is revealed to P2.
        \item P2 selects an assignment $b_i \in \{0, 1\}$ to $x_i$. \pmin wins the gadget game if the assignment $b_i$ matches the sign of $x_i$ in $C$.
    \end{enumerate}
    The value of this game is decreasing with $M$ and $R$ since \pmax does nothing, so it is in the best interest of \pmax to select $R = 0$ (\ie, allow no communication). In that case, the best probability with which \pmin can win the game is exactly the maximum fraction of clauses satisfied by any assignment, which completes the proof.
\end{proof}

The above result is fairly straightforward: it is known that optimizing the joint strategy of a team with asymmetric information\footnote{For our purposes, we will say that \pmin has {\em symmetric information} if all players \pmin have the same observation at every timestep. This implies that they can be merged into a single player without loss of generality.} is hard~\cite{Koller92:Complexity}, and private communication does not help if \pmax does not allow its use. However, next, we will show that the result even continues to apply when \pmin has {\em symmetric} information, that is, when the original game $\Gamma$ is coordinated. This may seem mysterious at first, but the intuition is the following. Just because $\Gamma$ has symmetric information for the \pmin-team, does not mean $\gsymsplit(\gcommpriv(\Gamma))$ does. Indeed, \pmax-players can send different private messages to different \pmin-players, resulting in asymmetric information among \pmin-players. This result illustrates precisely the reason that we define two different split-personality games, rather than simply dealing with the special case where the original game $\Gamma$ has symmetric information for the \pmin-team.

\begin{theorem}\label{thm:uninformed hard}
    Even in $3$-vs-$2$ hidden-role games with a mediator in which no \pmin-player has any information beyond the team assignment, the uncoordinated hidden-role value problem with private communication is \coNP-hard.
\end{theorem}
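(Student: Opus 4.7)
The plan is to reduce UNSAT to the uncoordinated hidden-role value problem, strengthening the argument of \Cref{thm:private hard} by removing the initial information given to \pmin-players by chance. The key observation is that in $\gsymsplit(\gcommpriv(\Gamma))$, the two \pmin-players' split personalities are formally distinct players whose strategies are chosen independently as functions of their individual observations; moreover, although the base game $\Gamma$ starts them with symmetric empty observations, the mediator (always on \pmax) commits to a messaging strategy that can introduce asymmetric information between them, effectively providing the substrate on which a Koller--Megiddo SAT gadget can be embedded inside the \pmin-team's joint best-response problem.

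Concretely, I would construct a five-player hidden-role game $\Gamma_\phi$ with the mediator $M$ and four slot players partitioned into two ``halves'', $\{P_1, P_2\}$ and $\{P_3, P_4\}$. Chance samples two \pmin-players, one from each half, and privately reveals a uniformly random clause $C$ of $\phi$ to $M$. The base-game action structure is tailored to encode the SAT gadget: each of $P_1, P_2$ outputs a variable $t_i \in [n]$ that is privately forwarded only to $P_3, P_4$, who then each output a bit $b_j \in \{0,1\}$. The utility awards \pmin $+1$ exactly when the variable $t$ output by the \pmin-player in the first half satisfies $t \in C$ and the bit $b$ output by the \pmin-player in the second half matches $\mathrm{sign}(t, C)$; the base-game observation structure is arranged so that $P_3, P_4$ see only variables, never $C$ directly. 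When asymmetric information is preserved between the two \pmin-players, the resulting SAT-gadget value equals the maximum satisfying fraction of $\phi$, which distinguishes $\phi$ SAT from $\phi$ UNSAT by a $1/m$ gap.

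The main challenges are (i) arranging that $M$'s committed strategy actually transmits $C$ into the \pmin-player in the first half, (ii) preventing that \pmin-player from forwarding the full clause to its teammate via the communication extension $\gcommpriv$, and (iii) preventing the \pmax-half-one player from leaking $C$ through $t_i$ in a way that helps \pmin-half-two. For (i), I plan to add a ``forcing'' utility term that rewards \pmax by a large constant whenever the \pmax-half-one player outputs a specified nontrivial function of $C$; since $M$ does not know the team assignment, its symmetric optimal strategy is to privately transmit $C$ to both $P_1$ and $P_2$, inadvertently handing $C$ to whichever is on \pmin. For (ii), I would exploit the supremum over $(M, R)$ in \Cref{def:hre-value}, combined with non-leakage utility penalties analogous to those in \Cref{thm:public-hard,thm:private hard}, to argue that \pmax's best $(M, R)$ is just large enough to transmit a single variable index but too small to encode a full clause in any one message. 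For (iii), the utility and action structure is designed so that any leak from \pmax-half-one strictly helps \pmin, forcing \pmax's best response for that player to be a (near-)constant strategy that still collects the forcing reward.

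The hardest technical step I anticipate is (ii): unlike the non-leakage arguments in the no- and public-communication hardness results, here the party who would leak the clause is on the adversary team and directly benefits from leaking, so the prevention must be structural---through the finite-alphabet bound chosen in the supremum---rather than through utility disincentives aimed at the leaker. Once (ii) is formalized, I expect the value of $\Gamma_\phi$ to equal, up to $1/\mathrm{poly}(|H|)$ error, an affine function of the Koller--Megiddo SAT-gadget value, yielding the $1/\mathrm{poly}(|H|)$ threshold gap needed for \coNP-hardness via reduction from UNSAT.
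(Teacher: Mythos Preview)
Your plan has a genuine gap at step (ii), and it is precisely the step you flag as hardest. The bandwidth argument cannot work: in the communication extension, the \emph{same} $R$ rounds with the \emph{same} alphabet $[M]$ are inserted between every pair of game steps, and every player has access to them. If $R\log M$ is large enough for the mediator to transmit $C$ to the first-half players before they act (which (i) requires), then after the first-half action there is another block of $R$ rounds in which the first-half \pmin-player can forward all of $C$ to the second-half \pmin-player. There is no way to choose $(M,R)$ that is ``large enough for a variable index but too small for a clause'' while still allowing the mediator to deliver the clause in the first place; the capacity is symmetric across players and across communication blocks.

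The paper takes a different route that does use a utility disincentive aimed (indirectly) at the \pmin-leaker, exactly the approach you dismiss. It reduces from \textsc{Max-Cut} rather than \textsc{UNSAT}. Chance draws vertices $v_1,\dots,v_4$ visible only to the mediator; a very large penalty forces the mediator to privately tell each slot player its own $v_i$ (this is your (i)), which injects asymmetric information into the two \pmin-players. The crucial mechanism for (ii) is a separate utility term: each \pmax slot player names a player it believes is on team \pmin, earning $+L$ if correct and $-L/2$ otherwise. Combined with the fact that the private-communication extension publicly reveals the \emph{metadata} ``$i$ sent a private message to $j$'' (see the definition of $\gcommpriv$), any non-mediator who sends a private message is immediately identifiable as \pmin and gets punished. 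Thus \pmin-players must mimic \pmax-players' silence, and the asymmetric $(v_1,v_2)$ information cannot be pooled; the \pmin best response is then exactly the \textsc{Max-Cut} coordination problem. Your proposal is missing this identification-and-punishment gadget, and without it the reduction collapses because \pmin can always pool information through the second communication block.
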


\begin{proof}
    We will reduce from (the negation of) MAX-CUT. Let $G$ be an arbitrary graph with $n$ nodes and $m$ edges, and consider the following $5$-player hidden-role game with $2$ players on the \pmin-team and $3$ players on the \pmax-team. Player 5 is always on team \pmax and is the mediator. The other four players are randomly assigned teams so that two are on team \pmax and two are on team \pmin. The game proceeds as follows.
    \begin{enumerate}
        \item Chance chooses vertices $v_1, \dots, v_4$ uniformly at random from $G$. The mediator privately observes the whole tuple $(v_1, \dots, v_4)$.
        \item For notational purposes, call the players on team \pmax 3 and 4, and \pmin 1 and 2. (The mediator does not know these numbers.) After some communication, the following actions happen simultaneously:
        \begin{enumerate}
            \item P1 and P2 select bits $b_1, b_2 \in \{0, 1\}$;
            \item P3 and P4 select vertices $v_3', v_4'$ of $G$ and players $i_3, i_4 \in \{1, 2, 3, 4\}$.
        \end{enumerate}
        \item The following utilities are given: ($L$ is a large number to be picked later, and each item in the list represents an additive term in the utility function)
        \begin{enumerate}
            \item ({\em Correct vertex identification}) For each player $i \in \{3, 4\}$, if $v_i \ne v_i'$ then \pmax gets utility $-L^4$
            \item ({\em Agreement of \pmin-players}) If $v_1 = v_2$ and $b_1 \ne b_2$ then team \pmax gets utility $L$.
            \item ({\em Objective}) If $v_1 \ne v_2$, $b_1 \ne b_2$, and $(v_1, v_2)$ is an edge in $G$, then team \pmax gets utility $-1$.
            \item ({\em Privacy}) For each $j \in \{3, 4\}$, if $i^*_j$ is on team \pmin, then \pmax gets utility $L$. Otherwise, \pmax gets utility $-L/2$.
        \end{enumerate}
    \end{enumerate}

    We claim that this game has value (essentially) $-c^*$, where $c^*$ is the actual size of the maximum cut of $G$. To see this, observe first that the mediator {\em must} tell all players their true vertices $v_i$, lest it risk incurring the large negative utility $-L^2$. Further, any player except the mediator who sends a message must be on team \pmin. This prevents team \pmin from communicating. Thus, the mediator's messages force \pmin to play an asymmetric-information identical-interest game, which is hard.

    We now formalize this intuition. First, consider the following strategy for team \pmax: The mediator sends all players their true types, and \pmax-players play their types. If any \pmax-player sees a message sent from anyone except the mediator, the \pmax-player guesses that that player is on team \pmin. 
    
    Now consider any (pure) strategy profile of team \pmin. First, \pmin achieves utility $-c^*$ by observing the mediator's message and playing bits according to a maximum cut.  We now show that this is the best that \pmin can do. Sending messages is, as before, a bad idea. Thus, a pure strategy profile of \pmin is given by four  $f_1, f_2, f_3, f_4 : V \to \{0, 1\}$ denoting how player $i$ should pick its bits. But then $f_1 = f_2 = f_3 = f_4$; otherwise, the agreement of \pmin-players would guarantee that \pmin is not playing optimally for large enough $L$.

    Now, for any (possibly mixed) strategy profile of team \pmax, consider the following strategy profile for each \pmin-player. Let $f : V \to \{0, 1\}$ be a maximum cut. Pretend to be a \pmax-player, and let $v_i'$ be the vertex that would be played by that \pmax-player. Play $f(v_i')$. 

    First, consider any \pmax-player strategy profile for which, for some player $i$ and some $v_i$, the probability that $v_i' \ne v_i$ exceeds $1/L^2$. Then \pmax gets a penalty of roughly $L^2$ in expectation, but now setting $L$ large enough would forces \pmax to have utility worse than $-1$, so that \pmax would rather simply play $v_i$ with probability $1$.

    Now, condition on the event that $v_i = v_i'$ for all $i$ (probability at least $1 - \Theta(1/L^2)$). In that case, the utility of \pmax is exactly $-c^*$, because \pmin is playing according to the maximum cut. Thus, the utility of \pmax is bounded by $-(1 - \Theta(1/L^2)) c^* + \Theta(1/L^2 \cdot L) \le -c^*/n^2 + \Theta(n/L) < c^* + 1/2$ for sufficiently large $L$. Thus, solving the hidden-role game to sufficient precision and rounding the result would give the maximum cut, completing the proof.
\end{proof}

\begin{theorem}
    Even in $5$-vs-$4$ hidden-role games, the uncoordinated hidden-role value problem with public communication is $\Sigma_2^\P$-hard. 
\end{theorem}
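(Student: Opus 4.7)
I would reduce from the canonical $\Sigma_2^\P$-complete problem $\exists\forall$-3SAT: given a quantified Boolean formula $\psi = \exists x_1,\dots,x_n\,\forall y_1,\dots,y_m\,\phi(x,y)$ with $\phi$ in 3-CNF, decide whether $\psi$ is true. Equivalently (after standard padding), decide whether $\max_x \min_y f_\phi(x,y) \ge \tau$, where $f_\phi(x,y)$ is the fraction of clauses of $\phi$ satisfied by $(x,y)$. The goal is to encode this max-min quantity, up to additive $1/\poly(|H|)$, as the uncoordinated public-communication value of a concrete $5$-vs-$4$ adversarial team game, which is a special case of a hidden-role game.

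\textbf{Construction.} The game superimposes two sub-gadgets. The first is a \pmax-commitment gadget modeled on the proof of \Cref{thm:public-hard}: two \pmax-players $\mathsf{P}_1,\mathsf{P}_2$ privately receive independent uniform variables $u_1, u_2 \in \{x_1,\dots,x_n\}$ and each output a bit, while a \pmin-player $\mathsf{P}_3$ outputs a guess $(u_1',u_2')$; the same agreement and non-leakage penalty terms as in \Cref{thm:public-hard} are included. These force \pmax at equilibrium to commit to a deterministic function $f_x : \{x_1,\dots,x_n\}\to\{0,1\}$ encoding an assignment to the outer-quantified variables, and make any informative public message from \pmax punishable via $\mathsf{P}_3$'s guess. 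The second is a \pmin-response gadget built from the Koller--Megiddo construction of \Cref{thm:private hard}: chance samples a uniform clause $C$ of $\phi$, reveals it to \pmin-player $\mathsf{P}_4$ who picks a literal $\ell \in C$, reveals only the underlying variable of $\ell$ to \pmin-player $\mathsf{P}_5$, who outputs a bit. Because public communication is available here---and cannot simply be disabled by setting $R=0$ as in \Cref{thm:private hard}, since that would also hamper the first gadget's disabler---I add a \emph{mirror} disabler: a \pmax-player $\mathsf{P}_6$ outputs a guess $C'$ of the clause, with a non-leakage term calibrated so that uniform guesses contribute zero expected utility but any public leakage of $C$ is exploited. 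Two further \pmax dummies and one further \pmin dummy fill out the team sizes to exactly $5$-vs-$4$. Finally, a coupling term awards \pmax $+1$ iff the composite assignment (using $\mathsf{P}_1,\mathsf{P}_2$'s bits wherever some $u_i$ matches the variable of $\ell$, and $\mathsf{P}_5$'s bit on a $y$-variable literal) fails to satisfy $C$.

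\textbf{Verification and main obstacle.} With the penalty scales chosen in a strict hierarchy of increasing powers of a sufficiently large $L = \poly(n,m)$, the $(1-\alpha)q_0 + \alpha q_1$ decomposition argument from the proof of \Cref{thm:public-hard} applies independently to each disabler gadget: any conditional action distribution that is $\alpha$-far from a product of its marginals is exploited by the corresponding disabler with gain $\Omega(L\alpha)$, which dwarfs the $O(1)$ benefit such coordination could provide on the coupling objective. This pins both teams to essentially pure strategies with no informative use of the public channel, after which the value of the game is an affine image of $\max_{f_x} \min_{f_y}\,\Pr_C[C \text{ is not satisfied by } (f_x,f_y)]$, from which one reads off an answer to $\exists\forall$-3SAT within precision $1/\poly(|H|)$. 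I expect the main obstacle to be ruling out cross-gadget cheating: for instance, \pmax might try to sacrifice a small penalty in the mirror disabler in exchange for disrupting the SAT sub-gadget, or \pmin might try to use the MC-side queries $u_1, u_2$ as implicit correlated randomness for the SAT sub-gadget. Handling this requires ensuring that the chance draws feeding the two sub-gadgets are mutually independent and that the hierarchy of penalty constants leaves no profitable trade between the two; the remainder of the argument reuses the leakage bounds of \Cref{thm:public-hard} essentially verbatim, with the \pmin side handled symmetrically via $\mathsf{P}_6$.
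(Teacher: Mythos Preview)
Your mirror disabler does not work, and the failure exposes exactly the difficulty the paper's construction is built to overcome. By the definition of the uncoordinated hidden-role value, the \pmax-team commits its entire profile---including $\mathsf{P}_6$'s decoding rule from transcripts to clause guesses---\emph{before} \pmin chooses anything. Knowing $\mathsf{P}_6$'s fixed (possibly randomized) decoder, $\mathsf{P}_4$ and $\mathsf{P}_5$ can agree on an injective encoding $m:C\mapsto\text{message}$ that $\mathsf{P}_5$ inverts perfectly while $\mathsf{P}_6$ is correct with probability at most $1/k$ (a derangement-type argument once $M\ge k$); so the non-leakage term contributes at most $0$ to \pmax no matter what $\mathsf{P}_6$ does. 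The leakage argument of \Cref{thm:public-hard} transfers only because there the guesser is a \pmin-player who best-responds; mirroring it to a \pmax-guesser breaks precisely at the commitment asymmetry. The paper sidesteps this by making the roles genuinely hidden and having the \pmax-side ``disabler'' (its P1) guess \emph{which regular player is on team \pmin} rather than decode any message. The committed \pmax strategy ``nobody on my team speaks; I point at whoever does'' needs no knowledge of \pmin's encoding, so it survives the order of commitment---this is where hidden roles are doing real work.

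A second, independent gap: your coupling term is not well-defined. You read off the $x$-assignment from $\mathsf{P}_1,\mathsf{P}_2$'s bits ``wherever some $u_i$ matches the variable of $\ell$,'' but $u_1,u_2$ are single uniform draws, so for most $x$-literals neither matches, and you do not say what happens then; moreover $\mathsf{P}_5$ outputs one bit for one variable, so ``fails to satisfy $C$'' cannot be evaluated on the whole clause. Depending on how you complete the definition, $\mathsf{P}_4$ (who knows $f_x$ since \pmin best-responds) can steer toward $x$-literals in unsatisfied clauses and the objective no longer cleanly encodes $\max_x\min_y$. Finally, your stated reason for rejecting the ``just set $R=0$'' route is backwards: the first gadget's disabler is needed only to control $R>0$; at $R=0$ it is vacuous and harmless. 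In fact, dropping the mirror disabler entirely and observing that (i) at $R=0$ the value is your intended hard quantity $V_0\ge 0$, and (ii) for $R>0$ the first disabler caps \pmax while \pmin's free communication drives the coupling to $0$, would show $\sup_{M,R}=V_0$---but you would still need to repair the coupling to make $V_0$ actually encode a $\Sigma_2^\P$-hard quantity.
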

\begin{proof}
    We reduce from $\exists\forall$3-DNF-SAT, which is $\Sigma_2^\P$-complete~\cite{Haastad01:Some}. The $\exists\forall$3-DNF-SAT problem is the following. Given a 3-DNF formula $\phi(x, y)$ with $k$ clauses, where $x \in \{0, 1\}^m$ and $y \in \{0, 1\}^n$, decide whether $\exists x \forall y\ \phi(x, y)$. Consider the following game. There are 9 players, 5 on team \pmax and 4 on team \pmin. One designated player, who we will call P0, is \pmax and has no role in the game. (The sole purpose of this player is so that \pmax is a majority.) The other players are randomly assigned teams. These other players are randomly assigned teams. For the sake of analysis, we number the remaining players P1 through P8 such that P1, P3, P4, P5 are on team \pmax and P2, P6, P7, P8 are on team \pmin. \pmin knows the entire team assignment, whereas \pmax dos not. We will call P3--P8 ``regular players'', and P1--P2 ``guessers''. The game proceeds as follows.
    \begin{enumerate}
        \item For each regular \pmax-player, chance selects a literal (either $x_j$ or $\neg x_j$), uniformly at random. For each regular \pmin-player (P6--8), chance selects a literal (either $y_j$ or $\neg y_j$), also uniformly at random. Each player privately observes the {\em variable} (index $j$), but not the sign of that variable.
        \item After some communication, the following actions happen simultaneously.
        \begin{enumerate}
            \item P3--P8 select assignments (0 or 1) to their assigned variables.
            \item P1 (who observes nothing) guesses a player (among the six players P3--P8) that P1 believes is on team \pmin.
            \item P2 (who observes nothing) guesses one literal for each \pmax-player (there are $K := (2m)^3$ such possible guesses.)
        \end{enumerate}
            \item The following utilities are assigned. ($L$ is a large number to be picked later, and each item in the list represents an additive term in the utility function)
        \begin{enumerate}
            \item ({\em Satisfiability}) Chance selects three regular players at random. If the three literals given to those players form a clause in $\phi$, and that clause is satisfied, \pmax gets utility 1.
            \item ({\em Consistency}) If chance selected the same variable three times, if the three players did not give the same assignment, then the team (\pmax if the variable was an $x_i$ and \pmin if the variable was a $y_j$) gets utility $-L^2$.
            \item ({\em Privacy for \pmax}) If P2 guesses the three literals correctly, \pmin gets utility $L^3K$; otherwise, \pmin gets utility $-L^3$.\footnote{These utilities are once again selected so that a uniformly random guess gets utility $0$.}
            \item ({\em Privacy for \pmin}) If P1 guesses a \pmin-player, \pmax gets utility $L$; otherwise, \pmax gets utility $-L$.
        \end{enumerate}
    \end{enumerate}
    We claim that the value of this game with public communication is at least $1$ if and only if $\phi$ is $\exists\forall$-satisfiable. Intuitively, the rest of the proof goes as follows:
    \begin{enumerate}
        \item \pmax will not use the public communication channels if $\phi$ is $\exists\forall$-satisfiable. By the {\em Privacy for \pmin} term, \pmin-players therefore cannot do so either without revealing themselves immediately. Thus, \pmax will get utility at least $1$ if $\phi$ is $\exists\forall$-satisfiable.
        \item If $\phi$ is not $\exists\forall$-satisfiable, then consider any \pmax-team strategy profile. By the {\em Privacy for \pmax} term, team \pmax cannot make nontrivial use of the public communication channel without leaking information to P9. By the {\em Consistency} term, P1 through P3 must play the same assignment, or else incur a large penalty. So, \pmax must play essentially an assignment to the variables in $x$, but such an assignment cannot achieve positive utility because there will exist a $y$ that makes $\phi$ unsatisfied. 
    \end{enumerate}
    We now formalize this intuition. Suppose first that $\phi$ is $\exists\forall$-satisfiable, and let $x$ be the satisfying assignment. Suppose that \pmax-players never communicate and assign according to $x$, and P4 guesses any player that sends a message. Then any \pmin-strategy that sends a message is bad for sufficiently large $L$ because it guarantees a correct guess from \pmax; any \pmin-strategy that is inconsistent is bad because it will lose utility at least $L$; and any \pmin-strategy that is consistent will cause \pmax to satisfy at least one clause. Thus \pmax guarantees utility at least $1$.

    Now suppose that $\phi$ is not $\exists\forall$-satisfiable. Consider any strategy profile for \pmax. Suppose that the \pmin-players play as follows. During the public communication phase, each \pmin-player samples a literal from the set $\{ x_1, \neg x_1, \dots, x_m, \neg x_m \}$ uniformly at random and pretends to be a \pmax-player given that literal. P8 observes the public transcript, and selects the triplet of literals that is conditionally most likely given the transcript. By an identical argument to that used in \Cref{thm:public-hard}, \pmax then cannot profit from using the communication channel for sufficiently large $L$. Therefore we can assume that \pmax does not use the communication channels, and therefore by the argument in the previous paragraph, neither does \pmin. 

    Now, the strategy of each \pmax-player $i$ can be described by a vector $s_i \in [0, 1]^m$, where $s_{ij}$ is the probability that player $i$ assigns $1$ to variable $x_t$. For sufficiently large $L$, we have $s_i \in [0, \eps] \cup [1-\eps, 1]$ where $\eps = 1/L$, because otherwise \pmax would incur a penalty proportional to $L^2 \eps > k$ and would rather just play (for example) the all-zeros profile, which guarantees value $0$. Condition on the event that every player at every variable chooses to play the most-likely assignment according to the $s_i$s. This happens with probability at least $1 - \Theta(m \eps)$. These assignments must be consistent (\ie, every player must have the same most-likely assignment), or else the players would once again incur a large penalty proportional to $L^2$. Call that assignment $x$, and let $y$ be such that $\phi(x, y)$ is unsatisfied. Suppose \pmin plays according to $y$. Then \pmax's expected utility is bounded above by $\Theta(m \eps k)$: with probability $1-\Theta(m\eps)$ it is bounded above by $0$; otherwise it is bounded above by $k$. For $\eps < \Theta(1/mk)$ this completes the proof.
\end{proof}

\section{The Game {\em Avalon}}\label{sec:avalon}

\subsection{Equivalence of Split-Personality Games}\label{sec:avalon:split-personality}

In this section, we show that the choice of whether to use \gsymsplit or \gsplit---that is, whether \pmin is coordinated---is irrelevant for the instantiations of {\em Avalon} that we investigate in this paper. We refer to \Cref{sec:experiments} for a complete description of the \emph{Avalon} game. 

For this section, we assume that all \pmin players know the roles of all other \pmin players. This is always true in the instances considered in the experiments. 
In particular, we considered instances with six or fewer players and no \emph{Oberon} role, which is a \pmin member which is not revealed as such to both \emph{Merlin} and the other \pmin members. This guarantees that both the \pmin-players can deduce their respective roles. Conversely this is not guaranteed, as for example with 7 players (and hence $k=3$) and Mordred, the two non-Mordred \pmin-players do not know the identity of Mordred.

The main observation we make in this subsection states that the choice of whether to use  \gsymsplit or \gsplit does not matter.

\begin{theorem}\label{th:avalon}
    In {\em Avalon} with private communication, assuming that all \pmin players know each others' roles, every $r$-round uncoordinated hidden-role equilibrium is also an $r$-round coordinated hidden-role equilibrium.
\end{theorem}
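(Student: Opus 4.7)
The plan is to prove the equality $\hrvaluepriv(\Gamma) = \symhrvaluepriv(\Gamma)$ for the considered \emph{Avalon} instances $\Gamma$, which immediately implies the claim, since the two equilibrium notions differ only in whether the adversary is coordinated, and an $r$-round profile with value matching the common TME value in one version must match it in the other. One direction is immediate: a coordinated \pmin adversary can always simulate an uncoordinated one (each player simply ignores the shared observations and extra actions assigned to their teammates), so we always have $\hrvaluepriv(\Gamma) \le \symhrvaluepriv(\Gamma)$ for any hidden-role game.

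For the reverse inequality, the plan is to exploit the hypothesis that all \pmin players know each others' roles together with the specific structure of \emph{Avalon} to argue that team \pmin retains symmetric information throughout play, so the uncoordinated team is informationally equivalent to a single coordinated adversary. Concretely, I would first invoke the revelation principle (\Cref{th:revelation principle}) and reduce to the mediated two-player zero-sum game $\Gamma_0$ from \Cref{cor:zero-sum-efg}. Next I would show that, in \emph{Avalon} with $n \le 6$, an optimal mediator strategy in $\Gamma_0$ can be taken to broadcast all its mission-slate and vote recommendations publicly rather than privately, so that the only private information ever held by a \pmin player comes from the common knowledge distributed at the start of the game.

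Given this structural normal form of the mediator, every \pmin player has, at every timestep, exactly the same information: the initial role distribution, the public transcript of missions, votes, and mission outcomes, and the public recommendations. Hence any coordinated \pmin strategy in $\gsplit(\gcommpriv^{M,R}(\Gamma))$ can be realised in $\gsymsplit(\gcommpriv^{M,R}(\Gamma))$ simply by having each \pmin player locally compute the coordinated action prescribed by the shared observation history. No intra-team messages are needed, so there is also no observable metadata that could leak \pmin identities to \pmax, and $\symhrvaluepriv(\Gamma) \le \hrvaluepriv(\Gamma)$ follows.

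The main obstacle is rigorously establishing the ``public-recommendations without loss of generality'' step. The plan is to check that, because \pmin is a strict minority in the instances considered and cannot refuse to have missions assigned or block votes on its own, a mediator strategy that hides mission-related recommendations in private channels gains nothing over one that broadcasts them: any deviation by \pmin from a publicly announced choice would be equally exposed as under private recommendations, while the extra privacy cannot be used to form correlated \pmax-side secrets because the mediator is operated via MPC from \Cref{thm:main} and its recommendations already constitute \pmax's entire strategic surface. Working this out carefully for the specific action space of \emph{Avalon} (mission proposal, accept/reject vote, optional fail) finishes the reverse inequality and hence the theorem.
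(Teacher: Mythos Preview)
Your overall strategy---show that under an optimal mediator the \pmin-players have symmetric information, so coordination is irrelevant---is the same as the paper's. But the logical direction of your argument is inverted, and as written it does not prove $\symhrvaluepriv(\Gamma) \le \hrvaluepriv(\Gamma)$.

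Recall that $\symhrvaluepriv(\Gamma) = \sup_x \min_{y \in Y_U} u(x,y)$ where $Y_U$ is the set of \emph{uncoordinated} \pmin-profiles. To show $\symhrvaluepriv \le \hrvaluepriv$ you must upper-bound this supremum over \emph{all} \pmax-strategies $x$. Your argument instead fixes the single strategy $x^*$ that is optimal for the \emph{coordinated} problem (the mediator with public recommendations coming out of $\Gamma_0$), and shows that against $x^*$ the uncoordinated \pmin-team can match the coordinated best response. That yields $\min_{y \in Y_U} u(x^*,y) \le \hrvaluepriv$, which says only that one particular \pmax-strategy has uncoordinated value at most $\hrvaluepriv$; it says nothing about the supremum. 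In particular, invoking \Cref{th:revelation principle} and \Cref{cor:zero-sum-efg} is not legitimate here: those results are proved for the \emph{coordinated} split-personality game, so they constrain the argmax of $\hrvaluepriv$, not of $\symhrvaluepriv$.

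The paper closes this gap by first proving a separate revelation-principle statement for the \emph{uncoordinated} case (\Cref{prop:rp-zs-uncoord}): the same simulation argument as in \Cref{th:revelation principle} goes through verbatim for \pmax, but now the \pmin-players in the resulting game cannot freely communicate, producing a \emph{team} game $\tilde\Gamma_0$ (mediator versus uncoordinated adversaries) with $\symhrvaluepriv(\Gamma^*) \le \tmevalue(\tilde\Gamma_0)$. Only after this upper bound is in hand does the paper invoke the \emph{Avalon}-specific fact (\Cref{lem:unilateral proposals}) that the mediator can unilaterally dictate missions, which makes \pmin's information symmetric in $\tilde\Gamma_0$ and hence $\tmevalue(\tilde\Gamma_0) = {\sf Val}(\Gamma_0) = \hrvaluepriv(\Gamma)$. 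Your ``public-recommendations without loss of generality'' step is essentially \Cref{lem:unilateral proposals}, but you need to apply it inside $\tilde\Gamma_0$, not $\Gamma_0$, and you need \Cref{prop:rp-zs-uncoord} to get to $\tilde\Gamma_0$ in the first place.
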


Before proving the result, we first observe that we cannot {\em a priori} assume the use the results in \Cref{sec:one-sided} for this proof, because they do not apply to the symmetric hidden-role equilibrium. 

Fortunately, a variant of the revelation principle for mediated games (\Cref{th:revelation principle}) and the resulting zero-sum game theorem (\Cref{cor:zero-sum-efg}) still holds {\em almost} verbatim: the lone change is that the zero-sum game formed from symmetric splitting, which we will denote $\tilde\Gamma_0$, is a zero-sum {\em team} game, where the \pmin players cannot perfectly coordinate with each other, and the equality of value becomes an inequality:
\begin{proposition}\label{prop:rp-zs-uncoord}
    Let $\Gamma^*$ be a mediated hidden-role game, and let $\Gamma_0$ be the zero-sum version posed by \Cref{cor:zero-sum-efg}. Let $\tilde\Gamma_0$ be the game that is identical to $\Gamma_0$, except that the \pmin-players are not coordinated and thus cannot communicate except as permitted in $\Gamma$. That is, $\tilde\Gamma_0$ is a {\em team game} with the mediator as the \pmax-player and the adversaries as the \pmin-players. Then $\symhrvalue_{\sf priv}(\Gamma^*) \le {\sf TMEVal}(\tilde \Gamma_0)$, where {\sf TMEVal} is the TME value function (with \pmax committing first).\footnote{Since \pmin is the nontrivial team and \pmax commits first, the TME and TMECor values of $\tilde \Gamma_0$ coincide here.}
\end{proposition}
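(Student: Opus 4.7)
The plan is to adapt the two-step template from \Cref{th:revelation principle} and \Cref{cor:zero-sum-efg} to the uncoordinated setting: first prove a revelation principle that restricts WLOG to direct \pmax-profiles, then compress the resulting game into $\tilde\Gamma_0$. The inequality (rather than equality) arises because, in $\tilde\Gamma_0$, the \pmin-players lose the ability to coordinate via the extra communication rounds of the extension, so this compression is generous to \pmax.

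For the first step, I would verify that the proof of \Cref{th:revelation principle} goes through verbatim when \pmin is uncoordinated. That proof converts an arbitrary \pmax-profile $x$ into its direct version $x'$ by having the mediator privately simulate each \pmax-player, and it converts a \pmin best-response $y'$ against $x'$ into a \pmin deviation $y$ against $x$ by having each \pmin-player $i$ locally run simulators of both $x_i$ and $y_i'$, fed only with $i$'s own observations and messages. Because everything is constructed player-by-player and never consults the information of another \pmin-player, $y$ is an uncoordinated \pmin-profile whenever $y'$ is. Hence, optimal uncoordinated \pmax-profiles in $\gsymsplit(\gcommpriv^{M, R}(\Gamma^*))$ can be taken WLOG to be direct.

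Next, I would construct $\tilde\Gamma_0$ by copying the node expansion of \Cref{cor:zero-sum-efg}: at each history $h$ of $\Gamma^*$ the mediator collects reports $\tilde o_i$ (with $\tilde o_i = o_i(h)$ for honest \pmax-players, and $\tilde o_i \in O(s_i) \cup \{\bot\}$ chosen by each \pmin-player from messages that are consistent with some \pmax-player), emits recommendations $a_i \in A_i(\tilde o_i)$, and then each \pmin-player chooses its played action. The crucial change from $\Gamma_0$ is that each \pmin-player $i$ picks $\tilde o_i$ and its action using only its own observation and recommendation history, with no shared \pmin-state. The same counting bound as in \Cref{cor:zero-sum-efg} applies, so $\tilde\Gamma_0$ has at most $|H|^{k+1}$ nodes.

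For the inequality, fix $M, R$ and let $x$ be any direct \pmax-profile in the extension. The mediator behaviour inside $x$ induces a mediator strategy $\tilde x$ in $\tilde\Gamma_0$. Conversely, any \pmin-profile $\tilde y$ in $\tilde\Gamma_0$ can be executed against $x$ in the extension by each \pmin-player generating its report and action locally and refusing any cross-\pmin communication, so $\tilde y$ is a legal uncoordinated \pmin-profile in $\gsymsplit(\gcommpriv^{M, R}(\Gamma^*))$ with the same utility. Therefore
\begin{align}
    \min_{y} u(x, y) \;\le\; \min_{\tilde y} \tilde u(\tilde x, \tilde y),
\end{align}
so maximizing over direct $x$ and then taking $\sup_{M, R}$ yields $\symhrvaluepriv(\Gamma^*) \le \tmevalue(\tilde\Gamma_0)$ (the right-hand side does not depend on $M, R$). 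The main obstacle I anticipate is confirming that the revelation-principle construction is player-local on the \pmin-side; re-reading the proof of \Cref{th:revelation principle} shows that each simulator is driven exclusively by one player's observation stream, so no \pmin-coordination is smuggled in, and the rest is routine bookkeeping.
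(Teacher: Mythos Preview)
Your proposal is correct and follows essentially the same approach as the paper: apply the revelation principle (checking it holds verbatim in the uncoordinated setting), then observe that $\tilde\Gamma_0$ restricts \pmin's strategy space relative to the direct extension (since \pmin-players there cannot use the private channels to coordinate), which can only help \pmax. The paper's own proof is terser---it simply asserts that \Cref{th:revelation principle} applies verbatim and notes that the remaining difference is that \pmin-players in $\Gamma^*$ can still ``somewhat coordinate by sending private messages (albeit at the cost of revealing themselves as adversaries)'', whereas in $\tilde\Gamma_0$ they cannot---but your more explicit verification that the revelation-principle construction is player-local on the \pmin side is exactly the content of that assertion.
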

\begin{proof}
    The revelation principle (\Cref{th:revelation principle}) applies verbatim in this setting, so we may assume that \pmax in $\Gamma^*$ uses the mediator as specified in the revelation principle. With this assumption, the only difference that remains between $\Gamma^*$ and $\tilde\Gamma_0$ is that, in the latter, \pmin-players cannot coordinate {\em at all}, whereas in $\Gamma^*$, \pmin-players are able to somewhat coordinate by sending private messages (albeit at the cost of revealing themselves as adversaries). But this is a strict disadvantage for \pmin in $\tilde\Gamma_0$.
\end{proof}

We therefore have the inequality chain
\begin{align}
   {\sf Val}(\Gamma_0) = \hrvalue_{\sf priv}(\Gamma) \le \symhrvalue_{\sf priv}(\Gamma) \le \symhrvalue_{\sf priv}(\Gamma^*) \le {\sf TMEVal}(\tilde \Gamma_0)
\end{align}
where $\Gamma_0$ is the zero-sum game that appears in \Cref{cor:zero-sum-efg}, with asymmetric splitting, {\sf Val} is the zero-sum game value, and {\sf Val} is the zero-sum value.

So far, this inequality chain would hold for {\em any} hidden-role game $\Gamma$. We will now show that, specifically for {\em Avalon}, we have ${\sf TMEVal}(\tilde \Gamma_0) = {\sf Val}(\Gamma_0)$, which would complete the proof of \Cref{th:avalon}. This part of the proof depends on the special structure of {\em Avalon}.

The only difference between $\Gamma_0$ and $\tilde\Gamma_0$ is that, in the latter, \pmin-players cannot communicate among each other because \Cref{cor:zero-sum-efg} only consider communication with the mediator. But, what would \pmin-players even need to communicate? Their information is entirely symmetric, except that they do not know the recommendations that were given to their teammates in the mission proposal and voting phases. We will show the following result.
\begin{lemma}\label{lem:unilateral proposals}
    For both $\tilde \Gamma_0$ and $\Gamma_0$, the game value does not change if we make the assumption that the mediator unilaterally dictates missions (circumventing the mission proposal and voting process).
\end{lemma}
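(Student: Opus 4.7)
The plan is to establish the equality by two matching inequalities between each original game and its ``dictating'' variant (which I denote $\Gamma_0^\star$ and $\tilde\Gamma_0^\star$) in which the proposal--vote phase at every mission round is replaced by a single dictation step by the mediator. The key structural fact I will exploit is that in the 5- and 6-player Avalon instances we consider, \pmax is a strict majority of the voters ($n-k\ge 3>k$), and no more than $k\le 2$ consecutive \pmin-aligned seats can appear in the cyclic leader rotation, so within at most three proposals some \pmax-aligned leader is reached and the ``five-failed-proposals $\Rightarrow$ auto-pass'' rule is never triggered by the constructions below.

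For the direction $\mathrm{value}(\Gamma_0^\star)\le \mathrm{value}(\Gamma_0)$, I would give an explicit embedding: given any mediator strategy $\sigma^\star$ in $\Gamma_0^\star$ that dictates mission $M$ at a given round, the corresponding strategy $\sigma$ in $\Gamma_0$ recommends that every \pmax-aligned leader propose $M$, every \pmax-aligned voter vote YES on $M$, and every \pmax-aligned voter vote NO on anything else. A \pmin-aligned leader either itself proposes $M$ (in which case it passes immediately) or proposes some $M'\ne M$, which is vetoed by the \pmax-majority and triggers a rotation; by the structural fact this recurs at most twice before $M$ is passed. Downstream of the vote phase $\sigma$ coincides with $\sigma^\star$, so their induced outcome distributions (and thus utilities) agree for every \pmin response. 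The construction works identically in $\tilde\Gamma_0$, since vote outcomes are decided by the \pmax-majority alone and do not care whether the \pmin-team is coordinated.

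For the reverse inequality, the point is that \pmax being a strict majority makes the mission outcome in $\Gamma_0$ a deterministic function of the mediator's recommendations, not of \pmin's actions: \pmin can only choose what to propose when it leads (costing at most a rotation) or how to vote (which cannot sway a majority). Given any mediator strategy $\sigma$ in $\Gamma_0$, I would define $\sigma^\star$ in $\Gamma_0^\star$ to maintain an internal simulation of $\sigma$ in which the absent vote phase is simulated against a fixed reference \pmin vote-behaviour $\pi_0$ (say, $\pi_0$ follows the mediator's vote recommendations), and at each mission round dictate the mission that the simulation says $\sigma$ would have passed. Given any \pmin strategy $\pi^\star$ in $\Gamma_0^\star$, extend it to a \pmin strategy $\pi$ in $\Gamma_0$ by inserting $\pi_0$ during the vote phase; the histories and utilities in the two runs then coincide exactly, so $u(\sigma^\star,\pi^\star)\ge \min_\pi u(\sigma,\pi)$, and minimising over $\pi^\star$ proves $\mathrm{value}(\sigma^\star)\ge \mathrm{value}(\sigma)$. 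The same argument applies verbatim to $\tilde\Gamma_0$.

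The main subtlety I expect is verifying that \pmax-majority truly dictates the mission regardless of \pmin's in-vote behaviour across all role configurations solved (pure \emph{Resistance}; Merlin only; Merlin$+$Mordred; Merlin$+$2 Mordreds; Merlin$+$Mordred$+$Percival$+$Morgana): one must confirm that the rotation never pushes into the fifth proposal, and that the clean orchestration's recommendations depend only on team membership, not on within-team role, so that no side-channel through the public vote pattern can appear to differentiate, e.g., Merlin from an ordinary \pmax-player. These checks are purely combinatorial given $n\le 6$ and $k\le 2$, after which the two inequalities combine to give the lemma for both $\Gamma_0$ and $\tilde\Gamma_0$.
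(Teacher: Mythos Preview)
Your proposal is correct and rests on the same structural observation as the paper: because \pmax is a strict voting majority and the number of proposal slots exceeds the number of \pmin-players, the mediator's chosen mission is guaranteed to pass regardless of how \pmin behaves in the proposal/vote phase. The paper packages this as a single \emph{restrict-and-verify} argument: constrain both sides (mediator always proposes one fixed mission and ignores vote-phase information; {\em Spies} obey recommendations and ignore everything except the approved mission), then check that neither side has a profitable deviation in the unrestricted game. You instead prove two value inequalities via explicit strategy embeddings (dictating strategy $\to$ full-game strategy; full-game strategy $\to$ dictating strategy via internal simulation against a reference $\pi_0$). The two packagings are interchangeable here.

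A couple of the ``subtleties'' you flag at the end are non-issues and can be dropped. The side-channel worry about Merlin is moot: in your embedding every \pmax-player behaves identically during the proposal/vote phase (propose $M$, vote \textsc{Yes} on $M$ and \textsc{No} otherwise), so the public vote pattern cannot distinguish Merlin from any other \pmax-player. Likewise, you do not need a careful count of consecutive \pmin seats in the rotation; it suffices that among the five proposal slots at most $k\le 2$ leaders are \pmin, so some \pmax leader appears no later than slot~3, and that leader's proposal of $M$ carries by the strict \pmax majority before the auto-accept rule is ever triggered. This is exactly the paper's one-line justification (``the number of mission proposals and the number of {\em Resistance} players both exceed the number of {\em Spies}'').
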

\begin{proof}[Proof Sketch]
Make the following restrictions to strategy spaces.
     \begin{itemize}
         \item The mediator picks a single mission proposal in each round, and recommends that proposal and that everyone vote in favor of it, until it is approved, and ignores any information gained during this phase.
         \item {\em Spies} follow the above recommendations, and ignore all information except the eventually-approved mission.
     \end{itemize}
     Consider any TMECor of the restricted game. We claim that it also must be a TMECor of the full game:
     \begin{itemize}
         \item The mediator cannot do better, because {\em Spies} are following all recommendations and ignoring any information gained except the eventual actual mission.
         \item {\em Spies} cannot do better, because deviations are ignored by the mediator and will ultimately not affect the approved mission. (because the number of mission proposals and the number of {\em Resistance} players both exceed the number of {\em Spies}). 
     \end{itemize}
     This completes the proof.
\end{proof}
But, having restricted the strategy spaces in the above manner, $\tilde\Gamma_0$ and $\Gamma_0$ become identical, because there is no other source of asymmetric information. Therefore, ${\sf TMEVal}(\tilde \Gamma_0) = {\sf Val}(\Gamma_0)$, and we are done.

\subsection{Abstractions}\label{sec:avalon:abstractions}

In this section we describe the abstractions employed on the original game in order to transforming into a smaller version, which can then be solved tabularly. The abstractions employed are lossless, in the sense that they do not change the value of the game, and that any Nash equilibrium in the abstracted game corresponds to a Nash equilibrium in the original one.

In particular, our main interest in the experiments is to compute one strategy belonging to a Nash equilibria for each player. This allows us to iteratively remove many actions without loss of generality.

Note that throughout this section, we interchangeably use the standard names {\em Resistance} and {\em Spies} to refer to the teams instead of \pmax and \pmin.

We apply \Cref{th:revelation principle,cor:zero-sum-efg,thm:main} to an Avalon game instance in order to compute a hidden-communication equilibria. In the first communication round, anybody can claim to the mediator to be a specific role and to know specific knowledge about other player's roles. (For example, a \pmin-player, that is, a {\em Spy}, could claim to be Merlin and lie about who is on team \pmax.) Thanks to the revelation principle (\Cref{th:revelation principle}), we can assume that each player in the \emph{Resistance} team truthfully communicates their information, while each \emph{Spies} player may decide correlate with the others to emit a specific fake claim. During the game, the mediator recommends an action to each player anytime it is their move. In this case, the revelation principle allows to assume \wlo that \emph{Resistance} players always obey the recommendations received from the mediator, while \emph{Spies} can decide to deviate. Overall we obtain a two-player zero-sum game in which the \emph{Resistance} team is represented by a mediator and the \emph{Spies} team is a single player.

In the following, we list the abstractions we applied to our Avalon instances, and for each we sketch a proof of correctness. Each proof will have the same structure: we will impose a restriction on the strategy spaces of both teams, resulting in a smaller zero-sum game; we will then show that any equilibrium of this smaller zero-sum game cannot have a profitable deviation in the full game.

 \begin{enumerate}
     \item {\em The recommendations emitted by the mediator to the player in charge of the mission proposal should never be rejected. That is, the mediator has unilateral power to dictate who goes on missions.}

     This is \Cref{lem:unilateral proposals}.

     \item {\em Call the following information {\em common knowledge}:
     \begin{itemize}
         \item The sets of good players claimed by players claiming to be Merlin, but {\em not} the identities of the claimants themselves (because \pmin does not know the true Merlin)
        \item Mission proposals and results
         \item Any claimant whose claim is disproven by common-knowledge information is a {\em Spy}
     \end{itemize}
     
     Call a subset $S \subseteq [n]$ {\em plausible} if, based on common-knowledge information, it is possible that $S$ contains only good players. Every mission proposal should be plausible.
     }

    \textit{Proof sketch.} Make the following restrictions to strategy spaces.
    \begin{itemize}
        \item Every mission proposal by the mediator is plausible.
        \item If the mediator proposes an implausible set, the {\em Spies} pick a plausible set at random, announce it publicly, and act as if the mediator proposed that set instead.
     \end{itemize}
    Consider any Nash equilibrium of the restricted game. We claim that it also must be a Nash equilibrium of the full game. 
    \begin{itemize}
        \item The mediator cannot improve, because {\em Spies} will pretend that the mediator played a plausible set anyway, and the mediator can simulate how the spies will do this.
        \item {\em Spies} cannot improve, because against a mediator who always proposes a plausible set, the strategy space of the spies is not limited.
    \end{itemize}

     \item {\em If any player is contained in {\em every} plausible set $S$, then that player should always be included on missions if possible. We call such players {\em safe}.}

     {\em Proof sketch.} Make the following restrictions  to strategy spaces.
     \begin{itemize}
                 \item The mediator always sends all {\em safe} players on missions, if possible.
                 \item {\em Spies} pretend that all {\em safe} players are always sent on missions. More precisely, if $i$ is safe, not sent on a mission, and at least one unsafe player $j$ is sent on that mission, {\em Spies} pick such a pair $(i, j)$ at random, announce both $i$ and $j$ publicly, and pretend that $i$ was sent in place of $j$. (If there are multiple such replacements that can be done, this process is simply repeated.)
     \end{itemize}
     Consider any Nash equilibrium of the restricted game. We claim that it also must be a Nash equilibrium of the full game. 
     \begin{itemize}
         \item The mediator cannot improve, because {\em spies} will pretend that all safe players are always sent regardless of how the mediator actually proposes missions, and the mediator can simulate how {\em spies} will perform the replacement.
         \item {\em Spies} cannot improve, because against a mediator who always sends all safe players on missions, the strategy space of the spies is not limited.
     \end{itemize}

     \item {\em If a mission of size $s$ passes, all missions will pass until the next mission whose size is $> s$.} (In particular, any maximum-size mission passing implies that all remaining missions pass.)

     {\em Proof sketch.} Suppose that a mission $M_0$ of size $s_0$ has passed, and suppose the next $t$ missions have sizes $s_1, \dots, s_t \le s_0$. Make the following restrictions to strategy spaces.
     \begin{itemize}
                 \item The mediator randomly\footnote{The fact that this is {\em random} instead of dictated by the mediator allows us to avoid issues of imperfect recall.} selects subsets $M'_i \subseteq M_0$ to send on missions $s_i$ for $i = 1, \dots, t$. If any of them (say, $M_i$) fails, the mediator pretends that $M_0$ failed instead. Then, the mediator generates and publicly announces the missions $M_1', \dots, M_i'$ that it {\em would have} proposed had $M_0$ failed, assuming that each $M_j'$ passes. The mediator then pretends that that is what happened. The mediator does not use the $M_i$s to inform future decisions.
                 \item For each $i = 1, \dots, t$, {\em Spies} automatically pass $M_i$ and ignore what missions are proposed by the mediator.
     \end{itemize}
     Consider any Nash equilibrium of the restricted game. We claim that it also must be a Nash equilibrium of the full game. 
    \begin{itemize}
        \item The mediator has no incentive to propose a mission different from a random $M_i \subseteq M_0$ in the unrestricted game, because against spies who ignore and always pass missions $M_1, \dots, t$, any mediator mission would lead to the same outcome of having a mission passing, no change in behavior from {\em Spies}, and no information gained.
        \item The \emph{Spies} have no incentive to fail mission $M_i$ for $i > 0$ if they pass mission $M_0$. This is true because the mediator strategy is such that \emph{Spies} passing missions $M_0, \dots, M_{i-1}$ and failing mission $M_i$ would be equivalent to \emph{Spies} failing $M_0$ and passing missions $M'_1, \dots M'_i$.
    \end{itemize}
 \end{enumerate}

     \citet{Christiano18:Solving} used a very strong reduction that allowed the assumption that the smallest two missions are guaranteed to pass. The reduction is sound in the base game {\em Resistance}, as well as when the mediator's strategy is manually constrained to {\em never} reveal information about Merlin, as is the case in Christiano's analysis. It turns out, however, that this reduction is {\em not} sound in general, because its proof assumes that all mission proposals---even those that occur after {\em Resistance} has already reached the required number of missions---are public information. However, in our setting, it is possible that each mission proposal leaks more and more information about Merlin, so there is incentive to reveal as few missions as possible. Indeed, in the 6-player variants with Mordred, we observe that removing the two smallest missions  causes a small but nonzero change in the game value, whether or not future mission proposals are published. However, a more refined analysis can be used to partially recover a reduction in the number of missions. \citet{Christiano19:Simplifying} described the correct reduction in the 6-player case, but without proof. Here, we formalize the argument.

In 5-player {\em Avalon}, the mission sizes are 2, 3, 2, 3, 3 in that order.
\begin{proposition}
    In 5-player Avalon with private communication, the hidden-role value is the same if the two missions of size 2 are automatically passed (hence skipped).
\end{proposition}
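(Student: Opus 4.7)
My plan is to follow the same template as the four reductions already established in \Cref{sec:avalon:abstractions}, in particular reduction~4: impose compatible restrictions on both teams' strategies and verify that neither side has a profitable unilateral deviation. Specifically, I would consider a restricted game in which (i)~the \pmin players commit to passing missions~1 and~2 unconditionally, and (ii)~the mediator's proposals for missions~1 and~2 are made by a fixed, claim-independent rule, so that those public proposals leak no information about Merlin's private claim. Under these two restrictions, missions~1 and~2 deterministically pass with no side information exchanged, which makes the restricted game identical to the game in which missions~1 and~2 are skipped.

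The mediator's no-deviation argument is quick. Since the \pmin players pass missions~1--2 regardless of what the mediator proposes, the mediator cannot influence those outcomes through the proposal, and so any adaptive (claim-dependent) proposal can only publicly signal a function of Merlin's input that is observable to \pmin (for instance, useful in the end-of-game Merlin guess) and never benefits \pmax. Hence the restricted mediator strategy is weakly optimal against the restricted \pmin strategy.

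The \pmin no-deviation argument is the crux and the main obstacle. Suppose \pmin deviates by failing mission~1 with players $\{P,Q\}$ on it (the case of mission~2 is symmetric). The deviation banks one ``free'' failure but publicly reveals that at least one of $P,Q$ is a \pmin player. To show the deviation is not profitable, I would augment the restricted mediator strategy with an explicit contingency plan: if mission~1 fails with players $\{P,Q\}$, then the mediator proposes the three players outside $\{P,Q\}$ for mission~3. Because there are only two \pmin players total and at least one lies in $\{P,Q\}$, this mission~3 proposal contains at most one \pmin player. Combining this with reduction~4 (once mission~3 passes, missions~4 and~5 automatically pass) bounds the \pmin win probability along the failure branch.

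The delicate step is showing that the \pmin expected gain from the banked failure is \emph{exactly} offset by three factors: (a)~the forced mission~3 composition restricted to off-$\{P,Q\}$ players, (b)~the public loss of \pmin anonymity after mission~1, and (c)~the induced change in the Merlin-guess game. I expect this to require a case analysis on whether $\{P,Q\}$ contains one or two \pmin players, together with a counting argument that equates the \pmin value along the deviation branch with the restricted-game value. A sanity check I would verify is that the contingency mission~3 depends only on the already-public set $\{P,Q\}$ and not on Merlin's private claim, so no extra Merlin information leaks as a side effect; this is what specifically makes the argument go through with 5 players (and, as the paper notes, is what fails when Mordred is introduced at 6 players).
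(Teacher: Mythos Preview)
Your proposal has a genuine gap and misses the key idea of the paper's proof.

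The contingency you describe is incomplete. You only specify mission~3 in the branch where mission~1 fails; you say nothing about missions~4 and~5 in that branch, and nothing about the branch where \emph{both} size-2 missions are failed (which your ``symmetric'' remark does not resolve, since both contingencies would then compete for mission~3). Moreover, your mission-3 contingency guarantees only that the complement of $\{P,Q\}$ contains \emph{at most one} spy; it may contain exactly one, in which case mission~3 can still fail, and reduction~4 does not trigger. At that point the mediator must make claim-dependent choices for missions~4 and~5, so the Merlin-information comparison you flag as a ``sanity check'' is no longer available. The ``exact offset'' you hope to establish by case analysis is therefore not set up, and there is no reason to expect it to hold for an arbitrary restricted-game optimum glued to your contingency.

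The paper's proof avoids all of this with a coupling idea you do not use: the mediator privately commits to three size-3 sets $M_1,M_2,M_3$ and, throughout the whole game, always proposes $M_j$ (or a subset of it when the current mission is size~2) until a proposal fails, then advances to $M_{j+1}$. With this construction the two size-2 missions are just subsets of the size-3 missions the mediator was going to play anyway. Hence a deviation that fails a size-2 mission can only ``burn'' an $M_j$ that contains a spy---exactly the same $M_j$ that would have failed later---so the event ``three failures'' is unchanged. And because the mediator reveals the $M_j$ in fixed order regardless of when failures occur, the Spies' information for the Merlin guess is bounded by $\{M_1,\dots,M_i\}$, where $i$ is the first good set, in both the restricted and deviating plays. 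No case analysis or offset computation is required; the restriction is harmless by construction.
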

\begin{proof}[Proof Sketch]
    Make the following restrictions to strategy spaces.
    \begin{itemize}
        \item The mediator privately picks three missions $M_1, M_2, M_3$ of size $3$. Then the mediator proposes $M_1$ (or a random subset thereof) until it fails, then $M_2$ or a subset thereof until that fails, and finally $M_3$ until that fails.
        \item {\em Spies} ignore and automatically pass the two smallest mission proposals, and fail the three largest missions if possible.
    \end{itemize}
    Consider any Nash equilibrium of the restricted game. We claim that it also must be a Nash equilibrium of the full game.
    \begin{itemize}
        \item The mediator has no incentive to behave differently, because {\em Spies} are ignoring the two smallest missions regardless.
        \item {\em Spies} win if either three missions are failed, or they guess Merlin. But if at least one mediator guess is the true set of good players, only two missions can fail, and so {\em Spies} cannot increase their probability of failing three missions. As for guessing Merlin, {\em Spies} cannot increase the amount of information gained. Let $i$ be the index such that $M_i$ is the true good set, and $i=4$ if all mediator guesses were wrong. Spies are currently gaining the information $M_{\le i}$, and Spies cannot gain more information from deviating because the mediator always plays the missions in order.
    \end{itemize}
    This completes the proof.
\end{proof}

In six-player Avalon, the mission sizes are 2, 3, 4, 3, 4. The problem in adapting the analysis of the previous result is that the final size-3 mission comes {\em between} the size-4 missions, preventing its removal. 
\begin{proposition}
    In 6-player Avalon  with private communication, the hidden-role value is the same if the mission of size 2 is automatically passed (hence skipped).
\end{proposition}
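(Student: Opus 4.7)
The plan mirrors the proof of the 5-player analogue. Restrict the strategy spaces as follows: (i) the mediator commits to a sequence of size-4 ``Resistance guesses'' $M_1, M_2, \dots$ and maintains an index $j$, initially $1$; for each mission of size $s_t$ she proposes a size-$s_t$ subset of $M_j$ (in particular, a 2-subset of $M_1$ for mission 1), and increments $j$ upon every mission failure. (ii) The \emph{Spies} automatically pass mission 1 and play ``fail if possible'' on missions 2--5. We then argue that any Nash equilibrium of this restricted game is also a Nash equilibrium of the full game; since in the restricted game mission 1 is a guaranteed pass revealing no information, it can be deleted from the game tree without changing the value, yielding the proposition.

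Mediator deviations are immediate: since the \emph{Spies} always pass mission 1, its outcome is deterministically a pass carrying zero information, so the mediator has nothing to gain from varying her mission 1 proposal (or conditioning her later play on it). The substantive check is that \emph{Spies} gain nothing by failing mission 1, which is their only nontrivial deviation. This deviation is possible only when a \emph{Spy} sits on the 2-subset proposed by the mediator---equivalently, only when the first guess $M_1$ is wrong. Letting $i^{\star}$ denote the smallest index with $M_{i^{\star}}$ equal to the true Resistance team (or $\infty$ if no guess is correct), once $j$ reaches $i^{\star}$ all subsequent mediator proposals lie inside the true Resistance team and automatically pass. Hence the total number of mission failures the \emph{Spies} can force equals exactly $i^{\star} - 1$ when $i^{\star} < \infty$, regardless of the mission 1 choice: failing mission 1 advances $j$ from $1$ to $2$ one step earlier and shortens the path to $M_{i^{\star}}$ by exactly one mission, precisely cancelling the extra failure the \emph{Spies} gained on mission 1. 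When $i^{\star} = \infty$, both strategies yield at least $3$ failures and a \emph{Spy} win. So the event ``win by 3 failures'' has the same probability under either strategy.

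The main obstacle will be formalizing this offset argument via an explicit coupling between the deviated and non-deviated \emph{Spies} trajectories, which must handle the randomness of mediator subset selection and verify that, within the five-mission horizon, the mediator has enough guesses to reach $M_{i^{\star}}$ in each branch. For the Merlin-guessing win channel, the intended argument is that the \emph{Spies}' observations along either trajectory reduce to the same underlying data---the sequence of tried guesses $(M_1, \dots, M_{i^{\star}})$ together with their pass/fail outcomes---so the posterior over Merlin's identity is unchanged by the deviation. This step relies on taking the mediator's subset-selection randomness to be independent of Merlin conditional on the current guess $M_j$, which is without loss of generality in the restricted NE and parallels the 5-player analysis.
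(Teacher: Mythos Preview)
Your plan tries to mirror the 5-player argument, but the six-player mission profile $(2,3,4,3,4)$ breaks the two claims you rely on.

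First, the assertion that the Spies force exactly $i^\star-1$ failures regardless of the mission-1 choice is false under your mediator restriction. When $M_j$ is wrong but contains only one Spy, the random size-$3$ subset you propose on a size-$3$ mission misses that Spy with probability $1/4$, producing a pass that does \emph{not} increment $j$. Concretely, with $i^\star=3$ and ``pass m1'', the branch (m2 $\subset M_1$ passes, m3 $=M_1$ fails, m4 $\subset M_2$ passes) gives only one failure and a Resistance win at m4. In the 5-player argument this could not happen because after removing the two small missions every remaining mission has the same size as the guess, so no random proper subset is ever taken. Here the interleaved size-3 missions make failures a genuinely random quantity in $i^\star$, and your offset cancellation collapses. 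The same phenomenon invalidates the information claim: on a ``fail m1'' trajectory the Spies see only a 2-subset of $M_1$ and then move on to subsets of $M_2$, whereas on ``pass m1'' they see further subsets and eventually all of $M_1$ before $j$ advances. The two trajectories do \emph{not} expose ``the same underlying data $(M_1,\dots,M_{i^\star})$''. Finally, you never argue that the size-4-guess-with-random-subset form is without loss for the mediator against ``fail if possible'' Spies; in six players it is a genuine restriction, so even if the m1-deviation analysis went through you would not yet have a Nash equilibrium of the full game.

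The paper avoids all of this by taking a different route. Instead of restricting both players, it takes an \emph{arbitrary} mediator strategy $x$ for the reduced game and builds an explicit extension $x'$ for the full game that runs a simulator of $x$: for m1 it proposes a random 2-subset of $x$'s mission-2 proposal, and if m1 fails it lies to the simulator that mission 2 failed and shifts everything forward (with a further, more intricate shift if m2 also fails). The analysis is then a case split showing that every Spy line that fails m1 is weakly dominated by an alternative line that passes m1 and fails a later mission instead: the construction is arranged so that the alternative induces identical subsequent mediator behaviour but reveals to the Spies a full mission set where the original line revealed only a random subset. This domination-by-information argument is what replaces your offset counting, and it requires the specific simulator construction rather than a fixed guess sequence.
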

\begin{proof}[Proof Sketch]
    Consider any strategy $x$ for the mediator in the restricted game in which the first mission is ignored. We will extend $x$ to a full-game strategy $x'$. Strategy $x'$ works as follows. The mediator maintains a simulator of strategy $x$.
    \begin{enumerate}
        \item The mediator queries the $x$-simulator for its second mission $M_2$ (of size 3), and plays a random size-2 subset of $M_2$. If this mission passes, the mediator plays $M_2$ itself on the second mission, then plays the rest of the game according to $x$.
        \item If the first mission fails, the mediator tells the $x$-simulator that $M_2$ failed. The $x$-simulator will then output its third mission $M_3$ (of size 3). The mediator picks a random size-3 subset of $M_3$ to send on the second mission. If this mission passes, the mediator plays $M_3$ again on the third mission, and once again plays the rest of the game according to $x$.
        \item If the second mission fails, the mediator tells the $x$-simulator that $M_3$ failed. The $x$-simulator will then output its fourth mission $M_4$ (of size 3). The mediator immediately tells the $x$-simulator that $M_4$ passed. The $x$-simulator will then output its fifth mission $M_5$, of size $4$. The mediator plays random subsets of $M_5$ until the game ends.
    \end{enumerate}
    We claim that, against this mediator, the adversary cannot do better by failing the first mission than by passing it. 

    \begin{itemize}
        \item If the adversary fails the first mission and passes the second, then the adversary will have, in the first three missions, observed a random subset of $M_2$ and the whole set $M_3$. But the adversary could have accomplished more by passing the first mission and failing $M_2$---the behavior of the mediator would be the same in both cases (by construction of the mediator), and the adversary would observe the whole set $M_2$ instead of merely a random subset.
        \item If the adversary fails the first two missions, the adversary (regardless of what else happens in the game) will have observed a random subset of $M_2$, a random subset of $M_3$, and $M_5$ by the end of the game. Further, three missions will pass if and only if $M_5$ is the true good set. The adversary, however, could have accomplished more by passing the first mission, failing the second and third ($M_2$ and $M_3$), and passing the fourth---the adversary induces the same outcome in the game, but instead observes all four missions $M_2, M_3, M_4, M_5$. 
    \end{itemize}
    Thus, the adversary is always better off failing the first mission, and therefore $x$ and $x'$ have the same value. 
\end{proof}

\subsection{A Description of {\em Avalon} in Reduced Representation}
The optimized \emph{Avalon} game instances can be succinctly represented as follows:
\begin{enumerate}
    \item At the start of the game, Chance player deals the roles.
    \item \textit{[if Merlin is present in the game]} Merlin reports who the {\em Spies} are (except {\em Mordred}). Spies may also (falsely) do the same.
    \item \textit{[if Percival is present in the game]} Percival reports who Merlin is. {\em Spies} may also falsely do the same.
    \item Until three missions are passed, the mediator sends a mission, and the \emph{Spies} are offered the option to fail it if any of them is on it. The space of possible missions to be proposed is constrained according to the previous optimizations.
    \item If three missions are failed, the game ends with a win for the \emph{Spies}.
    \item If three missions are passed and Merlin is present, the \emph{Spies} have to guess the player with the Merlin role. If they guess correctly, they win, otherwise they lose. If Merlin is absent, the game ends with a win for the \emph{Resistance}.
\end{enumerate}

\begin{table*}[t]
    \centering\scalebox{0.8}{
    \begin{tabular}{l|rrr|rrr}
       & \multicolumn{3}{c|}{\bf 5 Players} & \multicolumn{3}{c}{\bf 6 Players} \\
         \bf Variant & \multicolumn{1}{c}{$|Z|$} & \multicolumn{1}{c}{RW} & \multicolumn{1}{c|}{MG} & \multicolumn{1}{c}{$|Z|$} & \multicolumn{1}{c}{RW} & \multicolumn{1}{c}{MG} \\\toprule
         No special roles ({\em Resistance}) & $5.9 \times 10^3$ & $0.3000$ & {n/a} & $1.4 \times 10^6$ & $0.3333$ & {n/a} \\
         Merlin only & $1.5\times 10^4$ & $1.0000$ & $0.3333$ & $6.0 \times 10^5$ & $1.0000$ & $0.2500$ \\
         Merlin + Mordred & $4.9 \times 10^5$ & $0.6691$ & $0.3869$ & $1.8 \times 10^8$ & $0.7529$ & $0.3046$\\
         Merlin + 2 Mordreds & $9.0 \times 10^4$ & $0.5000$ & $0.4444$ & $2.8 \times 10^7$ & $0.4088$ & $0.2886$ \\
         Merlin+Mordred+Percival+Morgana & $2.4 \times 10^6$ & $0.9046$ & $0.3829$ & unknown & unknown & unknown  \\\bottomrule
    \end{tabular}
    }
    \caption{Breakdown of equilibrium outcome probabilities for each variant of {\em Avalon}. $|Z|$: number of terminal nodes in the reduced game tree. `RW': probability of Resistance passing three missions. `MG': probability of  Spies guessing Merlin correctly, conditioned on Resistance passing three missions. (The game value, which appears in \Cref{tab:experiments}, is $\text{RW} \cdot (1 - \text{MG})$). 
    }
    \label{tab:experiments-appendix}
\end{table*}

\subsection{Example of Optimal Play in {\em Avalon}}\label{sec:appendix-5p2m-eqm}

In this section, we fully describe one of the equilibria claimed in \Cref{tab:experiments}: the case of 5 players with Merlin and both \pmin players ``Mordred'' (hidden from Merlin). We choose this version because it is by far the easiest to describe the equilibrium: the other new values listed in \Cref{tab:experiments} are very highly mixed and difficult to explain.

Similarly to what happens in the example provided in \Cref{sec:example-intro}, this case is different from the case with no Merlin (even if Merlin does not have any useful knowledge), due to the effect of added correlation. In particular, if Merlin were not known to the mediator, then the equilibrium value would be $3/10 \cdot 2/3 = 2/10$: $3/10$ from the value of pure {\em Resistance} (no Merlin), and the factor of $2/3$ from a blind Merlin guess.

By the discussion above, it suffices to analyze the reduced game with three missions, each of size $3$, where \pmax need only win one mission to win the game.

The following strategy pair constitutes an equilibrium for this {\em Avalon} variant:

\paragraph{Strategy for \pmax} The mediator's strategy is different depending on how many players claim to it to be Merlin.

{\em One claim.} Randomly label the players A through E such that A is Merlin. Send missions ABC, ABD, and ABE.

There are six possible correct sets (ABC, ABD, ABE, ACD, ACE, ADE), and \pmax has three guesses, so \pmax wins the regular game with probability $1/2$. If the first mission succeeds, \pmin learns nothing about Merlin, so Merlin is guessed correctly with probability $1/3$. Otherwise, \pmin knows that Merlin is either A or B, so Merlin is guessed correctly with probability $1/2$. Thus, this strategy attains value $(1/6)(2/3) + (1/3) (1/2) = 5/18.$

{\em Two claims.} Randomly label the players A through E such that A and B are the Merlin claimers. Send missions ACD, BCE, and ADE.

{\em Three claims.} The non-claimants are guaranteed to be good. Pick a non-claimant at random, pretend that they are Merlin, and execute the strategy from the one-claim case. This does strictly better than the one-claim case, because \pmin learns only information about who is {\em not} Merlin instead of who {\em is} Merlin. (This strategy may not be subgame-perfect, but for the sake of this analysis it is enough for this strategy to achieve value {\em at least} $5/18$).

The analysis of the one-claim case applies nearly verbatim: there are six possible correct sets, and the second mission narrows down \pmin's list of possible Merlins from 3 to 2 (namely, A or D), so again the value for \pmin is $5/18$.

\paragraph{Strategy for \pmin}
Emulate the behavior of a non-Merlin \pmax-player (\ie, do not claim to be Merlin). 
\begin{enumerate}
    \item If the first guess by \pmax is correct, guess Merlin at random from the good players.
    \item If the second guess by \pmax is correct, guess Merlin at random from the good players included on both guesses.
    \item If the third guess by \pmax is correct, guess Merlin at random among all good players sent on at least two missions, weighting any player sent on all three missions double. (for example, if there was one player sent on all three missions and one player sent on two, guess the former with probability $2/3$ and the latter with probability $1/3$).
\end{enumerate} 
Against this strategy, the mediator's only decision is what three distinct missions to send. 

\begin{enumerate}
    \item If Merlin is sent on {\em one} mission, the probability of winning is at most $1/6$ because only that mission has a chance of being the correct one.
    \item If Merlin is sent on {\em two} missions, then  the probability of one mission passing is at most $1/3$. If the first mission passes, Merlin is guessed correctly with probability $1/3$. If the second mission passes, Merlin may not be guessed correctly at all (it is possible for Merlin to only have been sent on the second mission, but not the first). If the third mission passes, Merlin is guessed correctly with probability at least $1/5$. Thus, the probability of winning for good is at most $(1/3) (1-(1/3 + 0 + 1/5)/3) = 37/135 < 5/18.$
    \item If Merlin is sent on {\em all three} missions, the probability of a mission passing is at most $1/2$. If the first mission passes, Merlin is guessed correctly with probability $1/3$. Otherwise, Merlin is guessed correctly with probability at least $1/2$, since the only way to decrease this probability under $1/2$ would be for the last mission to pass {\em and} for another player to also have been sent on all three missions, but in that case it is impossible for anyone to have been sent on two missions. Therefore, the probability of winning for good is at most $(1/2) (1 - (1/3 + 1/2 + 1/2) / 3) = 5/18.$
\end{enumerate}

\end{document}